\newcommand{\ra}[1]{\renewcommand{\arraystretch}{#1}}
\newtheorem{example}{Example}
\newtheorem{remark}[theorem]{Remark}
\newcommand{\bb}{\mathbb}
\newcommand{\Z}{\bb{Z}}
\newcommand{\cA}{\mathcal{A}}
\newcommand{\cC}{\mathcal{C}}
\newcommand{\cH}{\mathcal{H}}
\newcommand{\cL}{\mathcal{L}}
\newcommand{\cQ}{\mathcal{Q}}
\newcommand{\cS}{\mathcal{S}}
\newcommand{\cY}{\mathcal{Y}}
\newcommand{\al}{\alpha}
\newcommand{\be}{\beta}
\newcommand{\de}{\delta}
\newcommand{\De}{\Delta}
\newcommand{\la}{\lambda}
\newcommand{\ep}{\epsilon}
\newcommand{\ups}{\upsilon}
\newcommand{\pr}{\prime}
\newcommand{\sm}{\setminus}
\newcommand{\lcm}{\text{lcm}}
\newcommand{\wt}{\text{wt}}
\newcommand{\Rad}{\text{Rad}\,}
\newcommand{\rank}{\text{rank}}
\newcommand{\ol}{\overline}
\newcommand{\lf}{\lfloor}
\newcommand{\rf}{\rfloor}
\newcommand{\Tr}{\text{Tr}}
\newcommand{\F}{\mathbb{F}}
\newcommand{\Fq}{\mathbb{F}_q}
\newcommand{\Fqm}{\mathbb{F}_{q^m}}
\newcommand{\Fqhm}{\mathbb{F}_{q^{\frac{m}{2}}}}
\newcommand{\Tqmq}{\Tr^{q^m}_{q}}
\newcommand{\Tqhmq}{\Tr^{q^{\frac{m}{2}}}_{q}}
\newcommand{\Sym}{\text{Sym}}
\newcommand{\Alt}{\text{Alt}}
\newcommand{\Qua}{\text{Qua}}
\newcommand{\PRM}{\text{PRM}}
\newcommand{\lhn}{\lfloor \frac{n}{2} \rfloor}
\newcommand{\lhm}{\lfloor \frac{m}{2} \rfloor}
\newcommand{\ltm}{\lfloor \frac{m}{3} \rfloor}
\newcommand{\hm}{\frac{m}{2}}
\newcommand{\hmpo}{\frac{m+1}{2}}
\newcommand{\hmpt}{\frac{m+2}{2}}
\newcommand{\hmmo}{\frac{m-1}{2}}
\newcommand{\hmmt}{\frac{m-2}{2}}
\newcommand{\code}{\cC_{(q,m,\de)}}
\newcommand{\gene}{g_{(q,m,\de)}(x)}
\newcommand{\codespe}{\cC_{(q,m,\de_i)}}
\numberwithin{theorem}{section}
\title{The Minimum Distance of Some Narrow-Sense Primitive BCH Codes}
\author{Shuxing Li%
  \thanks{Department of Mathematics, Simon Fraser University, 8888 University Drive, Burnaby BC V5A 1S6, Canada (\email{lsxlsxlsx1987@gmail.com}).}%
  }
\begin{document}
\maketitle

\begin{abstract}
Due to wide applications of BCH codes, the determination of their minimum distance is of great interest. However, this is a very challenging problem for which few theoretical results have been reported in the last four decades. Even for the narrow-sense primitive BCH codes, which form the most well-studied subclass of BCH codes, there are very few theoretical results on the minimum distance. In this paper, we present new results on the minimum distance of narrow-sense primitive BCH codes with special Bose distance. We prove that for a prime power $q$, the $q$-ary narrow-sense primitive BCH code with length $q^m-1$ and Bose distance $q^m-q^{m-1}-q^i-1$, where $\hmmt \le i \le m-\ltm-1$, has minimum distance $q^m-q^{m-1}-q^i-1$. This is achieved by employing the beautiful theory of sets of quadratic forms, symmetric bilinear forms and alternating bilinear forms over finite fields, which can be best described using the framework of association schemes.
\end{abstract}

\begin{keywords}
Alternating bilinear form, association scheme, BCH code, code and design in association scheme, inner distribution, minimum distance, narrow-sense primitive BCH code, punctured Reed-Muller code, punctured generalized Reed-Muller code, quadratic form, symmetric bilinear form, weight distribution
\end{keywords}

\begin{AMS}
05E30 94B15
\end{AMS}

\section{Introduction}

BCH codes are the most important class of cyclic codes. Due to their efficient encoding and decoding algorithms, BCH codes are widely used in error correction, communication and data storage. Despite the various applications of BCH codes, there remain many challenging problems in the theoretical aspects of BCH codes.

A fundamental theoretical problem of BCH codes is the determination of the basic parameters, including the dimension and the minimum distance. Although there have been a series of literature in this area \cite{AKS,ACS,AS,Ber67,Ber70,Ber15,CC,Char90,Char98,Co,Ding2,Ding1,Ding3,K1,K2,KL,KLP1,KT,MS,Mann,Pe,YF}, this problem is still wide open in general. In particular, the determination of minimum distance is very difficult and there are few known theoretical results \cite{Ber70,Ding2,Ding1,K2,KL,KLP1,Pe}. For numerical results on the minimum distance, please refer to \cite[Section 3.3]{Char98}.

Let $\Fqm$ be a finite field. Let $n$ be a divisor of $q^m-1$ and $\be$ be an element of $\Fqm$ with multiplicative order $n$. Roughly speaking, a BCH code is a cyclic code with length $n$, whose generator polynomial has a set of $\de-1$ consecutive roots $\be^b,\be^{b+1},\ldots,\be^{b+\de-2}$, where $b$ is a positive integer. By the BCH bound, the minimum distance of the BCH code is at least $\de$. Thus, we say this BCH code has {\it designed distance} $\de$. A BCH code is called {\it primitive}, if the length $n=q^m-1$. A BCH code is called {\it narrow-sense}, if $b=1$, i.e., the $\de-1$ consecutive roots start from $\be$. For a narrow-sense BCH code, its largest possible designed distance is called the {\it Bose distance} \cite[p. 281]{Ber15}, \cite[p. 205]{MS} and denoted by $d_B$. Narrow-sense primitive BCH codes form the most well-studied subclass of BCH codes. An attractive property of narrow-sense primitive BCH codes is that the Bose distance provides a generally good lower bound on the minimum distance. For instance, it is conjectured that for binary narrow-sense primitive BCH codes, the minimum distance $d$ satisfies $d_B \le d \le d_B+4$ \cite[p. 1011]{Char98}. In general, it is extremely hard to determine the minimum distance of BCH codes with arbitrary Bose distance. On the other hand, when the Bose distance has a certain special form, there are some known results on minimum distance.

%\begin{theorem}\label{thm-known}
%Let $m \ge 3$. A binary narrow-sense primitive BCH code with length $2^m-1$ and Bose distance $2^{m-1}-2^i-1$, where $\hmmt \le i \le m-\ltm-1$, has minimum distance $2^{m-1}-2^i-1$.
%\end{theorem}

\begin{theorem}\label{thm-known}
Let $q$ be a prime power. Given a $q$-ary narrow-sense primitive BCH code with length $q^m-1$ and Bose distance $d_B$, its minimum distance is known in the following cases:
\begin{itemize}
\item[1)] if $d_B=q^i-1$ with $1 \le i \le m-1$, then $d=q^i-1$ {\rm \cite[Theorem 5]{Pe}}
\item[2)] if $q=2$ and $d_B=2^{m-1}-2^i-1$ with $\hmmt \le i \le m-\ltm-1$, then $d=2^{m-1}-2^i-1$ {\rm \cite[Corollary]{Ber70}}
\item[3)] if $q=2$ and $d_B=2^{m-1-s}-2^{m-1-i-s}-1$ with $1 \le i \le m-s-2$ and $0 \le s \le m-2i$, then $d=2^{m-1-s}-2^{m-1-i-s}-1$ {\rm \cite[Corollary 1]{KL}}
\end{itemize}
\end{theorem}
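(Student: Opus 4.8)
Put $\ell=m-1-i$, so the hypothesis $\hmmt\le i\le m-\ltm-1$ reads $\ltm\le\ell\le\lhm$; write $d_B=q^m-q^{m-1}-q^i-1$ and let $\cC=\cC_{(q,m,d_B)}$ be the code in question. First I would record that $d_B$ really is the Bose distance. Identifying the coordinate place $x\in\{0,\ldots,q^m-2\}$ with $\al^x$ for a primitive $\al\in\Fqm$, the defining set of $\cC$ is $\bigcup_{j=1}^{d_B-1}C_j$ with $C_j$ the $q$-cyclotomic coset of $j$; an inspection of the $q$-ary digit vectors shows that $d_B$ is the leader of its coset — its two ``deficient'' digits $q-2$ sit at the places $m-1$ and $i$, which are the two highest possible ones precisely because $\ell\le\lhm\le i+1$ — so $d_B\notin\bigcup_{j<d_B}C_j$ while $\{1,\ldots,d_B-1\}$ plainly is. Hence $\cC$ has Bose distance $d_B$ and, by the BCH bound, $d(\cC)\ge d_B$, so it suffices to exhibit one codeword of weight at most $d_B$.

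The candidate is $\mathbf c=(f(\al^x))_{x=0}^{q^m-2}$, where $f(y)=Q(y)-b$ with $b\in\Fq^{*}$ and $Q\colon\Fqm\to\Fq$ a quadratic form (I regard $\Fqm$ as $\Fq^m$) of rank exactly $2\ell$ and of the ``elliptic'' type, i.e.\ the type for which $\#\{y\in\Fqm:Q(y)=b'\}=q^{m-1}+q^{m-1-\ell}$ for every $b'\in\Fq^{*}$; for $q=2$ this is exactly Berlekamp's construction, the second case of Theorem~\ref{thm-known}. Two points must be verified. \emph{(i) Weight.} By the classical zero-count formulas for quadratic forms over $\Fq$ — equivalently, the known parameters of the scheme $\Qua(m,q)$ — the set $\{y\in\Fqm:f(y)=0\}=\{y:Q(y)=b\}$ has size $q^{m-1}+q^{m-1-\ell}=q^{m-1}+q^i$, and it avoids $0$ because $Q(0)=0\ne b$; hence $\wt(\mathbf c)=(q^m-1)-(q^{m-1}+q^i)=q^m-q^{m-1}-q^i-1=d_B$. \emph{(ii) Membership.} Writing $f(y)=\sum_e\gamma_ey^e$ with all exponents $e$ of $q$-weight $\le2$, a direct character-sum computation using $\sum_{y\in\Fqm^{*}}y^{N}=-1$ when $(q^m-1)\mid N$ and $0$ otherwise shows that $\mathbf c\in\cC$ precisely when $\gamma_e=0$ for every $e$ with $q^m-1-e$ lying in $\bigcup_{j<d_B}C_j$; translating to $q$-ary digits, the forbidden weight-$2$ exponents turn out to be exactly the monomials $y^{q^{a}+q^{b}}$ with $\min(|a-b|,m-|a-b|)<\ell$ (together with $y^{2q^{a}}$ when $q$ is odd). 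So, in the canonical representation $Q(y)=\Tqmq\bigl(\sum_k a_ky^{q^{k}+1}\bigr)$, membership is equivalent to $a_k=0$ for all $k<\ell$, i.e.\ to $Q$ being built only from the Gold exponents $q^{k}+1$ with $\ell\le k\le\lhm$.

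Everything thus reduces to the key existence statement, which I expect to be the main obstacle: \emph{for $\ltm\le\ell\le\lhm$ there is a quadratic form $Q(y)=\Tqmq\bigl(\sum_{k=\ell}^{\lhm}a_ky^{q^{k}+1}\bigr)$ on $\Fqm$ of rank exactly $2\ell$ and of elliptic type.} Its polar form is $B(y,z)=\Tqmq(D(y)z)$ with $D(y)=\sum_{k=\ell}^{\lhm}\bigl(a_ky^{q^{k}}+a_k^{q^{-k}}y^{q^{-k}}\bigr)$, a self-adjoint linearized polynomial (an alternating one when $q$ is even) whose Frobenius exponents are confined to $\{\ell,\ell+1,\ldots,m-\ell\}$; writing $D=\var^{\ell}\circ M$ with $\var$ the Frobenius, the rank condition amounts to $M$ having $\var$-degree exactly $m-2\ell$, being separable, and splitting completely over $\Fqm$ (so that $\dim_{\Fq}\Ker D=\dim_{\Fq}\Ker M=m-2\ell$), while the confinement imposes the symmetry $\beta_c=\beta_{m-2\ell-c}^{q^{\ell+c}}$ on the coefficients $\beta_c$ of $M$, and the elliptic type is one more constraint. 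I would build such an $M$ from a suitable $(m-2\ell)$-dimensional $\Fq$-subspace $V\subseteq\Fqm$ — taking $M$ to be a scalar multiple of $\prod_{v\in V}(y-v)$, which automatically splits with kernel $V$ — and then the symmetry together with the type requirement dictates which $V$ and which scalar work. The two inequalities on $\ell$ are precisely what makes this solvable: $\ell\le\lhm$ is needed merely so that a rank-$2\ell$ form exists at all (i.e.\ $2\ell\le m$), whereas the more delicate $\ell\ge\ltm$ forces $\dim V=m-2\ell$ small enough that the confinement symmetry can coexist with complete splitting and with the prescribed type. The bookkeeping for this last point — counting, for each rank and type, how many confined forms there are — is cleanest in Delsarte's framework of codes and designs in the schemes $\Sym(m,q)$, $\Alt(m,q)$ and $\Qua(m,q)$, whose inner distributions supply exactly the needed enumeration.
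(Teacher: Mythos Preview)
A framing remark first: Theorem~\ref{thm-known} is not proved in the paper---it is a survey of three cited results---and your proposal addresses only (a $q$-ary generalisation of) part~2), which is precisely the paper's main Theorem~\ref{thm-main}; parts~1) and~3) are not touched. With that understood, your outline is the paper's own strategy for Theorem~\ref{thm-main}: verify that $d_B$ is a coset leader, decompose $\cC$ as a union of cosets $(Q(x))_{x}+\PRM_q(1,m)$ over quadratic forms built from Gold exponents $q^k+1$ with $\ell\le k\le\lhm$, and use the inner-distribution machinery of $\Sym/\Alt/\Qua$ to guarantee that a suitable $Q$ exists.

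There is a genuine gap, however: your existence target---a form of rank exactly $2\ell$ and elliptic type in the allowed set---is in general false. When $q$ is odd and $m$ is odd, Proposition~\ref{prop-S}(1) shows that $\cS_1$ (hence $\cQ_1$) is a $(2m-2i-1)$-code in $\Sym(m,q)$: every nonzero form in the allowed set has rank at least $2\ell+1$, so \emph{no} rank-$2\ell$ form is available at all. The weight-$d_B$ codeword there comes from a form of odd rank $2\ell+1$. Likewise for $q$ even, the association-scheme argument (Proposition~\ref{prop-A} together with the proof of Theorem~\ref{thm-maineven}) only yields $b_{2\ell}>0$ for the associated alternating forms; after ruling out type~$0$ via the BCH bound, one concludes merely that either a rank-$2\ell$ type-$2$ form or a rank-$(2\ell{+}1)$ type-$1$ form exists---not specifically the former. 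In the type-$1$ case your candidate $f=Q-b$ has $N(f)=q^{m-1}$ and hence weight $q^m-q^{m-1}-1\ne d_B$; a nonzero linear term is required. The repair is not to insist on a rank-$2\ell$ elliptic form, but to work with the full coset $(Q(x))_x+\PRM_q(1,m)$ and argue by cases on the minimum rank and type that the scheme theory forces, exactly as the paper does in Theorems~\ref{thm-mainodd} and~\ref{thm-maineven}.
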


We remark that the above results can be dated back to four decades ago. To our best knowledge, very few results concerning the minimum distance of BCH codes are obtained since then. Very recently, Ding, Du, and Zhou studied $q$-ary narrow-sense primitive BCH codes of length $q^m-1$ with designed distance $q^i+1$ \cite{Ding1}. They derived the Bose distances of these BCH codes and showed that the minimum distances coincide with the Bose distances, when $(m-i) \mid i$ or $2i \mid m$ \cite[Theorem 13]{Ding1}. Ding studied $q$-ary narrow-sense primitive BCH codes of length $q^m-1$, whose Bose distances have the form $(q-\ell_0)q^{m-\ell_1-1}-1$, where $0 \le \ell_0 \le q-2$ and $0 \le \ell_1 \le m-1$ \cite{Ding2}. He derived that the minimum distances of these BCH codes coincide with the Bose distances \cite[Theorem 10]{Ding2}.

We note that 2) of Theorem~\ref{thm-known} follows from the classical results on the weight distribution of subcodes of the second order Reed-Muller codes due to Berlekamp \cite[Corollary]{Ber70} (see also Kasami \cite{K2}). A major motivation of this paper is to generalize 2) of Theorem~\ref{thm-known} to $q$-ary narrow-sense primitive BCH codes for an arbitrary prime power $q$. As a consequence, we have the following main theorem.

\begin{theorem}\label{thm-main}
Let $q$ be a prime power and $m$ be a positive integer satisfying
$$
\begin{cases}
  m \ge 3 & \mbox{if $q=2$}, \\
  m \ge 2 & \mbox{if $q=3$}, \\
  m \ge 1 & \mbox{if $q \ge 4$}.
\end{cases}
$$
For nonnegative integer $i$ with $\frac{m-2}{2} \le i \le m-\ltm-1$, the $q$-ary narrow-sense primitive BCH code with length $q^m-1$ and Bose distance $q^m-q^{m-1}-q^i-1$ has minimum distance $q^m-q^{m-1}-q^i-1$.
\end{theorem}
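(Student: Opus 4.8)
The plan is to combine the BCH bound with an explicit construction of a minimum-weight codeword coming from a quadratic form. Since $\code$ here denotes the narrow-sense primitive BCH code whose largest designed distance --- its Bose distance --- is $d_B=q^m-q^{m-1}-q^i-1$, the BCH bound already gives $d(\code)\ge d_B$, so it suffices to exhibit a single codeword of Hamming weight exactly $d_B$. I identify the coordinate set of $\code$ with $\Fqm^*$ in the usual way, so that a word of $\code$ is the evaluation vector $(f(x))_{x\in\Fqm^*}$ of an $\Fq$-valued function $f$ on $\Fqm$, and $\code$ consists precisely of those $f$ all of whose Mattson--Solomon exponents lie in $\{0,1,\dots,q^{m-1}+q^i\}$ (note $q^{m-1}+q^i=q^m-1-(d_B-1)$). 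Constant and $\Fq$-linearized terms are always admissible, while the quadratic term $\Tqmq(\al x^{q^j+q^k})$ is admissible exactly when the whole $q$-cyclotomic coset of $q^j+q^k$ stays below $q^{m-1}+q^i$; a short digit computation shows this is equivalent to the minimal cyclic distance between $j$ and $k$ modulo $m$ being at least $m-1-i$. The lower hypothesis $i\ge\frac{m-2}{2}$ is precisely what makes this range of gaps nonempty, and it is also precisely what makes $q^m-q^{m-1}-q^i-1$ a $q$-cyclotomic coset leader, so that it is genuinely realizable as a Bose distance.

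With this dictionary, the target is a quadratic form $Q$ on $\Fqm$ over $\Fq$ together with an optional $\Fq$-linear term $L$ such that (i) $Q$ is built only from admissible quadratic monomials $x^{q^j+q^k}$, and (ii) for some $a\in\Fq^*$ one has $\#\{x\in\Fqm:(Q+L)(x)=a\}=q^{m-1}+q^i$. Granting such a form, the codeword $c=\big((Q+L)(x)-a\big)_{x\in\Fqm^*}$ lies in $\code$, and since $(Q+L)(0)=0\ne a$ it has weight $(q^m-1)-(q^{m-1}+q^i)=q^m-q^{m-1}-q^i-1=d_B$; together with the BCH bound this finishes the proof. The count in (ii) is supplied by the classical enumeration of zeros of quadratic forms over $\Fq$: a form of rank $r$ and suitable type takes a fixed nonzero value on $q^{m-1}+q^{\,m-1-\lceil r/2\rceil}$ points, so it is enough to produce $Q+L$ with $m-1-\lceil r/2\rceil=i$ and the correct sign --- concretely an elliptic form of rank $2(m-1-i)$ when $q=2$, and either an odd-rank form of rank $2m-2i-1$ (when $i\ge\frac{m-1}{2}$) or an elliptic form of rank $2(m-1-i)$ (at the endpoint $i=\frac{m-2}{2}$) for odd $q$; in every case the radical of the quadratic part has the nonnegative dimension guaranteed by $i\ge\frac{m-2}{2}$.

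The crux is therefore to realize a quadratic form of the prescribed rank and type inside the span $W$ of the ``long'' monomials $\{x^{q^j+q^k}:\text{cyclic gap of }(j,k)\ge m-1-i\}$, and to see that this is possible exactly in the stated range. When $2(m-1-i)$ divides $m$ this can be done by hand: take a scalar multiple of the norm form $\Tr^{q^{2(m-1-i)}}_{q}\!\big(\ba y^{\,1+q^{m-1-i}}\big)$ on the subfield $\F_{q^{2(m-1-i)}}$, pulled back along $\Tr^{q^m}_{q^{2(m-1-i)}}$; expanding shows every resulting exponent $q^j+q^k$ has cyclic gap an odd multiple of $m-1-i$, hence at least $m-1-i$, and $\ba$ can be chosen to make the type elliptic. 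For general $m$ and $i$ one replaces this explicit recipe by the structural theory of the association schemes of quadratic, symmetric bilinear and alternating bilinear forms over $\Fq$: the subspace $W$ is a code in the quadratic-forms scheme, its inner distribution (equivalently, that of its Delsarte dual) can be computed from the admissible-monomial description, and from it one reads off that $W$ contains a form of the rank and type fixed above whenever $2(m-1-i)\ge 2\ltm$, i.e.\ whenever $i\le m-\ltm-1$ --- exactly the upper hypothesis of the theorem. I expect this last verification --- matching the combinatorics of the $q$-cyclotomic cosets near $q^{m-1}$ with the rank-and-type stratification of the forms schemes, uniformly in $q$ --- to be the main technical obstacle; the small-$q$ and small-$m$ exceptional cases and the parity distinction between $q=2$ and odd $q$ in the point-count (ii) are a further, more routine, source of bookkeeping.
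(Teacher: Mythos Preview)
Your strategy coincides with the paper's: lower bound from the BCH bound, matching upper bound by exhibiting a quadratic form $Q$ of prescribed rank and type inside the trace-described set $\cQ$ of admissible quadratic coset representatives, with the existence of such a $Q$ established via association-scheme machinery. Your identification of the admissible quadratic monomials as those with cyclic gap $\ge m-1-i$, and of the target rank as $2m-2i-1$ or $2m-2i-2$, matches the paper.

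The gap is precisely the step you flag as the main obstacle, and two points in your outline of it need correction.

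For odd $q$, the paper passes from $\cQ$ to the associated set $\cS$ of symmetric bilinear forms (same inner distribution, Proposition~\ref{prop-innerdis}) and proves two concrete facts about $\cS$ in $\Sym(m,q)$: it is a $(2m-2i-2)$-code, by bounding the degree of the linearized polynomial defining the radical; and it is a $(2i+3-m)$-design, by a restriction-counting argument (Propositions~\ref{prop-tdesign}--\ref{prop-tdesignfamily}) showing that every symmetric form on every $(2i+3-m)$-dimensional subspace extends to $\cS$ a constant number of times. These two facts feed directly into Schmidt's explicit inner-distribution formulas (Propositions~\ref{prop-innerodd}--\ref{prop-innereven}), which then give $a_{2m-2i-1,\tau}>0$ for odd $m$, respectively $a_{2m-2i-2,-\tau}>0$ for the correct $\tau$ for even $m$. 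Contrary to your outline, the upper bound $i\le m-\lfloor m/3\rfloor-1$ is \emph{not} used in this scheme computation --- the code/design verification goes through for all $i\le m-2$; the upper bound only ensures that the trace representation of $\codespe$ contains no cubic or higher terms, so that $\cQ$ exhausts the coset representatives (needed for the weight-enumerator statement, not for the minimum distance).

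For even $q$, the paper does \emph{not} pin down the rank--type of the witnessing form as sharply as you claim. It passes to the alternating scheme, shows $\cA$ is a $(2m-2i-2)$-code whose cardinality exceeds the Delsarte--Goethals bound for a $(2m-2i)$-code (Proposition~\ref{prop-bound}), hence $b_{2m-2i-2}>0$; by Proposition~\ref{prop-innerdis} this yields only $d_{2m-2i-2,0}+d_{2m-2i-1,1}+d_{2m-2i-2,2}>0$. The trick is then to \emph{reuse} the BCH bound: a hyperbolic (type~$0$) form of rank $2m-2i-2$ would produce a codeword of weight $q^m-q^{m-1}-(q-1)q^i-1<\de_i$, contradicting $d\ge\de_i$. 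So $d_{2m-2i-2,0}=0$, whence one of the remaining summands is positive, and either case (odd rank, type~$1$; or even rank, type~$2$) furnishes a weight-$\de_i$ codeword. Your assertion that one always finds an elliptic form is stronger than what the argument actually delivers, and is not needed.

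Your explicit subfield-norm construction for the special case $2(m-1-i)\mid m$ is a nice shortcut the paper does not take, but it does not sidestep the scheme-theoretic argument in general.
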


We remark that when $q \in \{2,3\}$, the restrictions on $m$ ensure that the code under consideration is well-defined.  In the following, we provide an outline of our proof. The concepts and terminologies used in the outline, will be formally defined later.
\vspace{5pt}
\begin{itemize}
\item[1)] In Section 2.2, we show that the lower bound on $i$ guarantees that $q^m-q^{m-1}-q^i-1$ is indeed the Bose distance. In order to derive the minimum distance, it suffices to show that the narrow-sense primitive BCH code $\cC$ contains a codeword having weight $q^m-q^{m-1}-q^i-1$.
\item[2)] In Section 2.3, using the trace representation of the narrow-sense primitive BCH code $\cC$, we show that the upper bound on $i$ guarantees that $\cC$ is a subcode of the second order punctured Reed-Muller code (when $q=2$) or a subcode of the second order punctured generalized Reed-Muller code (when $q>2$). Moreover, in Proposition~2.6, we observe that the code $\cC$ can be decomposed into a disjoint union of cosets of the first order punctured (generalized) Reed-Muller code, where each coset representative corresponds to a quadratic form over finite field $\Fqm$. More specifically, we have
    $$
    \cC=\bigcup_{Q \in \cQ}\big((Q(x))_{x \in \Fqm^*}+\PRM_q(1,m)\big),
    $$
    where $\cQ$ is a set of quadratic forms on $\Fqm$ and $\PRM_q(1,m)$ is the first order punctured (generalized) Reed-Muller code.
\item[3)] In Propositions~\ref{prop-Qweightodd} and \ref{prop-Qweighteven}, we prove that the weight distribution of the code $(Q(x))_{x \in \Fqm^*}+\PRM_q(1,m)$, where $Q$ is a quadratic form on $\Fqm$, depends only on the rank and type of $Q$. Thus, we aim to find a quadratic form $Q \in \cQ$ with proper rank and type, such that $(Q(x))_{x \in \Fqm^*}+\PRM_q(1,m)$ contains a codeword having weight $q^m-q^{m-1}-q^i-1$. If so, the minimum distance of $\cC$ is equal to $q^m-q^{m-1}-q^i-1$.
\item[4)] In general, determining the rank and type of an individual quadratic form $Q \in \cQ$ is a challenging problem. Alternatively, we regard the set of quadratic forms $\cQ$ as a whole and study the quadratic forms belonging to $\cQ$ collectively, in the context of association schemes. We introduce some background knowledge of association schemes and the association schemes formed by quadratic forms in Sections 3.1 and 3.4, respectively. If we regard $\cQ$ as a subset of the association scheme formed by quadratic forms, our task amounts to derive some information about the inner distribution of the subset $\cQ$.
\item[5)] Recall that each quadratic form $Q$ produces an associated symmetric bilinear form, when $q$ is odd, and an associated alternating bilinear form, when $q$ is even. Therefore, starting from the set $\cQ$, we can derive a set of symmetric bilinear forms $\cS$, when $q$ is odd, and a set of alternating bilinear forms $\cA$, when $q$ is even. $\cS$ and $\cA$ can be viewed as a subset of the association scheme formed by symmetric bilinear forms (Section 3.2) and a subset of the association scheme formed by alternating bilinear forms (Section 3.3), respectively. Proposition 3.11 suggests that the inner distribution of $\cQ$ has a close connection with that of $\cS$ and $\cA$. Thus, we reduce the original problem to the investigation of inner distribution of $\cS$ and $\cA$.
\item[6)] If the subset $\cS$ and $\cA$ satisfy certain nice properties, i.e., the subset is a code and design in the association scheme, the inner distribution of $\cS$ and $\cA$, has been determined by Schmidt \cite{Sch15} (see Propositions 3.1 and 3.2), and Delsarte and Goethals \cite{DG} (see Proposition 3.3), respectively. What remains is to verify that $\cS$ and $\cA$ indeed satisfy the nice properties, which is completed in Propositions 4.6 and 5.2, respectively. Using Proposition 3.11, we derive the information concerning the inner distribution of $\cQ$, from that of $\cS$ (see Theorem 4.7) and $\cA$ (see Theorem 5.3), and accomplish the proof. In addition, we can obtain the complete information of the inner distribution of $\cQ$, when $q$ is odd. Thus, we derive the weight distribution of $\cC$, when $q$ is odd (Theorem 4.7).
\end{itemize}

The rest of this paper is organized as follows. In Section~\ref{sec2}, we present some preliminaries on narrow-sense primitive BCH codes. Section~\ref{sec3} provides a brief introduction to association scheme, including some crucial facts about association schemes of quadratic forms, symmetric bilinear forms and alternating bilinear forms over finite fields. The proof of our main theorem is presented in Section~\ref{sec4} and Section~\ref{sec5}, for $q$ being odd and even respectively. Section~\ref{sec6} is devoted to some concluding remarks.

\section{Preliminaries on narrow-sense primitive BCH codes}\label{sec2}

\subsection{Narrow-sense primitive BCH codes}

Let $q$ be a prime power. Let $\cC$ be an $[n,k]$ linear code over finite field $\Fq$ with $\gcd(n,q)=1$. $\cC$ is called a cyclic code, if $(c_0,c_1,\ldots,c_{n-1}) \in \cC$ implies its cyclic shift $(c_{n-1},c_0,\ldots,c_{n-2}) \in \cC$. For a cyclic code $\cC$, each codeword $(c_0,\ldots,c_{n-1})$ can be associated with a polynomial $\sum_{i=0}^{n-1} c_ix^i$ in the principal ideal ring $R_n=\Fq[x]/(x^{n}-1)$. Under this correspondence, $\cC$ can be identified with an ideal of $R_n$. Hence, there is a unique monic polynomial $g(x) \in \Fq[x]$ with $g(x) \mid x^n-1$ such that $\cC=(g(x))R_n$ and $g(x)$ has the smallest degree among the elements in $\cC$. This $g(x)$ is called the {\em generator polynomial} of $\cC$, and $h(x)=\frac{x^n-1}{g(x)}$ is called the {\em parity-check polynomial} of $\cC$. When $R_n$ is specified, a cyclic code is uniquely determined by either the generator polynomial or the parity-check polynomial and the dimension is equal to $n-\deg(g(x))=\deg(h(x))$. $\cC$ is said to have $i$ {\em nonzeroes} if its parity-check polynomial can be factorized into a product of $i$ irreducible polynomials over $\Fq$.

Now we introduce some notation which will be used throughout the rest of the paper. Let $q$ be a prime power. Set $\al$ to be a primitive element of finite field $\Fqm$. For $0 \le i \le q^m-2$, define $m_i(x)$ to be the minimum polynomial of $\al^i$ over $\Fq$. For an integer $2 \le \de \le q^m-1$, define a polynomial $\gene$ over $\Fq$ as
$$
\gene=\lcm(m_1(x),m_2(x),\ldots,m_{\de-1}(x)),
$$
where $\lcm$ represents the least common multiple of $m_i(x)$, $1 \le i \le \de-1$. Let $R$ be the principal ideal ring $\Fq[x]/(x^{q^m-1}-1)$. Then the cyclic code $(\gene)R$ is called a {\it $q$-ary narrow-sense primitive BCH code} with length $q^m-1$ and {\it designed distance} $\de$, which is denoted by $\cC_{(q,m,\de)}$. The dimension of $\cC_{(q,m,\de)}$ is equal to $q^m-1-\deg(\gene)$. By the BCH bound, the minimum distance of $\cC_{(q,m,\de)}$ is no less than its designed distance $\de$. For two distinct integers $2 \le \de,\de^{\pr} \le q^m-1$, two narrow-sense primitive BCH codes with designed distances $\de$ and $\de^{\pr}$ may coincide. Thus, for a narrow-sense primitive BCH code, its {\it Bose distance} \cite[p. 281]{Ber15}, \cite[p. 205]{MS} is defined to be the largest designed distance and denoted by $d_B$.

For a code $\cC$ with length $n$, we use $A_i$ to denote the number of codewords in $\cC$ with Hamming weight $i$, where $0 \le i \le n$. Then the sequence $(A_0,A_1,A_2,\ldots,A_n)$ is called the {\it weight distribution} of $\cC$. Moreover, we can use a polynomial
$$
\sum_{i=0}^n A_iZ^i
$$
to represent the weight distribution of $\cC$ in a compact way, and this polynomial is called the {\it weight enumerator} of $\cC$.

\subsection{Cyclotomic cosets}

Let $q$ be a prime power. We use $\Z_{q^m-1}$ to denote the ring of integers modulo $q^m-1$. Let $s$ be an integer with $0 \leq s < q^m-1$. The {\it $q$-cyclotomic coset of $s$ modulo $q^m-1$} is defined by
$$
C_s=\{sq^j \bmod{(q^m-1)} \mid 0 \le j \le l_s-1\} \subset \Z_{q^m-1},
$$
where $l_s$ is the smallest positive integer such that $q^{l_s} s \equiv s  \pmod{q^m-1}$. We call the smallest integer in $C_s$ as the {\it coset leader} of $C_s$. We use $\Gamma_{(q^m-1,q)}$ to denote the set of all the coset leaders of $q$-cyclotomic cosets modulo $q^m-1$.

Recall that $\al$ is a primitive element of $\Fqm$ and $m_s(x)$ is the minimal polynomial of $\al^s$ over $\Fq$, $0 \le s \le q^m-2$. Then, we have
$$
m_s(x)=\prod_{i \in C_s}(x-\al^i).
$$
Thus, we know that
\begin{equation}\label{eqn-deg}
\deg(m_s(x))=|C_s|.
\end{equation}
The following proposition shows the dimension of a BCH code is a summation of the size of certain cyclotomic cosets.

\begin{proposition}\label{prop-dim}
The dimension of BCH code $\code$ is equal to
$$
\sum_{\substack{s \ge \de \\ s \in \Gamma_{(q^m-1,q)}}} |C_s|+1.
$$
\end{proposition}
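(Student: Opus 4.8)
The plan is to compute $\deg(\gene)$ in terms of cyclotomic cosets and then combine this with the formula $\dim(\code) = q^m - 1 - \deg(\gene)$ established above, together with the fact that the $q$-cyclotomic cosets modulo $q^m-1$ partition $\Z_{q^m-1}$. First I would identify the roots of $\gene$: since $\gene = \lcm(m_1(x),\ldots,m_{\de-1}(x))$ and each $m_s(x) = \prod_{i \in C_s}(x-\al^i)$ has simple roots, so does their least common multiple, whence
$$
\gene = \prod_{i \in \bigcup_{j=1}^{\de-1} C_j} (x - \al^i), \qquad \text{and therefore} \qquad \deg(\gene) = \Bigl| \bigcup_{j=1}^{\de-1} C_j \Bigr|.
$$

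The key step is to express $\bigcup_{j=1}^{\de-1} C_j$ as a disjoint union of full cyclotomic cosets indexed by their leaders. For a coset leader $s \in \Gamma_{(q^m-1,q)}$, the coset $C_s$ occurs in this union if and only if $C_s \cap \{1,2,\ldots,\de-1\} \neq \es$ (if $j$ lies in this intersection then $C_j = C_s$ occurs; conversely if $C_s = C_j$ with $1 \le j \le \de-1$ then $j \in C_s$); and since $s$ is the minimum of $C_s$ while $C_0 = \{0\}$ is the unique coset containing $0$, this occurs precisely when $1 \le s \le \de-1$ (recall $2 \le \de \le q^m-1$, so $\{1,\ldots,\de-1\}$ is a nonempty subset of $\Z_{q^m-1}$ on which no reduction modulo $q^m-1$ is needed). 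As distinct cyclotomic cosets are disjoint, this gives
$$
\deg(\gene) = \sum_{\substack{1 \le s \le \de - 1 \\ s \in \Gamma_{(q^m-1,q)}}} |C_s|.
$$

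Finally, since $\Z_{q^m-1} = \bigsqcup_{s \in \Gamma_{(q^m-1,q)}} C_s$ and $0 \in \Gamma_{(q^m-1,q)}$ with $|C_0| = 1$, splitting the coset leaders into $\{0\}$, those with $1 \le s \le \de-1$, and those with $s \ge \de$ yields
$$
q^m - 1 = 1 + \sum_{\substack{1 \le s \le \de - 1 \\ s \in \Gamma_{(q^m-1,q)}}} |C_s| + \sum_{\substack{s \ge \de \\ s \in \Gamma_{(q^m-1,q)}}} |C_s|.
$$
Subtracting $\deg(\gene)$ from $q^m-1$ then gives $\dim(\code) = 1 + \sum_{s \ge \de,\, s \in \Gamma_{(q^m-1,q)}} |C_s|$, as claimed. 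I do not expect a genuine obstacle here; the only point deserving care is the middle step — in particular noting that $C_0 = \{0\}$ is precisely the coset excluded from $\bigcup_{j=1}^{\de-1} C_j$ because its leader lies below the range $\{1,\ldots,\de-1\}$, which is exactly the origin of the additive constant $1$.
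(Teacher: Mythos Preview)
Your proof is correct. The paper argues slightly more directly: rather than computing $\deg(\gene)$ and subtracting from $q^m-1$, it observes that the dimension equals the degree of the parity-check polynomial $h(x) = (x^{q^m-1}-1)/\gene$, writes this as $h(x) = (x-1)\prod_{s \ge \de,\, s \in \Gamma_{(q^m-1,q)}} m_s(x)$, and reads off the degree using $\deg m_s = |C_s|$. Your route via the generator polynomial and the partition of $\Z_{q^m-1}$ into cosets is the complementary computation and lands in exactly the same place; the additive $1$ that you trace to $C_0 = \{0\}$ corresponds precisely to the factor $(x-1)$ in the paper's expression for $h(x)$. Neither approach has any real advantage over the other.
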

\begin{proof}
Note that the dimension of $\code$ is equal to the degree of its parity-check polynomial $h(x)$. By the definition of BCH code,
$$
h(x)=(x-1)\prod_{\substack{s \ge \de \\ s \in \Gamma_{(q^m-1,q)}}} m_s(x).
$$
The conclusion follows from (\ref{eqn-deg}).
\end{proof}

For any integer $0 \le s < q^m-1$, it has a unique $q$-adic expansion $s=\sum_{i=0}^{m-1}s_iq^i$, where $0 \le s_i \le q-1$ for $0 \le i \le m-1$. Define the {\it $q$-weight} of $s$ to be $w_q(s)=\sum_{i=0}^{m-1} s_i$. Define the sequence of $s$ to be
$$
\ol{s}=(s_{m-1},s_{m-2},s_{m-3},\ldots,s_0).
$$
For $1 \le j \le m-1$, it is easy to see that the sequence of $q^js$ is a cyclic shift of $\ol{s}$:
$$
\ol{q^js}=(s_{m-j-1},s_{m-j-2},\ldots,s_{m-j+1},s_{m-j}),
$$
where the subscript of each coordinate is regarded as an integer modulo $m$. Given two sequences $\ol{s}=(s_{m-1},s_{m-2},\ldots,s_0)$ and $\ol{t}=(t_{m-1},t_{m-2},\ldots,t_0)$, define $\ol{s}<\ol{t}$ (resp. $\ol{s}\le\ol{t}$) if $\sum_{i=0}^{m-1}s_iq^i < \sum_{i=0}^{m-1}t_iq^i$ (resp. $\sum_{i=0}^{m-1}s_iq^i \le \sum_{i=0}^{m-1}t_iq^i$). Thus, $s$ is a coset leader if and only if
$$
\ol{s} \le \ol{q^js}, \quad\forall 1 \le j \le m-1.
$$

The following lemma demonstrates the first few largest coset leaders belonging to $\Gamma_{(q^m-1,q)}$. Throughout the rest of this paper, we denote $\de=(q-1)q^{m-1}-1$ and $\de_i=(q-1)q^{m-1}-q^i-1$ for $0 \le i \le m-1$.

\begin{lemma}\label{lem-cosetleader}
For $\frac{m-2}{2} \le i \le m-\lf \frac{m}{3} \rf-1$, the set $\{\de, \de_j \mid \frac{m-2}{2} \le j \le i \}$ consists of all coset leaders in $\Gamma_{(q^m-1,q)}$, which are greater than or equal to $\de_i$.
\end{lemma}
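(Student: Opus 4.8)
The plan is to argue entirely with the combinatorial description of coset leaders recalled just above the statement: $s$ is a coset leader if and only if the sequence $\ol{s}$ is numerically the smallest among all its cyclic shifts $\ol{q^js}$. Since $(q-1)q^{m-1}=q^m-q^{m-1}$, the integer $\de$ has sequence $\ol{\de}=(q-2,q-1,\ldots,q-1)$, and for $0\le j\le m-2$ the integer $\de_j=\de-q^j$ is obtained from $\de$ by lowering the digit in position $j$ from $q-1$ to $q-2$; so $\de_j$ has exactly two \emph{deficient} coordinates (those equal to $q-2$), at positions $m-1$ and $j$. Observe that any cyclic shift of $\ol{\de}$ or $\ol{\de_j}$ whose leading digit is $q-1$ represents an integer $\ge(q-1)q^{m-1}>\de$, so when testing whether these are coset leaders only the shifts whose leading digit is $q-2$ matter. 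I will prove separately that (A) $\de$ together with every $\de_j$ for $\frac{m-2}{2}\le j\le i$ is a coset leader that is $\ge\de_i$, and (B) no other integer $\ge\de_i$ is a coset leader.

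For (A): $\de$ is clearly a coset leader, since every nontrivial shift of $\ol{\de}$ pushes its unique digit $q-2$ to a less significant position. For $\de_j$, the two deficient positions $m-1$ and $j$ cut the cyclically arranged coordinates into two runs of $q-1$'s of lengths $m-2-j$ and $j$; reading $\ol{\de_j}$ from the most significant digit, the run following position $m-1$ in the reading order has length $m-2-j$, and comparing with the only other competing shift (the one whose leading digit is the digit in position $j$) one finds that $\ol{\de_j}$ is the smallest shift precisely when $m-2-j\le j$, i.e.\ $j\ge\frac{m-2}{2}$. Since $\de>\de_0>\de_1>\cdots>\de_{m-2}$ and $\de_j\ge\de_i$ iff $j\le i$, the integers $\de$ and $\de_j$ with $\frac{m-2}{2}\le j\le i$ form exactly the claimed set and are all $\ge\de_i$; the hypothesis $\frac{m-2}{2}\le i$ is what makes this set meaningful.

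For (B): first, a coset leader $s<q^m-1$ has $\ol{s}$ equal to the lexicographically smallest cyclic shift of itself, which begins with the smallest digit of $s$; that digit is at most $q-2$, so $s\le(q-2)q^{m-1}+(q^{m-1}-1)=\de$. Hence a coset leader $s\ge\de_i$ satisfies $\de_i\le s\le\de$, i.e.\ $s=\de-r$ with $0\le r\le q^i$. The cases $r=0$ and $r=q^i$ give $s=\de$ and $s=\de_i$, both in the claimed set. For $0<r<q^i$, writing $\de=(q-2)q^{m-1}+(q^{m-1}-1)$ and noting $0\le q^{m-1}-1-r<q^{m-1}$ shows that $s$ has leading digit $q-2$; if $s$ is a coset leader then $q-2$ is its smallest digit, so every digit of $s$ lies in $\{q-2,q-1\}$, which forces $r=\sum_{t\in T}q^t$ for a nonempty $T\subseteq\{0,1,\ldots,i-1\}$. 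If $|T|=1$ then $s=\de_j$ with $j\le i-1$, which by (A) is a coset leader only when $j\ge\frac{m-2}{2}$, and then it lies in the claimed set.

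The heart of the argument, and the only point where the upper bound on $i$ enters, is excluding $|T|\ge2$. Writing $T=\{t_1<\cdots<t_k\}$ with $k\ge2$, the deficient positions of $s$ are $\{m-1\}\cup T$, and reading $\ol{s}$ from position $m-1$ the successive runs of $q-1$'s have lengths $m-2-t_k,\ t_k-t_{k-1}-1,\ \ldots,\ t_2-t_1-1,\ t_1$. For $\ol{s}$ to be the smallest cyclic shift the leading run length $m-2-t_k$ must be at most every one of the remaining $k$ run lengths; summing these $k$ inequalities gives $k(m-2-t_k)\le t_k-k+1$, hence $(k+1)t_k\ge k(m-1)-1$, and combined with $t_k\le i-1$ this yields $(k+1)i\ge km$, i.e.\ $i\ge\frac{km}{k+1}\ge\frac{2m}{3}$. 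But $\lf\frac{m}{3}\rf\ge\frac{m-2}{3}$, so the hypothesis forces $i\le m-\lf\frac{m}{3}\rf-1\le\frac{2m-1}{3}<\frac{2m}{3}$, a contradiction; hence $|T|\le1$, which completes (B) and the proof. I expect the main obstacle to be the bookkeeping with these cyclic run lengths and extracting exactly the inequality $i\ge\frac{2m}{3}$, together with a routine direct check of the few small $(q,m)$ pairs permitted by the hypotheses of the main theorem.
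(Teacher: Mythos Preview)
Your proof is correct, and the underlying idea---analyzing the cyclic run-lengths of $q-1$'s between the ``deficient'' digits $q-2$---is the same one the paper uses. The organization, however, is genuinely tidier than the paper's. The paper proceeds by casework on the $q$-weight $w_q(\theta)$, treating separately the forms $(q-3,q-1,\ldots)$, $(q-4,q-1,\ldots)$, $(q-3,\ldots,q-2,\ldots)$, the three-$(q-2)$ pattern, and then waves at ``a similar approach'' for $w_q(\theta)<m(q-1)-3$. Your observation that a coset leader's leading digit is its minimum digit immediately eliminates every pattern with a digit $\le q-3$ and reduces to sequences with all digits in $\{q-2,q-1\}$; and your summation argument $k(m-2-t_k)\le t_k-k+1\Rightarrow i\ge\frac{km}{k+1}\ge\frac{2m}{3}$ dispatches all $|T|\ge2$ at once, which is exactly the step the paper only sketches. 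The paper's case analysis, on the other hand, makes the $|T|=2$ threshold $j_1\ge m-\lfloor m/3\rfloor-1$ visible in a form that matches the later Remark~\ref{rem-cosetleader}.

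One small point: your description of $\de_j$ as having \emph{two} deficient positions breaks down when $j=m-1$ (the two positions coincide and the leading digit becomes $q-3$), which can occur in the stated range only for $m\le 2$. You already flagged the small-$(q,m)$ pairs for direct verification, so this is harmless, but it is worth saying explicitly that for $m\ge 3$ the hypothesis $i\le m-\lfloor m/3\rfloor-1$ forces $j\le i\le m-2$, so the issue does not arise.
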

\begin{proof}
Let $i$ and $j$ be two integers $\frac{m-2}{2} \le j \le i \le m-\lf \frac{m}{3} \rf-1$. Define $A$ to be the subset of $\Gamma_{(q^m-1,q)}$ consisting of all coset leaders greater than or equal to $\de_i$.

Suppose $0 \le \theta < q^m-1$ is a coset leader, which is greater than or equal to $\de_i$. Then $0 \le w_q(\theta) \le m(q-1)-1$. To find the few largest coset leaders, we focus on the coset leaders with large $q$-weight.

When $w_q(\theta)=m(q-1)-1$, it is easy to see that $\ol{\theta}$ must be of the following form:
$$
\ol{\theta}=(q-2,q-1,q-1,\ldots,q-1).
$$
Thus, we have $\theta=\de$ and $\de \in A$.

When $w_q(\theta)=m(q-1)-2$, then $\ol{\theta}$ must be one of the following forms:
\begin{equation}\label{eqn-leader1}
\ol{\theta}=(q-3,q-1,q-1,\ldots,q-1),
\end{equation}
or
\begin{equation}\label{eqn-leader2}
\ol{\theta}=(\underset{m-1}{q-2},\underbrace{q-1,\ldots,q-1}_{m-2-j},\underset{j}{q-2},\underbrace{q-1,\ldots,q-1}_{j}).
\end{equation}
If $\ol{\theta}$ has form (\ref{eqn-leader1}), then $\theta < \de_i$. If $\ol{\theta}$ has form (\ref{eqn-leader2}), we must have $m-2-j \le j$, namely, $j \ge \frac{m-2}{2}$. Otherwise, if $m-2-j > j$, then $\ol{q^{m-1-j}\theta}<\ol{\theta}$, which leads to a contradiction. Meanwhile, to ensure $\theta \ge \de_i$, we must have $j \le i$. Hence, we have $\{\de_j \mid \frac{m-2}{2} \le j \le i\} \subset A$.

When $w_q(\theta)=m(q-1)-3$, then $\ol{\theta}$ must be one of the following forms:
\begin{equation}\label{eqn-leader3}
\ol{\theta}=(q-4,q-1,q-1,\ldots,q-1),
\end{equation}
or
\begin{equation}\label{eqn-leader4}
\ol{\theta}=(q-3,q-1,\ldots,q-1,q-2,q-1,\ldots,q-1),
\end{equation}
or
\begin{equation}\label{eqn-leader5}
\ol{\theta}=(\underset{m-1}{q-2},\underbrace{q-1,\ldots,q-1}_{m-2-j_1},\underset{j_1}{q-2},\underbrace{q-1,\ldots,q-1}_{j_1-j_2-1},\underset{j_2}{q-2},\underbrace{q-1,\ldots,q-1}_{j_2}).
\end{equation}
If $\ol{\theta}$ has form (\ref{eqn-leader3}) or (\ref{eqn-leader4}), then $\theta < \de_i$. If $\ol{\theta}$ has form (\ref{eqn-leader5}), we must have $m-2-j_1 \le j_1-j_2-1$ and $m-2-j_1 \le j_2$. Since $(m-2-j_1)+(j_1-j_2-1)+j_2=m-3$, we have $m-2-j_1 \le \lf \frac{m-3}{3}\rf$, namely, $j_1 \ge m-\lf \frac{m}{3} \rf-1$. Note that
$$
\ol{\de_i}=(\underset{m-1}{q-2},\underbrace{q-1,\ldots,q-1}_{m-2-i},\underset{i}{q-2},q-1,\ldots,q-1).
$$
Since $i \le m-\lf \frac{m}{3} \rf-1$, we have $\theta < \de_i$. Therefore, each coset leader with $q$-weight $m(q-1)-3$ is less than $\de_i$.

When $w_q(\theta)<m(q-1)-3$, a similar approach shows that $\theta<\de_i$. Thus, we have proved that $A=\{\de,\de_j \mid \frac{m-2}{2} \le j \le i\}$.
\end{proof}

\begin{remark}\label{rem-cosetleader}
Suppose $m-\lf \frac{m}{3} \rf-1 < i \le m-1$, then there exists a coset leader $\theta$ satisfying $w_q(\theta)=(m-1)q-3$ and
$$
\ol{\theta}=(\underset{m-1}{q-2},\underbrace{q-1,\ldots,q-1}_{\lf\frac{m}{3}\rf-1},\underset{m-\lf\frac{m}{3}\rf-1}{q-2},\underbrace{q-1,\ldots,q-1}_{\lf \frac{m}{3}\rf-1},\underset{m-2\lf\frac{m}{3}\rf-1}{q-2},\underbrace{q-1,\ldots,q-1}_{m-2\lf\frac{m}{3}\rf-1}),
$$
such that $\theta>\de_i$. Hence, the condition $i \le m-\lf \frac{m}{3} \rf-1$ in Lemma~\ref{lem-cosetleader} guarantees that each coset leader greater than or equal to $\de_i$ has $q$-weight either $(m-1)q-1$ or $(m-1)q-2$.
\end{remark}

\subsection{Trace representation of cyclic codes}

Denote the trace function from $\Fqm$ to $\Fq$ by $\Tqmq$. Now we recall the following trace representations of cyclic codes, which is a direct consequence of Delsarte's Theorem \cite{Del75}.

\begin{proposition}\label{prop-trace}
Let $\cC$ be a cyclic code of length $q^m-1$ over $\Fq$. Suppose $\cC$ has $s$ nonzeroes and let $\al^{i_1}, \al^{i_2}, \ldots, \al^{i_s}$ be $s$ roots of its parity-check polynomial which are not conjugate with each other. Let $C_{i_j}$ be the $q$-cyclotomic coset modulo $q^m-1$ containing $i_j$. Denote the size of $C_{i_j}$ to be $m_j$, $1 \le j \le s$. Then $\cC$ has the following trace representation
$$
\cC=\{ c(\la_1,\la_2,\ldots,\la_s) \mid \la_j \in \F_{q^{m_j}}, 1 \le j \le s \},
$$
where
$$
c(\la_1,\la_2,\ldots,\la_s)=\left(\sum_{j=1}^s \Tr^{q^{m_j}}_{q}(\la_j\al^{-li_j})\right)_{l=0}^{q^m-2}.
$$
\end{proposition}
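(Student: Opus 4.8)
The plan is to prove the asserted equality by showing that the right-hand set is contained in $\cC$ and then comparing $\Fq$-dimensions. Write
$$
D := \{\, c(\la_1,\ldots,\la_s) \mid \la_j \in \F_{q^{m_j}},\ 1 \le j \le s \,\}.
$$
First I would record two easy structural facts. Since the trace $\Tr^{q^{m_j}}_q$ is $\Fq$-linear, each coordinate of $c(\la_1,\ldots,\la_s)$ is an $\Fq$-linear function of $(\la_1,\ldots,\la_s)$, so $D$ is an $\Fq$-linear subspace of $\Fq^{q^m-1}$ and the assignment $(\la_1,\ldots,\la_s)\mapsto c(\la_1,\ldots,\la_s)$ is an $\Fq$-linear map from $\prod_{j=1}^s\F_{q^{m_j}}$ onto $D$. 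Second, because $\al^{i_1},\ldots,\al^{i_s}$ are pairwise non-conjugate over $\Fq$, the cyclotomic cosets $C_{i_1},\ldots,C_{i_s}$ are pairwise disjoint, and $m_j=|C_{i_j}|$ means that $i_jq^0,i_jq^1,\ldots,i_jq^{m_j-1}$ are exactly the (distinct) elements of $C_{i_j}$; consequently the integers $i_jq^k$ (for $1\le j\le s$, $0\le k\le m_j-1$) are pairwise distinct modulo $q^m-1$.

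For the containment $D\subseteq\cC$, I would fix $\la_j\in\F_{q^{m_j}}$, set $c=c(\la_1,\ldots,\la_s)$ with associated polynomial $c(x)=\sum_{l=0}^{q^m-2}c_lx^l$, and expand the trace to obtain $c_l=\sum_{j=1}^s\sum_{k=0}^{m_j-1}\la_j^{q^k}\al^{-li_jq^k}$. From $x^{q^m-1}-1=g(x)h(x)=\prod_{t=0}^{q^m-2}(x-\al^t)$ and $h(x)=\prod_{j=1}^s m_{i_j}(x)$, the roots of the generator polynomial $g(x)$ of $\cC$ are exactly the $\al^t$ with $t\notin C_{i_1}\cup\cdots\cup C_{i_s}$. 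For such a $t$,
$$
c(\al^t)=\sum_{l=0}^{q^m-2}c_l\al^{lt}=\sum_{j=1}^s\sum_{k=0}^{m_j-1}\la_j^{q^k}\sum_{l=0}^{q^m-2}\al^{l(t-i_jq^k)}=0,
$$
since $t-i_jq^k\not\equiv 0\pmod{q^m-1}$ for every pair $(j,k)$ forces each inner geometric sum to vanish (whereas $\sum_{l=0}^{q^m-2}\al^{lu}$ is nonzero only when $u\equiv 0\pmod{q^m-1}$). Hence $g(x)\mid c(x)$, i.e.\ $c\in\cC$. Next, to pin down $\dim_{\Fq}D$ I would show the linear map above is injective: if $c(\la_1,\ldots,\la_s)=0$, then $\sum_{j,k}\la_j^{q^k}(\al^{-i_jq^k})^l=0$ for all $l$, and since the nodes $\al^{-i_jq^k}$ are pairwise distinct (as the exponents are distinct modulo $q^m-1$ and $\al$ has order $q^m-1$), a Vandermonde argument forces $\la_j^{q^k}=0$ for all $(j,k)$, hence $\la_j=0$ for all $j$. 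Therefore $\dim_{\Fq}D=\sum_{j=1}^s m_j$.

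Finally I would invoke (\ref{eqn-deg}) to get $\dim_{\Fq}\cC=\deg h(x)=\sum_{j=1}^s\deg m_{i_j}(x)=\sum_{j=1}^s|C_{i_j}|=\sum_{j=1}^s m_j$; combined with the two steps above, $D$ is a subspace of $\cC$ of the same $\Fq$-dimension, so $D=\cC$, which is the claim. This argument is precisely Delsarte's Theorem \cite{Del75} specialized to the cyclic code $\cC$, read through the Mattson--Solomon transform (a codeword being recovered from its support on the nonzeroes of $\cC$). I do not expect a real obstacle, since the result is classical; the one point demanding care is the bookkeeping in the first containment --- tracking exactly which cyclotomic cosets the exponents $i_jq^k$ run through and fixing the sign convention in $\al^{-li_j}$ consistently with the orthogonality relation $\sum_{l=0}^{q^m-2}\al^{lu}=0$ for $u\not\equiv 0\pmod{q^m-1}$ --- which, via the disjointness of $C_{i_1},\ldots,C_{i_s}$, is what simultaneously drives both the membership and the injectivity arguments.
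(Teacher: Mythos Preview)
Your proof is correct. The paper itself does not give a proof of this proposition; it simply states the result as ``a direct consequence of Delsarte's Theorem \cite{Del75}'' and moves on. What you have written is exactly the elementary argument that unpacks that citation: the containment $D\subseteq\cC$ via vanishing at the roots of $g(x)$, injectivity by a Vandermonde argument on the pairwise distinct nodes $\al^{-i_jq^k}$, and the dimension count $\dim_{\Fq}\cC=\deg h(x)=\sum_j m_j$ through (\ref{eqn-deg}). So there is no methodological divergence to discuss --- you have supplied the details the paper deliberately omitted, and you correctly identified the result as Delsarte's theorem read through the Mattson--Solomon transform. The only point worth making explicit (which you used implicitly) is that $\al^{i_j}\in\F_{q^{m_j}}$ because $m_j=|C_{i_j}|=\deg m_{i_j}(x)$, so that $\la_j\al^{-li_j}$ genuinely lies in $\F_{q^{m_j}}$ and $\Tr^{q^{m_j}}_q$ applies; everything else is in order.
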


Consequently, we have the following proposition concerning BCH code $\cC_{(q,m,\de_i)}$, where $\frac{m-2}{2} \le i \le m-\ltm-1$.

\begin{proposition}\label{prop-traceprop}
For nonnegative integer $i$ with $\frac{m-2}{2} \le i \le m-\ltm-1$, the BCH code $\cC_{(q,m,\de_i)}$ has dimension $(i-\frac{m-5}{2})m+1$ and Bose distance $\de_i$. When $m$ is odd, it has the following trace representation:
$$
\Bigg\{\Big(\Tqmq\Big(\sum_{j=\hmpo}^{i+1}\la_jx^{q^j+1}+\mu x\Big)+\ep\Big)_{x\in \Fqm^*} \Big | \la_{\hmpo},\ldots,\la_{i+1}, \mu \in \Fqm, \ep \in \Fq \Bigg\}.
$$
When $m$ is even, it has the following trace representation:
\begin{align*}
\Bigg\{\Big(\Tqhmq\big(\la_{\hm}x^{q^{\hm}+1}\big)+\Tqmq&\Big(\sum_{j=\hmpt}^{i+1}\la_jx^{q^j+1}+\mu x\Big)+\ep\Big)_{x\in \Fqm^*} \\
                  &\Big | \la_{\hm} \in \Fqhm, \la_{\hmpt},\ldots,\la_{i+1}, \mu \in \Fqm, \ep \in \Fq \Bigg\}.
\end{align*}
\end{proposition}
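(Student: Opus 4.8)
I would prove Proposition~\ref{prop-traceprop} by combining Lemma~\ref{lem-cosetleader} (to pin down the nonzeroes of the code) with the trace-representation machinery of Proposition~\ref{prop-trace}. The first observation is that $\de_i$ is a coset leader: by Remark~\ref{rem-cosetleader} and Lemma~\ref{lem-cosetleader}, for $\frac{m-2}{2}\le i \le m-\ltm-1$, the coset leaders in $\Gamma_{(q^m-1,q)}$ that are $\ge \de_i$ are exactly $\de, \de_{\lceil (m-2)/2\rceil}, \ldots, \de_i$ (with $\de=(q-1)q^{m-1}-1$). In particular $\de_i \in \Gamma_{(q^m-1,q)}$, so $g_{(q,m,\de_i)}(x) = g_{(q,m,\de_i+1)}(x)$ has strictly larger degree than $g_{(q,m,\de_i+2)}(x)$, hence $\de_i$ is precisely the Bose distance. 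To read off the dimension, I apply Proposition~\ref{prop-dim}: the dimension equals $1 + \sum |C_s|$ over coset leaders $s \ge \de_i$. Here $|C_\de| = m$, and for each $j$ with $\frac{m-2}{2} \le j \le i$ one needs the size of $C_{\de_j}$. I would show $|C_{\de_j}| = m$ when $j \neq \frac{m}{2}$ (i.e., always when $m$ is odd, and for $j > \frac{m}{2}$ when $m$ is even) by checking that the cyclic shifts of $\ol{\de_j}$ are pairwise distinct — this uses the ``$m-2-j \le j$'' asymmetry of the pattern in \eqref{eqn-leader2}; and $|C_{\de_{m/2}}| = \frac{m}{2}$ when $m$ is even, because $\ol{\de_{m/2}}$ is then fixed by the shift by $\frac{m}{2}$. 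Summing: for $m$ odd there are $i - \frac{m-3}{2} + 1 = i - \frac{m-5}{2}$ coset leaders $\ge \de_i$, each of size $m$, giving dimension $(i-\frac{m-5}{2})m + 1$; for $m$ even the bookkeeping gives the same closed form after accounting for the half-size coset.

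**Getting the trace representation.** Once the relevant coset leaders are identified, the trace representation is a direct instance of Proposition~\ref{prop-trace}: each nonzero $\la^{i_j}$ of the code is $\al^{-s}$ for a coset leader $s$ (note the exponent $-s$ is the convention in Proposition~\ref{prop-trace}, and $-\de_j \equiv q^j+1 \pmod{q^m-1}$ after expanding $\de_j = (q-1)q^{m-1}-q^j-1 = q^m - q^{m-1}-q^j-1$; likewise $-\de \equiv 1$, corresponding to the linear term $\mu x$; and the all-ones coordinate from the root $\al^0$, i.e., $1 \notin C_s$ for all the relevant $s$ since $1$ has $q$-weight $1$, gives the constant $\ep \in \Fq$). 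For the coset leader $\de_j$ with $j \neq \frac{m}{2}$, the monomial $\Tr$ runs over $\Fqm$; when $m$ is even and $j = \frac{m}{2}$, the smaller coset size $\frac{m}{2}$ forces $\la_{m/2} \in \Fqhm$ and the trace to be $\Tqhmq$ applied to $x^{q^{m/2}+1}$ (and indeed $x^{q^{m/2}+1} = x \cdot x^{q^{m/2}} = N_{q^{m/2}/\text{?}}$ lands in $\Fqhm$, so this is well-defined). Assembling these terms over all $j$ from $\hmpo$ (resp.\ $\hmpt$) up to $i+1$ — I would reindex $j \mapsto j$ so that the exponent is $q^j+1$, turning $\de_{j}$ into the term $\la_{j+1} x^{q^{j+1}+1}$ up to a harmless shift of index — yields exactly the two displayed formulas. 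The lower index $\hmpo$ or $\hmpt$ comes from the constraint $j \ge \frac{m-2}{2}$ in Lemma~\ref{lem-cosetleader}, and the upper index $i+1$ from $j \le i$.

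**Main obstacle.** The routine-but-delicate part is the cyclotomic-coset-size computation, i.e., verifying $|C_{\de_j}| = m$ for $j \neq \frac{m}{2}$ and $|C_{\de_{m/2}}| = \frac{m}{2}$, and — more subtly — checking that the $\de_j$'s (together with $\de$) are pairwise non-conjugate so that Proposition~\ref{prop-trace} applies with distinct monomials. Non-conjugacy is automatic once each has full coset size $m$ and the values $\de_j$ are distinct, but the $m$ even, $j = \frac{m}{2}$ case needs a separate argument and also a check that $\de_{m/2}$ is not conjugate to any $\de_j$ with $j > \frac{m}{2}$ (it isn't, because they have different $q$-weight patterns or lie in cosets of different size). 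One also must confirm the index ranges are consistent: the hypothesis $\frac{m-2}{2} \le i$ ensures the set $\{\de_j : \frac{m-2}{2}\le j \le i\}$ is nonempty and that $\de_i$ really is a coset leader (via the ``$m-2-j\le j$'' requirement), while $i \le m - \ltm - 1$ ensures (Remark~\ref{rem-cosetleader}) that no coset leader of $q$-weight $m(q-1)-3$ or smaller sneaks in above $\de_i$. Everything else is substitution into Proposition~\ref{prop-trace} and Proposition~\ref{prop-dim}.
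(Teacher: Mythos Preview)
Your approach is essentially identical to the paper's: use Lemma~\ref{lem-cosetleader} to list the coset leaders $\ge \de_i$, compute their sizes, apply Proposition~\ref{prop-dim} for the dimension, observe that $\de_i$ being a coset leader forces the Bose distance (the paper cites \cite[Proposition~4]{LDXG} for this, but your direct argument is the same content), and substitute into Proposition~\ref{prop-trace}. Two small slips to correct before writing it up: (i) when $m$ is even the half-size coset is $C_{\de_{(m-2)/2}}$, not $C_{\de_{m/2}}$---the two entries equal to $q-2$ in $\ol{\de_j}$ sit $\tfrac{m}{2}$ apart exactly when $j=\tfrac{m-2}{2}$; (ii) the congruence is $-\de_j \equiv q^{m-1}+q^j \pmod{q^m-1}$, not $q^j+1$, though one further multiplication by $q$ (using $\Tqmq(x^q)=\Tqmq(x)$) gives $q^{j+1}+1$, which is the exponent appearing in the statement---the paper records this as $q(q^m-1-\de_j)\equiv q^{j+1}+1$.
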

\begin{proof}
We only prove the case with $m$ being even. The case with $m$ being odd can be shown in a similar way. By Lemma~\ref{lem-cosetleader}, $\{\de,\de_j \mid \hmmt \le j \le i \}$ is the set of all coset leaders in $\Gamma_{(q^m-1,q)}$, which are greater or equal to $\de_i$. By analyzing the sequences of coset leaders $\de$ and $\de_j$, $\hmmt \le j \le i$, it is easy to see that $|C_{\de_{\hmmt}}|=\hm$ and $|C_{\de}|=|C_{\de_j}|=m$ for $\hm \le j \le i$. Therefore, by Proposition~\ref{prop-dim}, the dimension of $\codespe$ is equal to $(i-\frac{m-5}{2})m+1$.

Since $\codespe$ is a narrow-sense BCH code and $\de_i$ is a coset leader, by \cite[Proposition 4]{LDXG}, its Bose distance is equal to $\de_i$. By definition, $\codespe$ has $i-\hm+4$ nonzeroes. More precisely, $\{ 1,\al^{\de},\al^{\de_{j}} \mid \hmmt \le j \le i\}$ are $i-\hm+4$ roots of the parity-check polynomial of $\codespe$, which are not conjugate with each other. Recall that the trace function satisfies $\Tqmq(x^q)=\Tqmq(x)$ and $\Tqhmq(x^q)=\Tqhmq(x)$. Since
\begin{align*}
q(q^m-1-\de) &\equiv 1 \pmod{q^m-1}, \\
q(q^m-1-\de_j) & \equiv q^{j+1}+1 \pmod{q^m-1},
\end{align*}
employing Proposition~\ref{prop-trace}, we get the trace representation of $\codespe$.
\end{proof}

When $q=2$ (resp. $q>2$), the punctured first order Reed-Muller code (resp. the $q$-ary punctured first order generalized Reed-Muller code) {\rm\cite[Chapter 5]{AK}} with length $q^m-1$ is denoted by $\PRM_q(1,m)$. It has the following trace representation:
\begin{equation}\label{eqn-firstord}
\PRM_q(1,m)=\Big\{\big(\Tqmq(\mu x)+\ep\big)_{x\in \Fqm^*} \big | \mu \in \Fqm, \ep \in \Fq \Big\}.
\end{equation}
It is easy to verify that $\PRM_q(1,m)$ has weight enumerator
\begin{equation}\label{eqn-weightenum}
T(Z)=1+(q-1)(q^m-1)Z^{q^{m}-q^{m-1}-1}+(q^m-1)Z^{q^m-q^{m-1}}+(q-1)Z^{q^m-1}.
\end{equation}
According to Proposition~\ref{prop-traceprop}, for $\hmmt \le i \le m-\ltm-1$, the BCH code $\codespe$ is a disjoint union of cosets of $\PRM_q(1,m)$. More precisely, for $m$ being odd, define
\begin{equation}\label{eqn-Q1}
\cQ_1:=\cQ_1(i)=\Bigg\{\Tqmq\Big(\sum_{j=\hmpo}^{i+1}\la_jx^{q^j+1}\Big) \Big | \la_{\hmpo},\ldots,\la_{i+1} \in \Fqm \Bigg\}.
\end{equation}
For $m$ being even, define
\begin{equation}
\begin{split}\label{eqn-Q2}
\cQ_2:=\cQ_2(i)=\Bigg\{\Tqhmq\big(\la_{\hm}x^{q^{\hm}+1}\big)+\Tqmq&\Big(\sum_{j=\hmpt}^{i+1}\la_jx^{q^j+1}\Big) \\
                                                             &\Big | \la_{\hm} \in \Fqhm, \la_{\hmpt},\ldots,\la_{i+1} \in \Fqm \Bigg\}.
\end{split}
\end{equation}
The we have the following description of $\codespe$.

\begin{proposition}\label{prop-cosetunion}
For nonnegative integer $i$ with $\frac{m-2}{2} \le i \le m-\ltm-1$, we have
\begin{equation}\label{eqn-codeQ}
\codespe=\begin{cases}
            \bigcup_{Q \in \cQ_1}\big((Q(x))_{x\in\Fqm^*}+\PRM_q(1,m)\big) & \mbox{if $m$ is odd,} \\
            \bigcup_{Q \in \cQ_2}\big((Q(x))_{x\in\Fqm^*}+\PRM_q(1,m)\big) & \mbox{if $m$ is even.}
         \end{cases}
\end{equation}
Moreover, when $q=2$ (resp. $q>2$), the BCH code $\codespe$ is a subcode of the punctured second order Reed-Muller code (resp. the $q$-ary punctured second order generalized Reed-Muller code).
\end{proposition}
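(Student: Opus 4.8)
\emph{Proof plan.} The idea is to read both statements directly off the trace representation of $\codespe$ given in Proposition~\ref{prop-traceprop}, so the work is essentially bookkeeping plus the classical cyclic-code description of generalized Reed--Muller codes. For the decomposition~(\ref{eqn-codeQ}) I would treat the case $m$ even (the case $m$ odd is identical, with $\cQ_1$ and the single trace $\Tqmq$ replacing the half-trace). By Proposition~\ref{prop-traceprop} a generic codeword of $\codespe$ has the form
\[
\Big(\Tqhmq\big(\la_{\hm}x^{q^{\hm}+1}\big)+\Tqmq\big({\textstyle\sum_{j=\hmpt}^{i+1}\la_jx^{q^j+1}+\mu x}\big)+\ep\Big)_{x\in\Fqm^*},
\]
which, using $\Fq$-linearity of the trace, I would regroup as $(Q(x))_{x\in\Fqm^*}+\big(\Tqmq(\mu x)+\ep\big)_{x\in\Fqm^*}$. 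As $\la_{\hm}\in\Fqhm$ and $\la_{\hmpt},\dots,\la_{i+1}\in\Fqm$ range over their domains the first summand is precisely a generic element of the set $\cQ_2$ of~(\ref{eqn-Q2}); as $\mu\in\Fqm$ and $\ep\in\Fq$ range freely the second summand is, by~(\ref{eqn-firstord}), a generic element of $\PRM_q(1,m)$; and the two blocks of parameters are independent. Reading this correspondence in both directions yields the set equality in~(\ref{eqn-codeQ}). Finally, since distinct cosets of the linear code $\PRM_q(1,m)$ are disjoint, the union on the right of~(\ref{eqn-codeQ}) is automatically a disjoint union of cosets, each represented by a quadratic form $Q\in\cQ_2$ (resp.\ $Q\in\cQ_1$).

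For the subcode assertion I would invoke the standard cyclic-code description of punctured (generalized) Reed--Muller codes~\cite[Chapter 5]{AK}: $\PRM_q(2,m)$ contains the all-one word, the code $\PRM_q(1,m)$, every word $\big(\Tqmq(\la\,x^{q^j+1})\big)_{x\in\Fqm^*}$ with $1\le j\le m-1$ and $\la\in\Fqm$, and, when $m$ is even, every word $\big(\Tqhmq(\la\,x^{q^{\hm}+1})\big)_{x\in\Fqm^*}$ with $\la\in\Fqhm$ --- because all the exponents involved (namely $0$, $1$, $q^j+1$, $q^{\hm}+1$) have $q$-weight at most $2$, which is exactly the condition for a monomial to appear in the trace representation of the second order code (and for $q=2$ this is the second order punctured Reed--Muller code). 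Comparing with the two trace representations in Proposition~\ref{prop-traceprop}, every codeword of $\codespe$ is a sum of words of these types, whence $\codespe\subseteq\PRM_q(2,m)$. Equivalently, Lemma~\ref{lem-cosetleader} and Remark~\ref{rem-cosetleader} show that every coset leader $\ge\de_i$ has $q$-weight $m(q-1)-1$ or $m(q-1)-2$, so the defining set of $\codespe$ contains every $\al^s$ with $1\le s\le q^m-2$ and $w_q(s)\le m(q-1)-3$, which is precisely the defining set of $\PRM_q(2,m)$; the inclusion of codes follows.

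I do not expect a genuine obstacle here; the only points that require a little care are the bookkeeping when $m$ is even --- where the exponent $q^{\hm}+1$ lies in a cyclotomic coset of size $\hm$, which is responsible for the half-trace $\Tqhmq$ appearing both in the definition of $\cQ_2$ and in the trace representation of $\codespe$ --- and correctly recalling the cyclic-code description of $\PRM_q(2,m)$ over a general, possibly non-binary, field (in particular, for $q\ge 3$ the trace representation of $\PRM_q(2,m)$ additionally uses the monomial $x^2$, which is harmless for the one-sided inclusion we need).
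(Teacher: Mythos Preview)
Your proposal is correct and follows essentially the same approach as the paper: for the decomposition~(\ref{eqn-codeQ}) both you and the paper simply regroup the trace representation of Proposition~\ref{prop-traceprop} against the definitions~(\ref{eqn-firstord}), (\ref{eqn-Q1}), (\ref{eqn-Q2}). For the subcode assertion there is a minor difference of emphasis --- the paper argues (citing~\cite{LF}) that each element of $\cQ_1,\cQ_2$ is a quadratic form on $\Fq^m$ and hence lies in $\PRM_q(2,m)$ by the evaluation description, whereas you work through the cyclic/defining-set description of $\PRM_q(2,m)$ --- but these are two phrasings of the same fact, and your defining-set variant (via Lemma~\ref{lem-cosetleader}) is a perfectly valid alternative.
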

\begin{proof}
Combining Proposition~\ref{prop-traceprop} and (\ref{eqn-firstord}), (\ref{eqn-Q1}), (\ref{eqn-Q2}), we can derive (\ref{eqn-codeQ}).
By employing the same argument as in {\rm\cite[p. 5334]{LF}}, we can see the elements of $\cQ_1$ and $\cQ_2$ are all quadratic forms over $\Fq^m$, which is the additive group of the finite field $\Fqm$ (see Section~\ref{sec3D} for a formal definition of quadratic forms over finite fields). Hence, when $q=2$, $\codespe$ is a subcode of the punctured second order Reed-Muller code. When $q>2$, $\codespe$ is a subcode of the $q$-ary punctured second order generalized Reed-Muller code.
\end{proof}

\begin{remark}
By Remark~\ref{rem-cosetleader}, the upper bound $i\le m-\ltm-1$ ensures that in $\Gamma_{(q^m-1,q)}$, all coset leaders greater than or equal to $\de_i$ has $q$-weight either $(q-1)m-1$ or $(q-1)m-2$. This essentially guarantees that $\codespe$ is a subcode of the punctured second order Reed-Muller code or the punctured second order generalized Reed-Muller code.
\end{remark}

By Proposition~\ref{prop-cosetunion}, $\codespe$ is a disjoint union of cosets of $\PRM_{q}(1,m)$ and the coset representatives are of the form $(Q(x))_{x\in\Fqm^*}$, where $Q \in \cQ_1$ or $Q \in \cQ_2$. In order to determine the minimum distance of $\codespe$, it suffices to study the minimum distance of the subcode
$$
(Q(x))_{x \in \Fqm^*}+\PRM_q(1,m),
$$
for each quadratic form $Q \in \cQ_1$ or $Q \in \cQ_2$. As we will see in Propositions~\ref{prop-Qweightodd} and \ref{prop-Qweighteven}, the minimum distance of $(Q(x))_{x \in \Fqm^*}+\PRM_q(1,m)$ depends only on the rank and type of $Q$. Suppose there exists a quadratic form $Q \in \cQ_1$ or $Q \in \cQ_2$ having proper rank and type, such that $(Q(x))_{x \in \Fqm^*}+\PRM_q(1,m)$ contains a codeword with weight $\de_i$. Then the minimum distance is obtained. We note that determining the rank and type of an individual quadratic form is a challenging problem in general. Alternatively, we regard the set of quadratic forms $\cQ_1$ or $\cQ_2$ as a whole and study them in the context of association schemes. The property of $\cQ_1$ or $\cQ_2$, ensures that the desired quadratic form with proper rank and type, does belong to $\cQ_1$ or $\cQ_2$. The next section is devoted to background knowledge on symmetric bilinear forms, alternating bilinear forms and quadratic forms over finite fields, as well as their related association schemes.

\section{Association schemes of symmetric bilinear forms, alternating bilinear forms and quadratic forms over finite fields}\label{sec3}

\subsection{Association schemes}

Association scheme plays a central role in the study of algebraic combinatorics \cite{BI,Del73}. In this subsection, we briefly review some concepts related to association schemes.

Let $X$ be a finite set and $R_0, R_1,\ldots,R_n$ be a partition of $X \times X$. Suppose
\begin{itemize}
\item[1)] $R_0=\{(x,x) \mid x \in X\}$,
\item[2)] For each $1 \le i \le n$, there is a $1 \le j \le n$, such that $R_j=\{(y,x) \mid (x,y) \in R_i\}$,
\item[3)] Given $(x,z) \in R_k$, the number of $y$ such that $(x,y) \in R_i$ and $(y,z) \in R_j$ is defined to be $p_{i,j}^k$, which depends only on $i,j$ and $k$.
\end{itemize}
Then $(X,\{R_i\}_{i=0}^n)$ is called an $n$-class association scheme.

For $1 \le i \le n$, each subset $R_i$ gives rise to a digraph with vertex set $X$ in the natural way and the adjacency matrix of the digraph is denoted by $A_i$. Over the complex field, the $(n+1)$-dimensional vector space generated by $A_0, A_1, \ldots, A_n$ is indeed an algebra, which is called the Bose-Mesner algebra of the association scheme. This algebra has another basis formed by idempotent matrices $E_0, E_1, \ldots, E_n$. The transformation between these two bases can be expressed as
$$
A_i=\sum_{k=0}^nP_i(k)E_k \quad \mbox{and} \quad E_k=\frac{1}{|X|}\sum_{i=0}^nQ_{k}(i)A_i,
$$
where $P_i(k)$'s and $Q_{k}(i)$'s are called $P$-numbers and $Q$-numbers of the association scheme respectively.

Let $(X,\{R_i\}_{i=0}^n)$ be an $n$-class association scheme and $Y$ be a subset of $X$. The {\it inner distribution} of $Y$ is a sequence $(a_0,a_1,\ldots,a_n)$, where
$$
a_i:=a_i(Y)=\frac{|(Y\times Y) \cap R_i|}{|Y|}, \; 0 \le i \le n.
$$
Clearly, $a_0=1$ and $a_i$ is the average number of pairs in $Y \times Y$ belonging to $R_i$. The {\it dual inner distribution} of $Y$ is a sequence $(a_0^{\pr},a_1^{\pr},\ldots,a_n^{\pr})$, where
$$
a_k^{\pr}:=a_k^{\pr}(Y)=\sum_{i=0}^{n}a_iQ_k(i), \; 0 \le k \le n.
$$
If the set $X$ is a group, then a subset $Y$ of $X$ is {\it additive} if $Y$ is a subgroup of $X$.

%When $Y$ is an additive subset, the inner distribution
%$$
%a_i=|Y \cap C_i|, \; 0 \le i \le n.
%$$
%we have $a_0^{\pr}=|Y|$ (see \cite[Theorem 27]{DL} for instance).

\subsection{Symmetric bilinear forms over finite fields and related association schemes}

In this subsection, we introduce symmetric bilinear forms over finite fields and the related association schemes. Some crucial results concerning subsets in association scheme of symmetric bilinear forms are recorded. Let $V$ be an $n$-dimensional vector space over a finite field $\Fq$. A bilinear form $B$ on $V$ is a function from $V \times V$ to $\Fq$, satisfying:
\begin{itemize}
\item[1)] $B(x_1+x_2,y)=B(x_1,y)+B(x_2,y)$,
\item[2)] $B(\la x,y)=\la B(x,y)$,
\end{itemize}
for all $x,x_1,x_2,y \in V$ and $\la \in \Fq$. Furthermore, a bilinear form $B$ is called {\it symmetric} if $B(x,y)=B(y,x)$ for each $(x,y) \in V \times V$. Given a symmetric bilinear form $B$ on $V$, define the {\it radical} of $B$ as
$$
\Rad B:=\{y \in V \mid B(x,y)=0, \forall x \in V \}.
$$
$\Rad B$ is a vector space over $\Fq$ and the {\it rank} of symmetric bilinear form $B$ is defined by
$$
\rank(B)=n-\dim\Rad B.
$$

For the rest of this subsection, we always assume that $q$ is an odd prime power. We use $S(n,q)$ to denote the set of all symmetric bilinear forms on the vector space $V=\Fq^n$.  Let $\al_1,\al_2,\ldots,\al_n$ be a basis of $V$ over $\Fq$. Then each symmetric bilinear form $B \in S(n,q)$ corresponds to a symmetric matrix over $\Fq$:
$$
M_B=(B(\al_i,\al_j))_{i,j}.
$$
By definition, the rank of the symmetric bilinear form $B$ is equal to the rank of its associated symmetric matrix $M_B$. In particular, $B$ has rank $0$ if and only if $M_B$ is a zero matrix. Note that the correspondence between $B$ and $M_B$ depends on the specific choice of the basis. Suppose $B$ has rank $1 \le i \le n$, then there exists an $n \times n$ nonsingular matrix $L$, such that
\begin{equation}\label{eqn-stdform}
L^{T}M_BL=\begin{pmatrix}
  I_{i-1} & O & O \\
  O       & z & O \\
  O       & O & O
  \end{pmatrix}
\end{equation}
where $L^{T}$ is the transpose of $L$, $I_{i-1}$ is the identity matrix of order $i-1$, $O$ represents zero matrix with proper size and $z \in \{1,\theta\}$, in which $\theta$ is a nonsquare of $\Fq$ \cite[p. 237]{HW}. Namely, when $B$ has rank at least 1, $M_B$ is congruent to one of the two matrices with standard forms. From now on, when $q$ is odd, we use $\eta$ to denote the quadratic character of $\Fq$, with $\eta(0)=0$. For $1 \le i \le n$, $B$ has rank $r$ and type $\eta(z)$, if $M_B$ is congruent to the right hand side of (\ref{eqn-stdform}). Define $C_{i,\tau}$ to be the subset of $S(n,q)$, consisting of all symmetric bilinear forms having rank $i$ and type $\tau$. In addition, we define $C_{0,1}=C_{0,-1}$ to be the set consisting of the zero bilinear form. Hence, for $0 \le i \le n$ and $\tau \in \{1,-1\}$, we can define
$$
R_{i,\tau}=\{ (B_1,B_2) \in S(n,q)^2 \mid B_1-B_2 \in C_{i,\tau}\}.
$$
Then
$$
(S(n,q),\{R_{i,\tau} \mid 0 \le i \le n, \tau \in \{1,-1\}\})
$$
forms a $2n$-class association scheme \cite[Theorem 1]{HW}, which is denoted by $\Sym(n,q)$. Let $Y$ be a subset of $S(n,q)$. Let $(a_{i,\tau})$ be the inner distribution of $Y$, where
$$
a_{i,\tau}=\frac{|(Y\times Y) \cap R_{i,\tau}|}{|Y|}, 0 \le i \le n, \tau \in \{1,-1\}.
$$
In particular, if $Y$ is an additive subset, we have
$$
a_{i,\tau}=|Y \cap C_{i,\tau}|, 0 \le i \le n, \tau \in \{1,-1\}.
$$
$Y$ is called a {\it $d$-code} in $\Sym(n,q)$ if
$$
a_{i,1}=a_{i,-1}=0, 1 \le i \le d-1.
$$
$Y$ is called a {\it proper $d$-code} if it is a $d$-code and not a $(d+1)$-code.

For $0 \le i,k \le n$ and $\tau, \ep \in \{1,-1\}$, let $Q_{k,\ep}(i,\tau)$ be the $Q$-numbers of $\Sym(n,q)$. The dual inner distribution $(a_{k,\ep}^{\pr})$ of $Y$ is defined in the following way:
$$
a_{k,\ep}^{\pr}=\sum_{i,\tau}a_{i,\tau}Q_{k,\ep}(i,\tau).
$$
Then, $Y$ is called a {\it $t$-design} in $\Sym(n,q)$ if
$$
a_{k,1}^{\pr}=a_{k,-1}^{\pr}=0, 1 \le k \le t.
$$
Moreover, $Y$ is a $(2t+1,\ep)$-design if $Y$ is a $(2t+1)$-design and $a_{2t+2,\ep}^{\pr}=0$, where $\ep \in \{1,-1\}$. We note that the designs involved in this paper are not the usual $t$-designs studied in combinatorial design theory.

Define the $q^2$-binomial coefficient by
$$
\left[{n \atop k}\right]=\prod_{i=1}^k \frac{q^{2n-2i+2}-1}{q^{2i}-1}.
$$
A detailed treatment on $\Sym(n,q)$ was presented in \cite{Sch15}, in which the inner distribution of certain subsets of $\Sym(n,q)$ has been derived.

\begin{proposition}{\rm \cite[Theorem 3.9]{Sch15}}\label{prop-innerodd}
If $Y$ is a $(2l-1)$-code and a $(2n-2l+3)$-design in $\Sym(2n+1,q)$, then its inner distribution $(a_{i,\tau})$ satisfies
\begin{equation}
\begin{split}\label{eqn-innerdis1}
a_{2i-1,\tau}&=\frac12 \left[n \atop i-1 \right]\sum_{j=0}^{i-l}(-1)^jq^{j(j-1)}\left[i \atop j\right]\left(\frac{|Y|}{q^{(2n+1)(n+1+j-i)}}-1\right), \\
a_{2i,\tau}&=\frac12(q^{2i}+\tau\eta(-1)^iq^i)\left[n \atop i\right]\sum_{j=0}^{i-l}(-1)^jq^{j(j-1)}\left[i \atop j\right]\left(\frac{|Y|}{q^{(2n+1)(n+1+j-i)}}-1\right),
\end{split}
\end{equation}
for $i>0$.
\end{proposition}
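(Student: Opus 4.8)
The plan is to use Delsarte's eigenvalue (linear programming) method inside the association scheme $\Sym(2n+1,q)$ and then to solve the resulting linear system by a $q^2$-binomial inversion. The starting point is the explicit knowledge of the $P$-numbers $P_{i,\tau}(k,\ep)$ and $Q$-numbers $Q_{k,\ep}(i,\tau)$ of $\Sym(2n+1,q)$, computed in \cite{HW} and available in the refined form needed here in \cite{Sch15}. Two features of these eigenvalue matrices are essential. First, the type-dependence is transparent: for odd rank, $Q_{k,\ep}(2i-1,\tau)$ is independent of $\tau$, while for even rank, $Q_{k,\ep}(2i,\tau)$ has the shape $A_{k,\ep}(i)+\tau\,\eta(-1)^i B_{k,\ep}(i)$, and the two type-classes of a fixed even rank $2i$ have sizes carrying the factor $\tfrac12\big(q^{2i}\pm\eta(-1)^i q^i\big)$ that occurs in the statement. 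Second, the nonzero entries are values of $q^2$-Krawtchouk-type polynomials written through the $q^2$-binomials $\left[{i \atop j}\right]$, so that the pertinent square submatrices are triangular and are inverted by the standard Gaussian-binomial pair $f(i)=\sum_j \left[{i \atop j}\right] g(j)\iff g(i)=\sum_j(-1)^j q^{j(j-1)}\left[{i \atop j}\right] f(j)$.

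Next I would translate the hypotheses into linear constraints on the inner distribution $(a_{i,\tau})$ of $Y$. Being a $(2l-1)$-code forces $a_{i,\tau}=0$ for $1\le i\le 2l-2$, so the only a priori unknown entries are $a_0=1$ (known) and $a_{i,\tau}$ with $2l-1\le i\le 2n+1$; being a $(2n-2l+3)$-design forces $a^{\pr}_{k,\ep}=\sum_{i,\tau}a_{i,\tau}Q_{k,\ep}(i,\tau)=0$ for $1\le k\le 2n-2l+3$. Since $(2l-2)+(2n-2l+3)=2n+1$ equals the largest rank, this is the borderline case: the $2(2n-2l+3)$ homogeneous design equations match exactly the $2(2n-2l+3)$ unknown entries $a_{i,\tau}$, and because the pertinent submatrix of the $Q$-matrix is nonsingular (which follows from the triangular structure of the $q^2$-Krawtchouk polynomials), the inner distribution is uniquely determined once $|Y|=\sum_{i,\tau}a_{i,\tau}$ is fixed. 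Equivalently, a $(2l-1)$-code satisfies a Delsarte linear programming bound, and the design condition forces equality in it, which pins the distribution down; for additive $Y$ one could instead invoke the MacWilliams identity for the group $(S(2n+1,q),+)$, but the proposition concerns an arbitrary subset $Y$.

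Then it remains to identify the unique solution explicitly. Substituting the known $Q$-numbers into the equations $a^{\pr}_{k,\ep}=0$, using $a_{i,\tau}=0$ for small $i$, and passing to the sum and difference variables $a_{r,1}\pm a_{r,-1}$, the system splits according to the parity of the rank and the type-splitting of the even-rank $Q$-numbers into blocks whose coefficient matrices are built from the $q^2$-binomials $\left[{i \atop j}\right]$; applying the inversion above yields the alternating sums $\sum_{j=0}^{i-l}(-1)^j q^{j(j-1)}\left[{i \atop j}\right](\cdots)$ appearing in \eqref{eqn-innerdis1}. The inhomogeneous term, produced by $a_0=1$ together with the normalization $a^{\pr}_{0,1}=|Y|$, contributes precisely the factor $\frac{|Y|}{q^{(2n+1)(n+1+j-i)}}-1$ (note $q^{(2n+1)(n+1)}=|S(2n+1,q)|$, so this measures the density of $Y$), while the prefactors $\tfrac12\left[{n \atop i-1}\right]$ and $\tfrac12\big(q^{2i}+\tau\eta(-1)^i q^i\big)\left[{n \atop i}\right]$ come out of the Gaussian-binomial bookkeeping and the relative sizes of the two type-classes of rank $2i$; the odd-rank entries end up type-free, matching the stated form of $a_{2i-1,\tau}$.

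The conceptual skeleton is therefore short, and the main obstacle is not the strategy but its inputs and execution: one needs the explicit $P$- and $Q$-numbers of $\Sym(2n+1,q)$ together with their exact type-dependence, and then the $q$-series manipulations needed to recognize and carry out the inversion and to simplify the outcome to the closed form in \eqref{eqn-innerdis1} are rather delicate (one must also check nonsingularity of the truncated $Q$-matrix, so that the uniqueness claim genuinely holds). Since all of this is carried out in \cite[Theorem 3.9]{Sch15}, in the paper we simply cite that result.
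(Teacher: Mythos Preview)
Your proposal is correct and lands exactly where the paper does: this proposition is quoted from \cite[Theorem 3.9]{Sch15} and the paper gives no proof of its own, so your final sentence (``we simply cite that result'') matches the paper's treatment. The preceding sketch you give of Schmidt's argument---Delsarte's linear programming relations in $\Sym(2n+1,q)$, the code/design hypotheses forcing enough of the $a_{i,\tau}$ and $a'_{k,\ep}$ to vanish so that the remaining system is square, and the $q^2$-binomial inversion that produces the closed form---is an accurate summary of the method behind the cited theorem, but it goes beyond what the present paper actually contains.
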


\begin{proposition}{\rm \cite[Proposition 3.10]{Sch15}}\label{prop-innereven}
If $Y$ is a $(2l)$-code and a $(2n-2l+1)$-design in $\Sym(2n,q)$, then its inner distribution $(a_{i,\tau})$ satisfies
\begin{equation}
\begin{split}\label{eqn-innerdis2}
a_{2i-1,\tau}=&\frac12 (q^{2i}-1)\left[n \atop i\right]\sum_{j=0}^{i-l-1}(-1)^jq^{j(j-1)}\left[i-1 \atop j\right]\frac{|Y|q^{2j}}{q^{(2n+1)(n+1+j-i)}}, \\
a_{2i,\tau}=&\frac12 \left[n \atop i\right]\sum_{j=0}^{i-l}(-1)^jq^{j(j-1)}\left[i \atop j\right]\left(\frac{|Y|q^{2j}}{q^{(2n+1)(n+j-i)}}-1\right)\\
&+\frac{\tau}{2}\eta(-1)^iq^i \left[n \atop i\right]\sum_{j=0}^{i-l}(-1)^jq^{j(j-1)}\left[i \atop j\right]\left(\frac{|Y|}{q^{(2n-1)(n+j-i)}q^{2n}}-1\right)
\end{split}
\end{equation}
for $i>0$. If $Y$ is a $(2l)$-code and a $(2n-2l+1,\eta(-1)^{n-l+1})$-design in $\Sym(2n+1,q)$, then its inner distribution $(a_{i,\tau})$ satisfies
\begin{equation}
\begin{split}\label{eqn-innerdis3}
a_{2i-1,\tau}=&\frac12 \left[n \atop i-1\right]\sum_{j=0}^{i-l}(-1)^jq^{j(j-1)}\left[i \atop j\right]\left( \frac{|Y|}{q^{(2n+1)(n+1+j-i)}}-1\right)\\
&+\frac12 (-1)^{i-l}q^{(i-l)(i-l-1)}\left[n \atop l-1\right]\left(\frac{|Y|}{q^{(2n+1)(n-l+1)}}-1\right)\\
&\quad \left(\left[n-l \atop n-i+1\right](q^{n-l+1}+1)-\left[n-l+1 \atop n-i+1\right]\right), \\
a_{2i,\tau}=&\frac12 (q^{2i}+\tau\eta(-1)^iq^i) \left[n \atop i\right]\sum_{j=0}^{i-l}(-1)^jq^{j(j-1)}\left[i \atop j\right]\left(\frac{|Y|}{q^{(2n+1)(n+1+j-i)}}-1\right)\\
&+\frac{1}{2}(-1)^{i-l}q^{(i-l+1)(i-l)} \left[n \atop l-1\right]\left[n-l \atop n-i\right](q^{n-l+1}+1)\left(\frac{|Y|}{q^{(2n+1)(n-l+1)}}-1\right)
\end{split}
\end{equation}
for $i>0$.
\end{proposition}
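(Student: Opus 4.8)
The plan is to obtain both inner-distribution formulas from Delsarte's linear-programming framework applied to the schemes $\Sym(2n,q)$ and $\Sym(2n+1,q)$. For a subset $Y$, the inner distribution $(a_{i,\tau})$ and the dual inner distribution $(a_{k,\ep}^{\pr})$ are linked by $a_{k,\ep}^{\pr}=\sum_{i,\tau}a_{i,\tau}Q_{k,\ep}(i,\tau)$, together with the normalizations $a_{0,\tau}=1$ and $\sum_{i,\tau}a_{i,\tau}=|Y|$ coming from the $i=0$ and $k=0$ rows. The hypothesis that $Y$ is a $(2l)$-code forces $a_{i,\tau}=0$ for $1\le i\le 2l-1$, leaving as unknowns the entries $a_{i,\tau}$ with $2l\le i\le 2n$ (resp.\ $2l\le i\le 2n+1$); the hypothesis that $Y$ is a $(2n-2l+1)$-design (resp.\ a $(2n-2l+1,\eta(-1)^{n-l+1})$-design) forces a block of dual coordinates $a_{k,\ep}^{\pr}$ to vanish. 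Counting the unknowns against these vanishing conditions plus the normalizations shows that the resulting linear system is square, so the inner distribution of $Y$ is uniquely determined and is recovered by inverting the relevant sub-block of the eigenmatrix of the ambient scheme.

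To turn this into the explicit sums one needs the $Q$-numbers $Q_{k,\ep}(i,\tau)$ of $\Sym(2n,q)$ and $\Sym(2n+1,q)$ in closed form. These are classical: the full eigenmatrix of the symmetric bilinear forms scheme was determined in \cite{HW}, and its entries are explicit polynomial expressions in the $q^2$-binomial coefficients $\left[{n \atop k}\right]$ of Krawtchouk type; the dependence of $Q_{k,\ep}(i,\tau)$ on $\tau$ and $\ep$ is carried only by summands proportional to $\tau\eta(-1)^{i}q^{i}$ and $\ep\eta(-1)^{k}q^{k}$, tied to the type refinement of the nondegenerate classes. Substituting these expressions into the square system and solving it --- essentially a $q$-analogue of binomial (M\"obius) inversion over the chain of ranks --- yields the stated formulas. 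That the $\tau$-dependence survives only in the even-rank entries $a_{2i,\tau}$, in the shape of the correction $\frac{\tau}{2}\eta(-1)^{i}q^{i}(\cdots)$, while $a_{2i-1,\tau}$ comes out type-free, is a consequence of how these $Q$-numbers enter the inversion.

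Both assertions run along the same lines, once with the eigenmatrix of $\Sym(2n,q)$ and once with that of $\Sym(2n+1,q)$. In the odd-dimensional case the bare $(2n-2l+1)$-design condition leaves the system one equation short of being determined, which is exactly why the stronger $(2n-2l+1,\eta(-1)^{n-l+1})$-design hypothesis is imposed: it kills the one further dual coordinate $a_{2n-2l+2,\eta(-1)^{n-l+1}}^{\pr}$, supplying the missing equation, and the residual contribution of that single extra constraint is what accounts for the additional terms in (\ref{eqn-innerdis3}) carrying the factors $(-1)^{i-l}$, $q^{(i-l)(i-l-1)}$ or $q^{(i-l+1)(i-l)}$ and the $q^2$-binomials $\left[{n-l \atop n-i+1}\right]$, $\left[{n-l+1 \atop n-i+1}\right]$, $\left[{n-l \atop n-i}\right]$.

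The main obstacle is this final step: extracting the compact closed forms from the inverted system demands genuinely delicate manipulation of alternating sums of $q^2$-binomial coefficients, and keeping careful track of the normalizing constants, the signs and the powers of $q$ through the inversion is where the real work lies. An alternative that avoids carrying out the inversion from scratch is to \emph{verify} the stated formulas a posteriori: one checks that they are nonnegative, that together with $a_{0,\tau}=1$ they sum to $|Y|$, and that, transformed by the known $Q$-numbers, they make precisely the prescribed dual coordinates vanish; uniqueness of the solution of the square system then forces them to be the inner distribution of $Y$. This route still rests on the same $q$-binomial identities, so it does not remove the technical core of the argument.
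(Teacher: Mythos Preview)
The paper does not prove this proposition at all: it is quoted verbatim from Schmidt \cite[Proposition~3.10]{Sch15} and used as a black box. So there is no ``paper's own proof'' to compare against; the relevant comparison is with Schmidt's argument.

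Your outline is the right \emph{shape}: the code and design hypotheses pin down enough linear constraints on $(a_{i,\tau})$ that the system is determined, and the closed forms come from inverting the relevant block of the eigenmatrix. You also correctly identify why the extra $\eta(-1)^{n-l+1}$ condition is needed in the $\Sym(2n+1,q)$ case. However, two points deserve comment. First, your claim that the full $Q$-numbers of $\Sym(m,q)$ for odd $q$ are ``classical'' from \cite{HW} overstates matters: Huo and Wan set up the scheme, but the explicit eigenvalues in a form amenable to this inversion are developed in \cite{Sch15} itself, where Schmidt works not with the raw $Q$-numbers but with an auxiliary basis (his functions $F_{r,s}$ and the transformed distribution) that diagonalizes the problem and makes the $q$-binomial inversion clean. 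Without that change of basis, the ``delicate manipulation'' you flag is not merely tedious but genuinely hard to carry through. Second, and more importantly, you openly concede that the core computation --- solving the system and arriving at the displayed sums --- is not done: you describe it as ``where the real work lies'' and offer an a posteriori verification as an alternative, which you then also do not execute. As it stands, then, this is a correct strategic summary rather than a proof; to turn it into one you would need either to reproduce Schmidt's basis-change machinery or to actually perform the verification you sketch.
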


The above two propositions play important roles in our analysis of the BCH code $\codespe$, when $q$ is odd.

\subsection{Alternating bilinear forms over finite fields and related association schemes}

In this subsection, we introduce alternating bilinear forms over finite fields and related association schemes. A result concerning subsets in the association scheme of alternating bilinear forms is also recalled. Let $V$ be an $n$-dimensional vector space over $\Fq$. A bilinear form $B$ on $V$ is called {\it alternating} if $B(x,x)=0$ for each $x \in V$. We use $A(n,q)$ to denote the set of all alternating bilinear forms over $V=\Fq^n$. Note that if the ground finite field $\Fq$ has even characteristic, an alternating bilinear form is necessarily symmetric. Indeed, since $B$ is alternating, then we have $B(x+y,x+y)=0$ for each $x,y \in V$. Thus, we have $B(x,y)+B(y,x)=0$ for each $x,y \in V$, which implies $B$ is symmetric when $\Fq$ has even characteristic. In the rest of this subsection, we always assume that $q$ is an even prime power.

Let $\al_1,\al_2,\ldots,\al_n$ be a basis of $V$ over $\Fq$. Then each alternating bilinear form $B \in A(n,q)$ corresponds to a skew-symmetric matrix over $\Fq$:
$$
M_B=(B(\al_i,\al_j))_{i,j}.
$$
Since $B$ is also symmetric, its radical and rank have been be defined in the last subsection. By definition, the rank of the alternating bilinear form $B$ is equal to the rank of its associated skew-symmetric matrix $M_B$, which is necessarily even.
For $0 \le i \le \lhn$, define $C_{2i}$ to be the subset of $A(n,q)$, consisting of alternating bilinear forms having rank $2i$. Hence, for $0 \le i \le \lhn$, we can define
$$
R_{2i}=\{(B_1,B_2) \in A(n,q)^2 \mid B_1-B_2 \in C_{2i}\}.
$$
Then,
$$
(A(n,q),\{R_{2i} \mid 0 \le i \le \lhn\})
$$
forms a $\lhn$-class association scheme \cite[p. 28]{DG}, denoted by $\Alt(n,q)$. Let $Y$ be a subset of $A(n,q)$. Let $(b_{2i})$ be the inner distribution of $Y$, where
$$
b_{2i}=\frac{|(Y\times Y) \cap R_{2i}|}{|Y|}, \; 0 \le i \le \lhn.
$$
In particular, if $Y$ is an additive subset,
$$
b_{2i}=|Y \cap C_{2i}|, \; 0 \le i \le \lhn.
$$
$Y$ is called a {\it $2d$-code} in $\Alt(n,q)$ if
$$
b_{2i}=0, \; 1 \le i \le d-1.
$$
$Y$ is called a {\it proper $2d$-code} if it is a $2d$-code and not a $(2d+2)$-code. The following proposition gives an upper bound on the size of $2d$-code in $\Alt(n,q)$.

\begin{proposition}{\rm \cite[Theorem 4]{DG}}\label{prop-bound}
For $0 \le d \le \lhn$, let $Y$ be a $2d$-code in $\Alt(n,q)$. Then,
$$
|Y|\le\begin{cases}
  q^{n(\frac{n+1}{2}-d)} & \mbox{if $n$ is odd,} \\
  q^{(n-1)(\frac{n+2}{2}-d)} & \mbox{if $n$ is even.}
\end{cases}
$$
\end{proposition}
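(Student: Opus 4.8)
The statement is a Singleton-type upper bound on $2d$-codes in the association scheme $\Alt(n,q)$, and the plan is to derive it from Delsarte's linear programming bound. Abbreviate $N=\lhn$ and index the $i$-th class of $\Alt(n,q)$ by the rank $2i$ of the difference, so that a subset $Z\subseteq A(n,q)$ has an inner distribution $(b_0,\dots,b_N)$ with $b_0=1$ and $\sum_i b_i=|Z|$, and a $2d$-code $Y$ satisfies $b_i=0$ for $1\le i\le d-1$; since $\Alt(n,q)$ is a translation scheme one may assume $0\in Y$. Two ingredients are required: the explicit $P$-numbers (equivalently $Q$-numbers) of $\Alt(n,q)$, which Delsarte and Goethals compute in terms of $q$-binomial coefficients and which is exactly where the parity of $n$ enters; and the MacWilliams-type inequalities $\sum_i b_i Q_k(i)\ge 0$ for all $k$, valid because each minimal idempotent $E_k$ of the Bose--Mesner algebra is positive semidefinite.

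As a warm-up, in the extreme case $n=2\nu$, $d=\nu$ (all nonzero differences nonsingular) there is a direct argument: for a fixed nonzero $v$ the map $B\mapsto B(v,\cdot)$ sends $Y$ injectively into $\{f\in(\Fq^n)^*:f(v)=0\}\cong\Fq^{n-1}$, because $B_1(v,\cdot)=B_2(v,\cdot)$ forces $v\in\Rad(B_1-B_2)$ and hence $B_1=B_2$, which yields $|Y|\le q^{n-1}$ as claimed. For the general case the plan is to exhibit an explicit feasible dual certificate: a function $F$ on $\{0,1,\dots,N\}$ whose expansion $F=\sum_k f_k P_k$ in the eigenvalue sequences has $f_0>0$ and $f_k\ge 0$ for all $k$, and with $F(i)\le 0$ for every $i$ with $d\le i\le N$. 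The standard pairing argument then gives $|Y|\le F(0)/f_0$: expanding $\sum_i b_i F(i)$ in the $b_i$'s gives $\le F(0)$ since $b_i=0$ for $1\le i\le d-1$ and $F(i)\le 0$ on the remaining classes, while expanding it through the $Q$-numbers gives $\ge 0$. One takes $F$ to be a $q$-analogue of the degree-$(N-d+1)$ polynomial vanishing at $d,d+1,\dots,N$, normalised so that $F(0)/f_0$ collapses to $q^{n(\frac{n+1}{2}-d)}$ for odd $n$ and to $q^{(n-1)(\frac{n+2}{2}-d)}$ for even $n$.

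The main obstacle is verifying the positivity $f_k\ge 0$ of the transform of this $F$, which amounts to a $q$-binomial identity tailored to the eigenvalues of $\Alt(n,q)$. It is worth emphasising that a purely combinatorial substitute does not suffice: bounding $|Y|$ by $|A(n,q)|$ divided by the size of the largest ``anticode'' (a set of alternating forms whose pairwise differences have rank at most $2(d-1)$) gives only $|Y|\le q^{\binom{n}{2}}/q^{(d-1)n-\binom{d}{2}}$, which overshoots the target by a factor $q^{\binom{d}{2}}$ when $n$ is odd and $q^{\binom{d-1}{2}}$ when $n$ is even; the anticode bound therefore equals the stated one only in small cases (for instance $n$ even with $d\le 2$) and is strictly weaker otherwise. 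This is precisely why the full linear-programming argument, leaning on the explicit eigenvalue formulae of Delsarte and Goethals, is genuinely needed, and the resulting bound is known to be sharp.
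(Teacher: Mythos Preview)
The paper does not prove this proposition at all: it is quoted verbatim from Delsarte and Goethals \cite[Theorem~4]{DG} and used as a black box, so there is no in-paper proof to compare against. Your outline is in fact the approach of \cite{DG}: set up the Singleton-type bound as an instance of Delsarte's linear programming bound, feed in the explicit $P$- (equivalently $Q$-) eigenvalues of $\Alt(n,q)$, and exhibit a dual polynomial certificate.

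That said, what you have written is a plan, not a proof, and the gap is exactly the step you yourself flag as ``the main obstacle.'' You never write down the candidate function $F$ concretely, and you never verify that its transform coefficients $f_k$ are nonnegative. Saying that $F$ should be ``a $q$-analogue of the degree-$(N-d+1)$ polynomial vanishing at $d,d+1,\dots,N$'' is not enough: in the alternating-forms scheme the eigenvalues are not the Krawtchouk polynomials but the $q$-analogues computed in \cite{DG}, and the positivity of the dual coefficients is a genuine identity in $q$-binomials that has to be checked (this is precisely the content of the cited theorem). Your warm-up for $n=2\nu$, $d=\nu$ is correct and pleasant, and your remark that the naive anticode bound is too weak by a factor of roughly $q^{\binom{d}{2}}$ is a good sanity check, but neither substitutes for the missing computation. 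If you want to turn this into a self-contained proof, you need to (i) quote or rederive the eigenvalue formulas of $\Alt(n,q)$ from \cite{DG}, (ii) specify $F$ explicitly, and (iii) carry out the nonnegativity verification; otherwise, simply citing \cite[Theorem~4]{DG} as the present paper does is the honest option.
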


The above proposition plays a crucial role in our analysis of the BCH code $\codespe$, when $q$ is even.

\subsection{Quadratic forms over finite fields and related association schemes}\label{sec3D}

Let $V$ be an $n$-dimensional vector space over $\Fq$. A quadratic form $Q$ on $V$ is defined to be a function from $V$ to $\Fq$, satisfying:
\begin{itemize}
\item[1)] $Q(\la x)=\la^2Q(x)$,
\item[2)] When $q$ is odd, $Q(x+y)=Q(x)+Q(y)+2B_Q(x,y)$,
\item[3)] When $q$ is even, $Q(x+y)=Q(x)+Q(y)+B_Q(x,y)$,
\end{itemize}
for all $x,y \in V$ and $\la \in \Fq$, where $B_Q$ is a symmetric bilinear form on $V$ associated with $Q$.

\begin{remark}
Usually, the bilinear form $B_Q$ associated with the quadratic form $Q$ is introduced by $Q(x+y)=Q(x)+Q(y)+B_Q(x,y)$, whenever $q$ is odd or even. The definition above makes a modification when $q$ is odd. On one hand, this modification brings no essential difference since $2$ has a multiplicative inverse in $\Fq$ when $q$ is odd. On the other hand, this modification makes it easier to show that $B_Q$ has the same rank and type as $Q$, which will be demonstrated in Proposition~\ref{prop-innerdis}.
\end{remark}

\begin{remark}\label{rem-quadbili}
We remark that the relation between $Q$ and the associated symmetric bilinear form $B_Q$ depends heavily on $q$ being odd or even. When $q$ is odd, we have $Q(x)=B_Q(x,x)$, for all $x \in V$. Hence, $Q$ can be uniquely recovered from $B_Q$, which means $Q$ and $B_Q$ essentially carry the same information. When $q$ is even, we have $B_Q(x,x)=0$ for all $x \in V$. Therefore, the bilinear form $B_Q$ is not only symmetric, but also alternating. Define $\cH$ to be the set $\{ Q \in A(n,q) \mid Q(x+y)=Q(x)+Q(y), \forall x,y \in V\}$. Then each $B_Q$ corresponds to a coset $Q+\cH$, which means we cannot recover $Q$ from $B_Q$.
\end{remark}

Define the {\it radical} of $Q$ as
$$
\Rad Q = Q^{-1}(0) \cap \Rad B_Q.
$$
$\Rad Q$ is a vector space over $\Fq$ and the {\it rank} of quadratic form $Q$ is
$$
\rank(Q)=n-\dim \Rad Q.
$$
We are going to present the relation between $\rank(Q)$ and $\rank(B_Q)$. Remark~\ref{rem-quadbili} suggests this relation is quite different when $q$ is odd or even.

\begin{lemma}\label{lem-rank}
Let $Q$ be a quadratic form and $B_Q$ be the associated bilinear form. Then we have
\begin{equation*}
\begin{cases}
  \rank(Q)=\rank(B_Q) & \mbox{if $q$ is odd,} \\
  \rank(B_Q) \le \rank(Q) \le \rank(B_Q)+1 & \mbox{if $q$ is even.}
\end{cases}
\end{equation*}
\end{lemma}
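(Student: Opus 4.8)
The plan is to work directly from the definition $\Rad Q = Q^{-1}(0) \cap \Rad B_Q$, comparing it with $\Rad B_Q$, since $\rank(Q) = n - \dim\Rad Q$ and $\rank(B_Q) = n - \dim\Rad B_Q$. Because $\Rad Q \subseteq \Rad B_Q$ always holds by definition, we immediately get $\rank(Q) \ge \rank(B_Q)$ in both parities; the entire content of the lemma is the reverse inequality (exact equality when $q$ is odd, off-by-at-most-one when $q$ is even). So the real task is to understand how much smaller $\Rad Q$ can be than $\Rad B_Q$, i.e.\ to analyze the restriction of $Q$ to the subspace $\Rad B_Q$.

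First I would observe that $Q$ restricted to $W := \Rad B_Q$ is \emph{additive}: for $x,y \in W$ we have $B_Q(x,y) = 0$, so by property 3) of the definition of quadratic form (the even case) $Q(x+y) = Q(x) + Q(y)$; in the odd case property 2) gives $Q(x+y) = Q(x)+Q(y)+2B_Q(x,y) = Q(x)+Q(y)$ as well. Hence $Q|_W : W \to \Fq$ is an $\F_p$-linear (indeed additive) map, and $\Rad Q = \Ker(Q|_W)$. Now split into cases. When $q$ is odd, Remark~\ref{rem-quadbili} gives $Q(x) = B_Q(x,x)$ for all $x \in V$; in particular for $x \in W = \Rad B_Q$ we have $B_Q(x,x) = 0$, so $Q|_W \equiv 0$, whence $\Rad Q = W = \Rad B_Q$ and $\rank(Q) = \rank(B_Q)$. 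When $q$ is even, $Q|_W$ is an additive map from $W$ to $\Fq$; using property 1), $Q(\la x) = \la^2 Q(x)$, and since squaring is a bijection on $\Fq$ in even characteristic, the image of $Q|_W$ is an $\Fq$-subspace of $\Fq$ (one checks $Q(\la x) + Q(\mu y)$-type combinations land back in the image), hence is either $\{0\}$ or all of $\Fq$. In the first case $\Rad Q = W$ and $\rank(Q) = \rank(B_Q)$; in the second case $Q|_W$ is surjective onto $\Fq$, so $\Ker(Q|_W)$ has codimension $1$ in $W$, giving $\dim\Rad Q = \dim\Rad B_Q - 1$ and hence $\rank(Q) = \rank(B_Q) + 1$. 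Either way $\rank(B_Q) \le \rank(Q) \le \rank(B_Q)+1$.

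The main obstacle, such as it is, lies in the even case: one must justify carefully that the additive map $Q|_W$ has image equal to an $\Fq$-subspace of $\Fq$ — this is what forces the two-valued dichotomy and rules out intermediate drops in dimension. The key point is that for $x,y \in W$, since $W$ is a subspace and $B_Q$ vanishes on $W$, the value $Q(\la x + \mu y) = \la^2 Q(x) + \mu^2 Q(y)$ for all $\la,\mu \in \Fq$; as $\la \mapsto \la^2$ is an automorphism of $\Fq$, this shows the image of $Q|_W$ is closed under addition and under multiplication by arbitrary elements of $\Fq$ (rescaling $x$), hence is an $\Fq$-subspace of the one-dimensional space $\Fq$, so it is $\{0\}$ or $\Fq$. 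Once this structural fact is in hand, the rank count is immediate, and the odd case is genuinely routine given Remark~\ref{rem-quadbili}.
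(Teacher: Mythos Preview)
Your proof is correct and follows essentially the same route as the paper's own proof: both restrict $Q$ to $W=\Rad B_Q$, use $Q(x)=B_Q(x,x)$ (Remark~\ref{rem-quadbili}) to conclude $Q|_W\equiv 0$ in the odd case, and in the even case observe that $Q|_W$ is an additive map whose image, thanks to $Q(\la x)=\la^2 Q(x)$ with squaring bijective on $\Fq$, is either $\{0\}$ or all of $\Fq$, forcing $\dim\Rad Q\in\{\dim W,\,\dim W-1\}$. Your write-up is slightly more explicit than the paper's about why the image is an $\Fq$-subspace and why the kernel has $\Fq$-codimension one, but the argument is the same.
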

\begin{proof}
If $q$ is odd, for each $x \in \Rad B_Q$, we have $Q(x)=B_Q(x,x)=0$. Thus, $\Rad B_Q \subset Q^{-1}(0)$ and $\Rad Q=\Rad B_Q$. This implies $\rank(Q)=\rank(B_Q)$ when $q$ is odd.

If $q$ is even, we consider the mapping
\begin{align*}
Q|_{\Rad B_Q}: \Rad B_Q & \rightarrow \Fq \\
                x   & \rightarrow Q(x)
\end{align*}
which is a group homomorphism. Note that if $x \in \Rad B_Q$, then $\la x \in \Rad B_Q$ and $Q(\la x)=\la^2Q(x)$ for each $\la \in \Fq$. Then $Q|_{\Rad B_Q}$ is either a zero or a surjective homomorphism. If $Q|_{\Rad B_Q}$ is zero, then $\Rad B_Q \subset Q^{-1}(0)$, which implies $\Rad Q = \Rad B_Q$ and $\rank(Q)=\rank(B_Q)$. If $Q|_{\Rad B_Q}$ is surjective, then $\Rad B_Q/(Q^{-1}(0) \cap \Rad B_Q)$ is a one-dimensional vector space over $\Fq$, which means $\dim \Rad Q = \dim\Rad B_Q-1$ and $\rank(Q)=\rank(B_Q)+1$.
\end{proof}

Let $Q$ be a quadratic form on $V$ and let $x=(x_1,x_2,\ldots,x_n) \in V$ be a column vector. When $q$ is odd, $Q$ can be uniquely expressed in the following form:
$$
Q(x)=\sum_{i,j=1}^n c_{ij}x_ix_j,
$$
with $c_{ij} \in \Fq$ and $c_{ij}=c_{ji}$, $1 \le i,j \le n$. In this case, $Q$ corresponds to an $n \times n$ symmetric matrix $C=(c_{ij})$ over $\Fq$. When $q$ is even, $Q$ can be uniquely expressed in the following form:
$$
Q(x)=\sum_{1 \le i \le j \le n} c_{ij}x_ix_j,
$$
with $c_{ij} \in \Fq$ for $1 \le i,j \le n$. In this case, $Q$ corresponds to an $n \times n$ upper triangular matrix $C=(c_{ij})$ over $\Fq$. Therefore, given a quadratic from $Q$, we call the corresponding matrix $C$ as the {\it coefficient matrix} of $Q$ and we have
$$
Q(x)=x^{T}Cx, \forall x \in V,
$$
where $x^{T}$ is the transpose of the column vector $x$.

\begin{definition}
Let $V$ be an $n$-dimensional vector space over $\Fq$. Let $Q$ and $Q^{\pr}$ be two quadratic forms on $V$, with coefficient matrix $C$ and $C^{\pr}$, respectively. Then $Q$ and $Q^{\pr}$ are equivalent, if there is an $n \times n$ nonsingular matrix $D$ over $\Fq$, such that $C^{\pr}=D^{T}CD$, where $D^{T}$ is the transpose of $D$.
\end{definition}

Note that for $x=(x_1,x_2,\ldots,x_n) \in V$, $Q(x)=x^TCx$ and $Q^{\pr}(x)=x^TC^{\pr}x=(Dx)^{T}C(Dx)$. Thus, $Q$ and $Q^{\pr}$ are equivalent if and only if $Q^{\pr}$ can be obtained from $Q$, by applying a nonsingular linear transformation on the variables $x_1,x_2,\ldots,x_n$. The next proposition says each quadratic form over finite field is equivalent to one of the following canonical forms.

\begin{proposition}{\rm \cite[Theorems 6.21 and 6.30]{LN}}\label{prop-equivalence}
Let $Q$ be a quadratic form on an $n$-dimensional vector space $V$ over $\Fq$.
\begin{itemize}
\item[1)] Let $q$ be an odd prime power. Then $Q$ is equivalent to the following quadratic form
$$
\sum_{j=1}^r a_jx_j^2, \quad \mbox{with $a_j \in \Fq^*$ and $\eta(\prod_{j=1}^r a_j)=\tau$},
$$
where $0 \le r \le n$ and $\tau \in \{1,-1\}$. In this case, we say that $Q$ has rank $r$ and type $\tau$.
\item[2)] Let $q$ be an even prime power. When $Q$ has odd rank, $Q$ is equivalent to
$$
\sum_{j=1}^r x_{2j-1}x_{2j}+x_{2r+1}^2.
$$
In this case, we say that $Q$ has rank $2r+1$ and type $1$. When $Q$ has even rank, $Q$ is either equivalent to
$$
\sum_{j=1}^r x_{2j-1}x_{2j},
$$
or equivalent to
$$
\sum_{j=1}^{r} x_{2j-1}x_{2j}+x_{2r-1}^2+\la x_{2r}^2, \Tr^q_2(\la)=1.
$$
We say $Q$ has rank $2r$ and type $0$ in the former case and has rank $2r$ and type $2$ in the latter one.
\end{itemize}
\end{proposition}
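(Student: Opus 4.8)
The plan is to reduce $Q$ to a canonical shape by a sequence of nonsingular changes of variable and then normalize the surviving coefficients, treating odd and even $q$ separately: for odd $q$ I pass through the associated symmetric bilinear form and diagonalize, while for even $q$ I work with $Q$ and its alternating form $B_Q$ directly, peeling off hyperbolic planes and tracking the Arf invariant.

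\textbf{Odd $q$.} Since $2\in\Fq^*$, the forms $Q$ and $B_Q$ determine each other via $Q(x)=B_Q(x,x)$, and a change of variable $D$ acts on the Gram matrix $M_{B_Q}$ by $M\mapsto D^TMD$. If $Q\not\equiv 0$ there is a $v$ with $Q(v)=B_Q(v,v)\ne 0$ — otherwise polarization forces $B_Q=0$ — so one may split $V=\langle v\rangle\oplus\langle v\rangle^{\perp}$ orthogonally with respect to $B_Q$ and induct on $\langle v\rangle^{\perp}$, obtaining $Q\sim\sum_{j=1}^r a_jx_j^2$ with $a_j\in\Fq^*$ and $r=\rank(Q)$. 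Now $a_jx_j^2\sim a_j'x_j^2$ exactly when $a_j/a_j'$ is a square, so each $a_j$ can be rescaled within its square class; to finish I collapse the first $r-1$ coefficients to $1$ by repeatedly invoking that any nondegenerate binary form $a_1x_1^2+a_2x_2^2$ over $\Fq$ is universal (the sets $\{a_1u^2\}$ and $\{1-a_2v^2\}$ each have $(q+1)/2$ elements, hence meet), so after a change of basis it becomes $x_1^2+a_1a_2x_2^2$, the discriminant being a congruence invariant modulo squares. Iterating gives $Q\sim x_1^2+\cdots+x_{r-1}^2+zx_r^2$ with $\eta(z)=\eta(\prod_j a_j)=:\tau$, which is 1).

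\textbf{Even $q$.} Here $B_Q$ is alternating, so $\rank(B_Q)=2r$ is even; write $V=\Rad B_Q\oplus W$ for a vector-space complement $W$. As $B_Q$ vanishes on $\Rad B_Q$ the cross terms disappear and $Q=Q|_W\perp Q|_{\Rad B_Q}$ with $B_Q|_W$ nondegenerate. On $W$ I build a symplectic basis adapted to $Q$: given $e\ne0$ pick $f$ with $B_Q(e,f)=1$, so on $\langle e,f\rangle$ one has $Q=Q(e)x^2+xy+Q(f)y^2$; replacing $e$ by $e+sf$ changes $Q(e)$ into $Q(e)+s+s^2Q(f)$, and this can be made $0$ for some $s\in\Fq$ exactly when $\Tr^q_2(Q(e)Q(f))=0$, in which case a symmetric move (replacing $f$ by $f+tе$) also kills $Q(f)$ and the plane becomes $xy$; otherwise the plane is equivalent to $x^2+xy+\gamma y^2$ with $\Tr^q_2(\gamma)=1$. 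Passing to $\langle e,f\rangle^{\perp}$ and inducting decomposes $Q|_W$ into such planes, and since the Arf invariant is additive over orthogonal sums while two elliptic planes together have the same Arf invariant as two hyperbolic ones, $Q|_W$ is equivalent to $\sum_{j=1}^r x_{2j-1}x_{2j}$ or to $\sum_{j=1}^r x_{2j-1}x_{2j}+x_{2r-1}^2+\la x_{2r}^2$ with $\Tr^q_2(\la)=1$. Finally $Q|_{\Rad B_Q}$ is additive with $Q(\la x)=\la^2Q(x)$, so $x\mapsto Q(x)^{1/2}$ is $\Fq$-linear on $\Rad B_Q$ with kernel $\Rad Q$ of codimension $0$ or $1$ (recovering Lemma~\ref{lem-rank}); if the codimension is $1$, rescaling a vector with $Q(v)\ne0$ produces a summand $x^2$, and (as one checks directly in three variables) an elliptic plane plus a perpendicular square is equivalent to a hyperbolic plane plus that square, so $Q\sim\sum_{j=1}^r x_{2j-1}x_{2j}+x_{2r+1}^2$ of rank $2r+1$ and type $1$; if the codimension is $0$, the radical contributes nothing and $Q$ has even rank $2r$ and type $0$ or $2$ according to the Arf invariant. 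This is 2).

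\textbf{Main obstacle.} The delicate point is the even-characteristic claim that the two even-rank canonical forms are genuinely inequivalent, so that the ``type'' is well defined, together with the fact that the Arf invariant is a complete invariant in a fixed even rank — which is what makes the inductive peeling-off conclusive. Inequivalence I would prove by counting zeros: for rank $2$, $xy$ has $2q-1$ zeros while $x^2+xy+\la y^2$ with $\Tr^q_2(\la)=1$ is anisotropic with only the trivial zero, and in general the number of zeros of a nondegenerate form of rank $2r$ differs between the two types; since this count is a similarity invariant, the forms cannot be equivalent. Verifying the additivity of the Arf invariant and that precisely these classes occur is the remaining bookkeeping.
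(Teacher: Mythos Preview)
The paper does not give its own proof of this proposition: it is stated with the citation \cite[Theorems 6.21 and 6.30]{LN} and used as a black box, so there is no argument in the paper to compare against. Your sketch is a correct outline of the classical classification --- diagonalization via the associated symmetric bilinear form and discriminant normalization for odd $q$, symplectic reduction plus the Arf invariant for even $q$ --- and is essentially the route taken in Lidl--Niederreiter, so nothing is missing relative to what the paper relies on.
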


Define a function $\ups$ on $\Fq$ as
$$
\ups(x)=\begin{cases}
  -1 & \mbox{if $x \in \Fq^*$,} \\
  q-1 & \mbox{if $x=0$.}
\end{cases}
$$
The following two propositions concern the number of solutions to quadratic equations over finite fields, which are direct consequences of \cite[Theorems 6.26, 6.27 and 6.32]{LN}.

\begin{proposition}\label{prop-numsoluodd}
Let $q$ be an odd prime power and $Q$ be a quadratic form on $\Fq^n$ with rank $r$ and type $\tau$. Then for $h \in \Fq$, the number of solutions to the equation $Q(x)=h$ is
$$
N(h)=\begin{cases}
  q^{n-1}+\tau\eta(-1)^{\frac{r-1}{2}}\eta(h)q^{n-\frac{r+1}{2}} & \mbox{if $r$ is odd,} \\
  q^{n-1}+\tau\eta(-1)^{\frac{r}{2}}\ups(h)q^{n-\frac{r+2}{2}} & \mbox{if $r$ is even.} \\
\end{cases}
$$
\end{proposition}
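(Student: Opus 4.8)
The plan is to reduce $Q$ to its diagonal canonical form, strip off the variables that do not occur, and then count the solutions of the resulting diagonal quadratic equation. First I would apply Proposition~\ref{prop-equivalence}: since $q$ is odd, there is a nonsingular linear change of variables taking $Q$ to $\sum_{j=1}^{r}a_jx_j^2$ with $a_j\in\Fq^*$ and $\eta\big(\prod_{j=1}^{r}a_j\big)=\tau$. Such a change of variables is a bijection on $\Fq^n$, hence preserves $N(h)=|\{x\in\Fq^n:Q(x)=h\}|$. In the diagonal form the coordinates $x_{r+1},\dots,x_n$ are absent, so each of the $q^{n-r}$ choices of these coordinates contributes equally and
$$
N(h)=q^{\,n-r}M(h),\qquad M(h)=\big|\{(x_1,\dots,x_r)\in\Fq^r:\textstyle\sum_{j=1}^{r}a_jx_j^2=h\}\big|.
$$
It therefore suffices to evaluate $M(h)$, which is the classical count of solutions of a diagonal quadratic equation over $\Fq$.

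For $M(h)$ I would argue by induction on $r$, peeling off one or two squares at a time. The base cases are $r=1$, where $a_1x_1^2=h$ has $1+\eta(a_1)\eta(h)$ solutions uniformly in $h$ (using $\eta(0)=0$), and $r=2$, where writing $M(h)=\sum_{u\in\Fq}\big(1+\eta(a_1u)\big)\big(1+\eta(a_2(h-u))\big)$ and evaluating the character sums --- in particular $\sum_{u\in\Fq}\eta(u)\eta(h-u)$ equals $-\eta(-1)$ for $h\neq0$ and $(q-1)\eta(-1)$ for $h=0$ --- gives $M(h)=q+\tau\eta(-1)\ups(h)$. For the inductive step one uses the convolution $M_r(h)=\sum_{c\in\Fq}M_{r-2}(h-c)M'(c)$, where $M'(c)$ is the rank-two count for $a_{r-1}x_{r-1}^2+a_rx_r^2=c$; substituting and collapsing the sums yields the stated expression, the parity of $r$ deciding whether a single leftover square survives (contributing $\eta(h)$, sensitive to the value of $h$) or a rank-two block survives (contributing $\ups(h)$, sensitive only to whether $h=0$). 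Alternatively --- and in line with the attribution preceding the proposition --- $M(h)$ is precisely the quantity computed in \cite[Theorems 6.26, 6.27 and 6.32]{LN}, so the result follows by inserting those formulas together with the identification $\eta\big(\prod_j a_j\big)=\tau$ and the factor $q^{n-r}$.

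I do not expect a genuine obstacle here; the work is bookkeeping, and the only delicate points are conventions. One must check that $\eta\big(\prod_j a_j\big)=\tau$ propagates through the character-sum evaluations to produce exactly the sign $\tau\eta(-1)^{(r-1)/2}$ (resp.\ $\tau\eta(-1)^{r/2}$), tracking how each pair of squares contributes an extra $\eta(-1)$. One must handle $h=0$ with care: when $r$ is even the count genuinely splits according to whether $h=0$, which is why $\ups$ replaces $\eta$, whereas when $r$ is odd the convention $\eta(0)=0$ keeps a single formula valid. Finally, the exponent of $q$ is immediate from $N(h)=q^{n-r}M(h)$ once $M(h)$ is seen to carry the exponent $(r-1)/2$ or $(r-2)/2$.
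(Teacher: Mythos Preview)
Your approach is correct and matches the paper's treatment: the paper does not give a proof of this proposition at all, stating only that it is a ``direct consequence of \cite[Theorems 6.26, 6.27 and 6.32]{LN}.'' Your reduction to the diagonal canonical form via Proposition~\ref{prop-equivalence}, followed by stripping off the $n-r$ dummy variables and invoking the classical solution counts from Lidl--Niederreiter, is exactly how one fills in that citation, and you have correctly identified the bookkeeping needed to match the sign $\tau\eta(-1)^{\lfloor r/2\rfloor}$ and the distinction between $\eta(h)$ and $\ups(h)$.
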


\begin{proposition}\label{prop-numsolueven}
Let $q$ be an even prime power and $Q$ be a quadratic form on $\Fq^n$ with rank $r$ and type $\tau$. Then for $h \in \Fq$, the number of solutions to the equation $Q(x)=h$ is
$$
N(h)=\begin{cases}
  q^{n-1}+\ups(h)q^{n-\frac{r+2}{2}} & \mbox{if $r$ is even, $\tau=0$,} \\
  q^{n-1}                            & \mbox{if $r$ is odd, $\tau=1$,} \\
  q^{n-1}-\ups(h)q^{n-\frac{r+2}{2}} & \mbox{if $r$ is even, $\tau=2$,} \\
\end{cases}
$$
\end{proposition}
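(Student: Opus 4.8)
The plan is to reduce to the canonical forms of Proposition~\ref{prop-equivalence} and then count solutions by induction on the number of hyperbolic planes appearing. If $Q$ and $Q^{\pr}$ are equivalent, say $Q^{\pr}(x)=Q(Dx)$ with $D$ nonsingular, then $x\mapsto Dx$ permutes $\Fq^n$, so $Q$ and $Q^{\pr}$ have the same number of solutions to $\,\cdot\,=h$ for every $h\in\Fq$; hence it suffices to treat the canonical representatives. Moreover a form of rank $r$ in $n$ variables involves only $r$ of the variables in its canonical shape, the remaining $n-r$ being free, which contributes a factor $q^{n-r}$. So the task reduces to computing, for each canonical nondegenerate form $F$ in $r$ variables, the number $N_F(h)$ of solutions of $F=h$, and then multiplying by $q^{n-r}$.

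The engine is a ``peeling off a hyperbolic plane'' recursion: if $F$ is a quadratic form in $k$ variables and $G(y,u,v)=F(y)+uv$, then $N_G(h)=\sum_{c\in\Fq}N_F(c)\,M(h+c)$, where $M(t)=\#\{(u,v)\in\Fq^2:uv=t\}$ equals $2q-1$ for $t=0$ and $q-1$ otherwise; using $\sum_c N_F(c)=q^k$ and $h-c=h+c$ in characteristic $2$ this collapses to $N_G(h)=q\,N_F(h)+(q-1)q^k$. Iterating reduces each canonical form to a short base case. For odd rank $r=2s+1$ the base case is $x^2=h$ in one variable, which has exactly one solution because squaring is the Frobenius automorphism of $\Fq$; peeling back the $s$ hyperbolic planes then shows the nondegenerate $r$-variable type-$1$ form has exactly $q^{r-1}$ solutions for every $h$, so multiplying by $q^{n-r}$ gives $q^{n-1}$. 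For even rank $r$ of type $0$ the base case is a single hyperbolic plane $uv=h$ with $N=q+\ups(h)$, and the recursion yields $q^{r-1}+\ups(h)\,q^{r/2-1}$ in $r$ variables. For even rank $r$ of type $2$ the base case is the binary anisotropic form $x_1^2+x_1x_2+\la x_2^2$ with $\Tr^q_2(\la)=1$, which represents $0$ only trivially and each nonzero element of $\Fq$ exactly $q+1$ times --- it is a scalar multiple of the norm form of the quadratic extension $\Fq[x]/(x^2+x+\la)$, a $(q+1)$-to-one surjection onto $\Fq^*$ --- so $N=q-\ups(h)$, and the recursion yields $q^{r-1}-\ups(h)\,q^{r/2-1}$ in $r$ variables. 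Multiplying the last two by $q^{n-r}$ and rewriting $n-r/2-1=n-\frac{r+2}{2}$ produces the three displayed formulas; equivalently, once the reduction to canonical form is in place one may simply quote \cite[Theorems 6.26, 6.27 and 6.32]{LN} and translate into the $\ups$-notation.

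The only genuinely delicate ingredient is the even-rank, type-$2$ base case: recognizing the binary anisotropic form as a scalar multiple of the norm form of the quadratic extension of $\Fq$, so that its fibers have size $1$ over $0$ and $q+1$ over every nonzero element. The remaining ingredients --- invariance of solution counts under equivalence, freeness of the degenerate variables, and the hyperbolic-plane recursion --- are routine; the one point requiring care is that in even characteristic there is no $2$ to invert and $x\mapsto x^2$ is a bijection, which is precisely what collapses the odd-rank case to a clean power of $q$.
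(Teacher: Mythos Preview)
Your proof is correct. The paper itself does not supply an argument for this proposition; it simply records it as a direct consequence of \cite[Theorems~6.26, 6.27 and 6.32]{LN}, so there is nothing to compare against beyond that citation. Your route---reduce to the canonical forms of Proposition~\ref{prop-equivalence}, strip off the $n-r$ free variables, and run the hyperbolic-plane recursion $N_G(h)=q\,N_F(h)+(q-1)q^k$ down to the three base cases---is the standard elementary approach and is fully self-contained, whereas Lidl--Niederreiter establish the corresponding theorems via character sums and Gauss sums. Either method is fine here; yours has the advantage of avoiding any analytic machinery. One small remark: in the type-$2$ base case the form $x_1^2+x_1x_2+\la x_2^2$ is in fact \emph{exactly} the norm form of $\Fq[x]/(x^2+x+\la)$ over $\Fq$ (not merely a scalar multiple), since the conjugate root is $\alpha+1$ and $\alpha(\alpha+1)=\la$; this does not affect the argument.
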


Let $V$ be an $n$-dimensional vector space over $\Fq$. We use $Q(n,q)$ to denote the set of all quadratic forms on $V$. Suppose $q$ is odd. For $1 \le i \le n$ and $\tau \in \{1,-1\}$, define $D_{i,\tau}$ to be the subset of $Q(n,q)$ consisting of all quadratic forms with rank $i$ and type $\tau$. Define $D_{0,1}=D_{0,-1}$ to be the set consisting of the zero quadratic form. For $0 \le i \le n$ and $\tau \in \{1,-1\}$,
$$
S_{i,\tau}=\{(Q_1,Q_2) \in Q(n,q) \mid Q_1-Q_2 \in D_{i,\tau}\}.
$$
Then
$$
(Q(n,q),\{S_{i,\tau} \mid 0\le i\le n, \tau \in \{1,-1\}\})
$$
forms a $2n$-class association scheme \cite{WWMM}, denoted by $\Qua(n,q)$. Let $Y$ be a subset of $Q(n,q)$. Then the inner distribution $(c_{i,\tau})$ of $Y$ is defined by
$$
c_{i,\tau}=\frac{|(Y \times Y) \cap S_{i,\tau}|}{|Y|}, 0 \le i \le n, \tau\in\{1,-1\}.
$$
If $Y$ is an additive subset, we have
$$
c_{i,\tau}=|Y \cap D_{i,\tau}|, 0 \le i \le n, \tau\in\{1,-1\}.
$$

Suppose $q$ is even. For $0 \le i \le \lhn$, $\tau \in \{0,1,2\}$ and $0 \le 2i+\tau \le n$, define $E_{2i+\tau,\tau}$ to be the subset of $Q(n,q)$ consisting of all quadratic forms with rank $2i+\tau$ and type $\tau$. For $0 \le i \le \lhn$, $\tau \in \{0,1,2\}$ and $0 \le 2i+\tau \le n$,
$$
T_{2i+\tau,\tau}=\{(Q_1,Q_2) \in Q(n,q) \mid Q_1-Q_2 \in E_{2i+\tau,\tau}\}.
$$
Then
$$
(Q(n,q),\{T_{2i+\tau,\tau} \mid 0 \le i \le \lhn, \tau \in \{0,1,2\}, 0\le 2i+\tau \le n \})
$$
forms an $n+\lhn$-class association scheme \cite{FWMM,WWMM}, denoted by $\Qua(n,q)$. Let $Y$ be a subset of $Q(n,q)$. Then the inner distribution $(d_{2i+\tau,\tau})$ of $Y$ is defined by
$$
d_{2i+\tau,\tau}=\frac{|(Y \times Y) \cap T_{2i+\tau,\tau}|}{|Y|}, 0 \le i \le \lhn, \tau \in \{0,1,2\}, 0\le 2i+\tau \le n.
$$
If $Y$ is an additive subset, we have
$$
d_{2i+\tau,\tau}=|Y \cap E_{2i+\tau,\tau}|, 0 \le i \le \lhn, \tau \in \{0,1,2\}, 0\le 2i+\tau \le n.
$$

Let $\cQ \subset Q(n,q)$ be a set of quadratic forms. When $q$ is odd, define $\cS=\{B_Q \mid Q \in \cQ\}$, where $B_Q$ is the associated symmetric bilinear form. Then we have $\cS \subset \Sym(n,q)$. When $q$ is even, define $\cA=\{B_Q \mid Q \in \cQ\}$, where $B_Q$ is the associated alternating bilinear form. Then we have $\cA \subset \Alt(n,q)$. The following proposition suggests that the inner distribution of $\cQ$ in $\Qua(n,q)$ has close connection with the inner distribution of $\cS$ in $\Sym(n,q)$ when $q$ is odd and with the inner distribution of $\cA$ in $\Alt(n,q)$ when $q$ is even.

\begin{proposition}\label{prop-innerdis}
\begin{itemize}
\item[1)] When $q$ is odd, let $(c_{i,\tau})$ and $(a_{i,\tau})$ be the inner distributions of $\cQ$ and $\cS$ respectively. Then we have $c_{i,\tau}=a_{i,\tau}$ for $0 \le i \le n$ and $\tau \in \{1,-1\}$.
\item[2)] When $q$ is even, let $(d_{2i+\tau,\tau})$ and $(b_{2i})$ be the inner distributions of $\cQ$ and $\cA$ respectively. Then we have $d_{2i,0}+d_{2i+1,1}+d_{2i,2}=b_{2i}$ for $0 \le i \le \frac{n-1}{2}$.
\end{itemize}
\end{proposition}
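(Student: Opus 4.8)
The plan is to study the additive map $\Phi$ that sends a quadratic form $Q$ on $\Fq^n$ to its associated bilinear form $B_Q$; here $\Phi\colon Q(n,q)\to S(n,q)$ when $q$ is odd and $\Phi\colon Q(n,q)\to A(n,q)$ when $q$ is even, and $\cS$ (resp. $\cA$) is the image $\Phi(\cQ)$. The one ingredient needed in both parities is that $\Phi$ is additive. Indeed $B_{Q_1-Q_2}$ is determined by the identity $(Q_1-Q_2)(x+y)=(Q_1-Q_2)(x)+(Q_1-Q_2)(y)+\kappa\,B_{Q_1-Q_2}(x,y)$ (with $\kappa=2$ if $q$ is odd and $\kappa=1$ if $q$ is even), and expanding each $Q_i(x+y)$ shows $B_{Q_1-Q_2}=B_{Q_1}-B_{Q_2}$. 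Hence $(Q_1,Q_2)$ lies in the relation $S_{i,\tau}$ (or $T_{2i+\tau,\tau}$) of $\Qua(n,q)$ precisely when $Q_1-Q_2$ has the prescribed rank and type, which by additivity is a condition on $B_{Q_1}-B_{Q_2}$. So the whole argument reduces to (i) matching the rank--type data of $Q$ with that of $B_Q$, and (ii) counting pairs.

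For part 1, $q$ odd, I would first note $\Phi$ is a bijection: by Remark~\ref{rem-quadbili} one has $Q(x)=B_Q(x,x)$, so $\ker\Phi=\{0\}$, and as $\dim Q(n,q)=\dim S(n,q)$ the map $\Phi$ is onto. Next I would show $Q$ and $B_Q$ share both rank and type. Rank equality is Lemma~\ref{lem-rank}. For type, observe $M_{B_{Q\circ L}}=L^{T}M_{B_Q}L$, so $\Phi$ is compatible with equivalence and one may assume $Q$ is in the canonical form $\sum_{j=1}^{r}a_jx_j^2$ of Proposition~\ref{prop-equivalence}; a direct computation gives $M_{B_Q}=\mathrm{diag}(a_1,\dots,a_r,0,\dots,0)$, whose congruence normal form (\ref{eqn-stdform}) has nonzero entry $z$ with $\eta(z)=\eta(\prod_{j}a_j)=\tau$. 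Therefore $\Phi$ restricts to a bijection $D_{i,\tau}\to C_{i,\tau}$ for all $i,\tau$; consequently $|\cS|=|\cQ|$ and $|(\cQ\times\cQ)\cap S_{i,\tau}|=|(\cS\times\cS)\cap R_{i,\tau}|$, which gives $c_{i,\tau}=a_{i,\tau}$.

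For part 2, $q$ even, $\Phi$ is no longer injective, but the rank--type correspondence takes a merged form. Since $\rank(B_Q)$ is always even, combining the bounds $\rank(B_Q)\le\rank(Q)\le\rank(B_Q)+1$ of Lemma~\ref{lem-rank} with Proposition~\ref{prop-equivalence} (odd rank forces type $1$, even rank forces type $0$ or $2$) shows that $\rank(B_Q)=2i$ exactly when $Q$ has rank $2i$ with type in $\{0,2\}$ or rank $2i+1$ with type $1$; equivalently, $B_Q\in C_{2i}$ iff $Q\in E_{2i,0}\cup E_{2i+1,1}\cup E_{2i,2}$. By additivity this becomes: $(Q_1,Q_2)\in T_{2i,0}\cup T_{2i+1,1}\cup T_{2i,2}$ iff $(B_{Q_1},B_{Q_2})\in R_{2i}$. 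Counting pairs, with $\cA$ taken as the image of $\cQ$ so that $|\cA|=|\cQ|$, yields $d_{2i,0}+d_{2i+1,1}+d_{2i,2}=b_{2i}$ for $0\le i\le\frac{n-1}{2}$.

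I expect the main obstacle to be the type computation in part 1: one must reconcile the definition of type for quadratic forms (via the canonical form and the quadratic character of the product of its coefficients) with the definition of type for symmetric bilinear forms (via the normal form (\ref{eqn-stdform})), verifying in particular that the normalization in $Q(x+y)=Q(x)+Q(y)+2B_Q(x,y)$ eliminates any spurious factor so that the two notions of type coincide. A secondary point is the passage from $\cQ$ to $\cA$ in part 2: one has to check that the non-injectivity of $\Phi$ does no harm, which is ensured by recording $\cA$ as the image of $\cQ$ with multiplicities, equivalently by the fact that $\Phi$ is injective on $\cQ$ in every instance where this proposition is applied.
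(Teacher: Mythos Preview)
Your approach is essentially the same as the paper's. The only notable difference is in part~1: rather than reducing $Q$ to canonical form and then computing $M_{B_Q}$ there, the paper simply observes that (thanks to the factor of $2$ in $Q(x+y)=Q(x)+Q(y)+2B_Q(x,y)$) the coefficient matrix $C$ of $Q$ is \emph{literally equal} to the matrix $M_{B_Q}$ of $B_Q$; since rank and type are congruence invariants of this common matrix, they agree automatically. This disposes of the ``main obstacle'' you anticipated in one line.

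Your flag about injectivity in part~2 is well taken: the paper's proof glosses over this, and the identity $d_{2i,0}+d_{2i+1,1}+d_{2i,2}=b_{2i}$ does require that $\Phi|_{\cQ}$ be injective (otherwise $|\cA|<|\cQ|$ and the normalizations in the two inner distributions differ). In every application in the paper the sets $\cQ_1,\cQ_2$ satisfy $|\cA_j|=|\cQ_j|$, so no harm is done, but you are right to note the hypothesis.
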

\begin{proof}
1) Let $Q \in Q(n,q)$. When $q$ is odd, by Lemma~\ref{lem-rank}, we have $\rank(Q)=\rank(B_Q)$ for each $Q \in \cQ$. Recall that $B(x,y)=\frac12 (Q(x+y)-Q(x)-Q(y))$ for all $x,y \in V$. It can be easily verified that the coefficient matrix of the quadratic form $Q$ coincides with the symmetric matrix associated to $B_Q$. By the definitions of the type of $Q$ and $B_Q$, it is easy to see that $Q$ and $B_Q$ have the same type. Hence, each pair of $Q$ and $B_Q$ have the same rank and type. Therefore, $c_{i,\tau}=a_{i,\tau}$ for $0 \le i \le n$ and $\tau \in \{1,-1\}$.

2) Let $Q \in Q(n,q)$. When $q$ is even, by Lemma~\ref{lem-rank}, we have $\rank(B_Q) \le \rank(Q) \le \rank(B_Q)+1$. Suppose $B_Q$ has rank $2i$, then $Q$ must have rank $2i$, type $0$, or rank $2i+1$, type $1$ or rank $2i$, type $2$. Therefore, $d_{2i,0}+d_{2i+1,1}+d_{2i,2}=b_{2i}$ for $0 \le i \le \frac{n-1}{2}$.
\end{proof}

The above proposition says when $q$ is odd, the inner distribution of $\cQ$ can be derived from that of $\cS$. And when $q$ is even, some information on the inner distribution of $\cQ$ can be derived from that of $\cA$. This fact plays a fundamental role in the determination of the minimum distance of $\codespe$.

Below, most of our computation is carried out in the finite field $\Fqm$, rather than the vector space $\Fq^m$. For the sake of convenience, we say a quadratic form, a symmetric bilinear form or an alternating bilinear form is defined on the finite field $\Fqm$, if it is a quadratic form, a symmetric bilinear form or an alternating bilinear form on the vector space $\Fq^m$. In some places, we simply use the notation $\Fqm$ to represent the additive group of this finite field, which forms a vector space $\Fq^m$.

\section{Determining the minimum distance when $q$ is odd}\label{sec4}

For $\hmmt \le i \le m-\ltm-1$, the dimension and Bose distance of $\codespe$ have been settled in Proposition~\ref{prop-traceprop}. In this section, we are going to show that the minimum distance of $\codespe$ is equal to its Bose distance $\de_i$, when $q$ is an odd prime power. Throughout the rest of this section, we always assume that $q$ is odd.

By Proposition~\ref{prop-cosetunion}, in order to determine the minimum distance of $\codespe$, it suffices to compute the minimum weight of the subcode $(Q(x))_{x \in \Fqm^*}+\PRM_q(1,m)$ for each $Q \in \cQ_1$ when $m$ is odd and for each $Q \in \cQ_2$ when $m$ is even. The following proposition shows when $q$ is odd, the weight distribution of the subcode $(Q(x))_{x \in \Fqm^*}+\PRM_q(1,m)$ depends only on the rank and the type of the quadratic form $Q$ on $\Fqm$.

As a preparation, consider a function $f$ from $\Fqm$ to $\Fq$. Define $N(f)$ to be the number of $x \in \Fqm$ such that $f(x)=0$. Define $\wt(f)$ to be the number of $x \in \Fqm^*$ such that $f(x)\ne0$. Then we have
$$
\wt(f)=\begin{cases}
  q^m-N(f) & \mbox{if $f(0)=0$,} \\
  q^m-1-N(f) & \mbox{if $f(0) \ne 0$.}
\end{cases}
$$
For $h \in \Fq$, define
$$
\De_{0,h}=\begin{cases}
  1 & \mbox{if $h=0$,} \\
  0 & \mbox{if $h \ne 0$.}
\end{cases}
$$
Then we have
\begin{equation}\label{eqn-fweight}
\wt(f)=q^m-1-N(f)+\De_{0,f(0)}.
\end{equation}

\begin{proposition}\label{prop-Qweightodd}
Let $q$ be odd and $Q$ be a quadratic form of rank $r \ge 1$ and type $\tau$ on $\Fqm$. Then the weight enumerator of $(Q(x))_{x \in \Fqm^*}+\PRM_q(1,m)$ is
\begin{align*}
U_{r,\tau}(Z)=&\frac{(q-1)}{2}((q-1)q^{r-1}-\tau\eta(-1)^{\frac{r-1}{2}}q^{\frac{r-1}{2}})Z^{q^m-q^{m-1}-\tau\eta(-1)^{\frac{r-1}{2}}q^{m-\frac{r+1}{2}}-1}\\
&+\frac{(q-1)}{2}(q^{r-1}+\tau\eta(-1)^{\frac{r-1}{2}}q^{\frac{r-1}{2}})Z^{q^m-q^{m-1}-\tau\eta(-1)^{\frac{r-1}{2}}q^{m-\frac{r+1}{2}}}\\
&+(q-1)(q^m-q^{r}+q^{r-1})Z^{q^m-q^{m-1}-1}+(q^m-q^{r}+q^{r-1})Z^{q^m-q^{m-1}}\\
&+\frac{(q-1)}{2}((q-1)q^{r-1}+\tau\eta(-1)^{\frac{r-1}{2}}q^{\frac{r-1}{2}})Z^{q^m-q^{m-1}+\tau\eta(-1)^{\frac{r-1}{2}}q^{m-\frac{r+1}{2}}-1}\\
&+\frac{(q-1)}{2}(q^{r-1}-\tau\eta(-1)^{\frac{r-1}{2}}q^{\frac{r-1}{2}})Z^{q^m-q^{m-1}+\tau\eta(-1)^{\frac{r-1}{2}}q^{m-\frac{r+1}{2}}}
\end{align*}
when $r$ is odd and
\begin{align*}
U_{r,\tau}(Z)=&(q-1)(q^{r-1}-\tau\eta(-1)^{\frac{r}{2}}q^{\frac{r-2}{2}})Z^{q^m-q^{m-1}-\tau\eta(-1)^{\frac{r}{2}}q^{m-\frac{r+2}{2}}(q-1)-1}\\
&+(q^{r-1}+\tau\eta(-1)^{\frac{r}{2}}q^{\frac{r-2}{2}}(q-1))Z^{q^m-q^{m-1}-\tau\eta(-1)^{\frac{r}{2}}q^{m-\frac{r+2}{2}}(q-1)}\\
&+(q-1)(q^m-q^{r})Z^{q^m-q^{m-1}-1}+(q^m-q^{r})Z^{q^m-q^{m-1}}\\
&+(q-1)((q-1)q^{r-1}+\tau\eta(-1)^{\frac{r}{2}}q^{\frac{r-2}{2}})Z^{q^m-q^{m-1}+\tau\eta(-1)^{\frac{r}{2}}q^{m-\frac{r+2}{2}}-1}\\
&+(q-1)(q^{r-1}-\tau\eta(-1)^{\frac{r}{2}}q^{\frac{r-2}{2}})Z^{q^m-q^{m-1}+\tau\eta(-1)^{\frac{r}{2}}q^{m-\frac{r+2}{2}}}
\end{align*}
when $r$ is even. The above weight enumerators are also summarized in Tables \ref{tab-qoddrodd} and \ref{tab-qoddreven}.
\end{proposition}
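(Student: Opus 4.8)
The plan is to analyze the coset directly. The coset $(Q(x))_{x\in\Fqm^*}+\PRM_q(1,m)$ consists of exactly $q^{m+1}$ codewords, namely the sequences $\big(f_{\mu,\ep}(x)\big)_{x\in\Fqm^*}$ with $f_{\mu,\ep}(x)=Q(x)+\Tqmq(\mu x)+\ep$ for $\mu\in\Fqm,\ \ep\in\Fq$; these are pairwise distinct because $\PRM_q(1,m)$ has dimension $m+1$. Since $f_{\mu,\ep}(0)=\ep$, equation~(\ref{eqn-fweight}) reduces computing $\wt(f_{\mu,\ep})$ to computing $N(f_{\mu,\ep})$, the number of zeros of $f_{\mu,\ep}$ on $\Fqm$, via $\wt(f_{\mu,\ep})=q^m-1-N(f_{\mu,\ep})+\De_{0,\ep}$.

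First I would complete the square to reduce $N(f_{\mu,\ep})$ to the homogeneous zero-count of Proposition~\ref{prop-numsoluodd}. Let $\beta(x,y)=Q(x+y)-Q(x)-Q(y)$, so $\Rad\beta=\Rad B_Q$ has dimension $m-r$ and, by Lemma~\ref{lem-rank} for odd $q$, $Q$ vanishes on $\Rad\beta$; fix a complement $V=W\oplus\Rad\beta$. Since the trace form is nondegenerate, exactly $q^r$ elements $\mu$ satisfy that $\Tqmq(\mu\cdot)$ vanishes on $\Rad\beta$. If $\Tqmq(\mu\cdot)$ does not vanish on $\Rad\beta$, then on each coset of $\Rad\beta$ the function $f_{\mu,\ep}$ is affine with nonzero linear part, whence $N(f_{\mu,\ep})=q^{m-1}$ for every $\ep$; there are $q^m-q^r$ such $\mu$. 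If $\Tqmq(\mu\cdot)$ does vanish on $\Rad\beta$, there is a unique $x_0\in W$ with $\beta(x_0,\cdot)=\Tqmq(\mu\cdot)$, so that $Q(x)+\Tqmq(\mu x)=Q(x+x_0)-Q(x_0)$, giving $N(f_{\mu,\ep})=N_Q\big(c(\mu)-\ep\big)$, where $c(\mu):=Q(x_0)$ and $N_Q$ is the zero-count of the rank-$r$, type-$\tau$ form on $\Fqm$ from Proposition~\ref{prop-numsoluodd} with $n=m$.

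The remaining structural ingredient is the distribution of $c(\mu)$ as $\mu$ runs over the $q^r$-element annihilator of $\Rad\beta$: the assignment $\mu\mapsto x_0$ is a bijection from that annihilator onto $W$, and $c(\mu)=Q|_W(x_0)$, where $Q|_W$ is a nondegenerate quadratic form of rank $r$ and type $\tau$ by Proposition~\ref{prop-equivalence}; hence the value $h$ is taken by $c$ exactly $N_{Q|_W}(h)$ times, with $N_{Q|_W}$ given by Proposition~\ref{prop-numsoluodd} with $n=r$. I would then assemble $U_{r,\tau}(Z)$: the $\mu$'s with $\Tqmq(\mu\cdot)$ not vanishing on $\Rad\beta$ contribute $q^m-q^r$ codewords of weight $q^m-q^{m-1}$ (at $\ep=0$) and $(q-1)(q^m-q^r)$ of weight $q^m-q^{m-1}-1$ (at $\ep\ne0$); for the remaining $\mu$'s, for each fixed $\mu$ one sums $Z^{q^m-1-N_Q(c(\mu)-\ep)+\De_{0,\ep}}$ over $\ep\in\Fq$ (noting $c(\mu)-\ep$ runs over all of $\Fq$), then sums over $\mu$ weighting the value $c(\mu)$ with multiplicity $N_{Q|_W}(c(\mu))$. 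Carrying this out for $r$ odd (with $\eta$) and $r$ even (with $\ups$, which automatically yields the exponents carrying the factor $q-1$), and checking that the codeword count totals $q^{m+1}$, produces the stated enumerators and Tables~\ref{tab-qoddrodd}--\ref{tab-qoddreven}.

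I expect the only real difficulty to be the bookkeeping in this last step. The correction term $\De_{0,\ep}$ raises the weight by $1$ exactly when $\ep=0$, and this is precisely what splits each weight $q^m-q^{m-1}\pm\tau\eta(-1)^{\lfloor r/2\rfloor}q^{m-\lceil(r+1)/2\rceil}$ into its two versions in $U_{r,\tau}(Z)$; tracking how the $\ep=0$ codewords interact with whether $c(\mu)=0$ — equivalently, with the exact value $N_{Q|_W}(0)$ — together with the separate handling of the two parities of $r$, is the delicate point, the rest being a direct substitution into Proposition~\ref{prop-numsoluodd}.
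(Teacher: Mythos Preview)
Your proposal is correct and follows essentially the same approach as the paper. The paper's proof defers to Lemma~\ref{lem-quadoddset} (proved via Lemma~\ref{lem-quadodd}), which first passes to the diagonal canonical form $\sum_{j=1}^r d_jx_j^2$ of $Q$ via a linear change of variables leaving $\cL$ invariant, then splits according to whether the linear part involves any of $x_{r+1},\ldots,x_m$, and finally completes the square explicitly to reduce to Proposition~\ref{prop-numsoluodd} applied twice; your decomposition $V=W\oplus\Rad\beta$, the split on whether $\Tqmq(\mu\cdot)$ vanishes on $\Rad\beta$, and the shift by $x_0$ are exactly the coordinate-free versions of these steps, and the remaining bookkeeping (including the square/nonsquare case analysis encoded in the paper as Lemma~\ref{lem-intersize}) is identical.
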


\begin{table}
\begin{center}
\ra{1.5}\caption{Weight distribution of $(Q(x))_{x \in \Fqm^*}+\PRM_q(1,m)$, $q$ and $r$ odd}
\begin{tabular}{|c|c|}
\hline
Weight  &   Frequency  \\ \hline
$q^m-q^{m-1}\pm\tau\eta(-1)^{\frac{r-1}{2}}q^{m-\frac{r+1}{2}}-1$ & $\frac{(q-1)}{2}((q-1)q^{r-1}\pm\tau\eta(-1)^{\frac{r-1}{2}}q^{\frac{r-1}{2}})$ \\ \hline
$q^m-q^{m-1}\pm\tau\eta(-1)^{\frac{r-1}{2}}q^{m-\frac{r+1}{2}}$ & $\frac{(q-1)}{2}(q^{r-1}\mp\tau\eta(-1)^{\frac{r-1}{2}}q^{\frac{r-1}{2}})$ \\ \hline
$q^m-q^{m-1}-1$ & $(q-1)(q^m-q^{r}+q^{r-1})$ \\ \hline
$q^m-q^{m-1}$ & $q^m-q^{r}+q^{r-1}$ \\ \hline
\end{tabular}
\end{center}
\label{tab-qoddrodd}
\end{table}

\begin{table}
\begin{center}
\ra{1.5}\caption{Weight distribution of $(Q(x))_{x \in \Fqm^*}+\PRM_q(1,m)$, $q$ odd and $r$ even}
\begin{tabular}{|c|c|}
\hline
Weight  &   Frequency  \\ \hline
$q^m-q^{m-1}-\tau\eta(-1)^{\frac{r}{2}}q^{m-\frac{r+2}{2}}(q-1)-1$ & $(q-1)(q^{r-1}-\tau\eta(-1)^{\frac{r}{2}}q^{\frac{r-2}{2}})$ \\ \hline
$q^m-q^{m-1}-\tau\eta(-1)^{\frac{r}{2}}q^{m-\frac{r+2}{2}}(q-1)$ & $q^{r-1}+\tau\eta(-1)^{\frac{r}{2}}q^{\frac{r-2}{2}}(q-1)$ \\ \hline
$q^m-q^{m-1}-1$ & $(q-1)(q^m-q^{r})$ \\ \hline
$q^m-q^{m-1}$ & $q^m-q^{r}$ \\ \hline
$q^m-q^{m-1}+\tau\eta(-1)^{\frac{r}{2}}q^{m-\frac{r+2}{2}}-1$ & $(q-1)((q-1)q^{r-1}+\tau\eta(-1)^{\frac{r}{2}}q^{\frac{r-2}{2}})$ \\ \hline
$q^m-q^{m-1}+\tau\eta(-1)^{\frac{r}{2}}q^{m-\frac{r+2}{2}}$ & $(q-1)(q^{r-1}-\tau\eta(-1)^{\frac{r}{2}}q^{\frac{r-2}{2}})$ \\ \hline
\end{tabular}
\end{center}
\label{tab-qoddreven}
\end{table}
\begin{proof}
Let $\cL$ be the set of all homogenous linear functions on $\Fqm$. Then, the weight enumerator of $(Q(x))_{x \in \Fqm^*}+\PRM_q(1,m)$ can be read from the multiset $\{\wt(Q+L+c) \mid L \in \cL, c \in \Fq\}$. The conclusion follows from (\ref{eqn-fweight}) and Lemma~\ref{lem-quadoddset}.
\end{proof}

Note that $\cQ_1$ and $\cQ_2$ defined in (\ref{eqn-Q1}) and (\ref{eqn-Q2}) are additive subsets of $Q(m,q)$. By Propositions~\ref{prop-cosetunion} and \ref{prop-Qweightodd}, we can obtain the weight distribution of $\codespe$, if the inner distributions of $\cQ_1$ and $\cQ_2$ are known. For $\cQ_1$ and $\cQ_2$, define the corresponding sets of symmetric bilinear forms as
$$
\cS_j=\{B_Q \mid Q \in \cQ_j\},
$$
where $B_Q(x,y)=\frac12 (Q(x+y)-Q(x)-Q(y))$ and $j=1,2$. More precisely, we have
\begin{equation}\label{eqn-S1}
\begin{aligned}
  \cS_1:=\cS_1(i)&=\Bigg\{\Tqmq\bigg(\sum_{j=\hmpo}^{i+1}\Big(\frac{\la_j}{2}x^{q^j}+\big(\frac{\la_j}{2}\big)^{q^{-j}}x^{q^{-j}}\Big)y\bigg) \mid \la_{\hmpo},\ldots,\la_{i+1} \in \Fqm \Bigg\} \\
                 &=\Bigg\{\Tqmq\Big(\sum_{j=\hmpo}^{i+1}(\la_jx^{q^j}+\la_j^{q^{-j}}x^{q^{-j}})y\Big) \mid \la_{\hmpo},\ldots,\la_{i+1} \in \Fqm \Bigg\}
\end{aligned}
\end{equation}
and
\begin{equation}\label{eqn-S2}
\begin{aligned}
  \cS_2:=\cS_2(i)=\Bigg\{\Tqmq\bigg(\Big(\frac{\la_{\hm}}{2}x^{q^{\hm}}+\sum_{j=\hmpt}^{i+1}&\big(\frac{\la_j}{2}x^{q^j}+\big(\frac{\la_j}{2}\big)^{q^{-j}}x^{q^{-j}}\big)\Big)y\bigg)\\ &\Big | \la_{\hm} \in \Fqhm, \la_{\hmpt},\ldots,\la_{i+1} \in \Fqm \Bigg\} \\
                 =\Bigg\{\Tqmq\Big(\big(\la_{\hm}x^{q^{\hm}}+\sum_{j=\hmpt}^{i+1}&(\la_jx^{q^j}+\la_j^{q^{-j}}x^{q^{-j}})\big)y\Big) \\
                  &\Big | \la_{\hm} \in \Fqhm, \la_{\hmpt},\ldots,\la_{i+1} \in \Fqm \Bigg\}
\end{aligned}
\end{equation}
with $|\cS_1|=|\cS_2|=q^{m(i-\frac{m-3}{2})}$. By Proposition~\ref{prop-innerdis}, for $j=1,2$, $\cQ_j$ has the same inner distribution with $\cS_j$. To derive the inner distributions of $\cS_1$ and $\cS_2$, we are going to study them in the context of association schemes. As a preparation, we state the following proposition which will be used to prove a subset of $\Sym(m,q)$ is a $t$-design.

\begin{proposition}{\rm \cite[Theorem 3.11]{Sch15}}\label{prop-tdesign}
Let $U$ be a $t$-dimensional subspace of $\Fqm$ and let $A$ be a symmetric bilinear form on $U$. Then a subset $Y$ of $\Sym(m,q)$ is a $t$-design if and only if the number of forms in $Y$ that are an extension of $A$ is a constant, which is independent of the choice of $U$ and $A$.
\end{proposition}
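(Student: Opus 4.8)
The plan is to exploit the fact that $\Sym(m,q)$ is a \emph{translation} association scheme and to invoke the standard Delsarte duality for such schemes. The point set $S(m,q)$ is an abelian group under addition, all the relations $R_{i,\tau}=\{(B_1,B_2):B_1-B_2\in C_{i,\tau}\}$ are translation invariant, and the character group of $S(m,q)$ is identified with $S(m,q)$ itself through the nondegenerate pairing $\langle B,C\rangle=\tr(M_BM_C)$ followed by a fixed nontrivial additive character $\psi$ of $\Fq$, where $M_B,M_C$ are the symmetric matrices of $B,C$ with respect to a fixed basis of $\Fqm$. The structural input I take from the eigenvalue analysis behind \cite{HW,Sch15} is the formal self-duality of $\Sym(m,q)$: the character $B\mapsto\psi(\langle B,C\rangle)$ lies in the eigenspace of dual index $\rank(C)$; equivalently, the union of the eigenspaces of dual indices $1$ through $t$ is spanned by the characters indexed by the forms of rank between $1$ and $t$. (Which of the two dual classes of a given positive index a rank-$k$ character falls in depends on the type, but that refinement is not needed here.)

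First I would rewrite the $t$-design condition as a Fourier condition. Setting $\widehat{\mathbf{1}_Y}(C)=\sum_{B\in Y}\psi(\langle B,C\rangle)$, the definition of the dual inner distribution together with the eigenspace description above shows that $a_{k,1}^{\pr}=a_{k,-1}^{\pr}=0$ for a given $k$ if and only if $\widehat{\mathbf{1}_Y}(C)=0$ for every $C$ with $\rank(C)=k$. Hence $Y$ is a $t$-design if and only if $\widehat{\mathbf{1}_Y}(C)=0$ for every $C\in S(m,q)$ with $1\le\rank(C)\le t$.

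Next I would match this with the extension count. Fix a $t$-dimensional subspace $U\subseteq\Fqm$ and let $\rho_U\colon S(m,q)\to S(U)$ be the restriction $B\mapsto B|_{U\times U}$; this is a surjective group homomorphism with fibres of size $q^{\binom{m+1}{2}-\binom{t+1}{2}}$. The function on $S(U)$ sending $A$ to the number of $B\in Y$ extending $A$ is the pushforward $(\rho_U)_*\mathbf{1}_Y$, and it is constant---necessarily equal to $|Y|/q^{\binom{t+1}{2}}$, by summing over $A$---exactly when its Fourier transform on the finite abelian group $S(U)$ is supported at the trivial character. Choosing a basis of $\Fqm$ whose first $t$ vectors span $U$, one checks that the characters of $S(U)$ pulled back along $\rho_U$ are precisely the $B\mapsto\psi(\langle B,C\rangle)$ with $M_C$ supported on the top-left $t\times t$ block (the forms ``living on'' $U$), and that the Fourier coefficient of $(\rho_U)_*\mathbf{1}_Y$ at such a character equals $\widehat{\mathbf{1}_Y}(C)$. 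Every form living on $U$ has rank at most $t$, and conversely every nonzero $C$ with $\rank(C)\le t$ lives on some $t$-dimensional subspace: put $M_C$ in the standard form of (\ref{eqn-stdform}) and enlarge the $\rank(C)$-dimensional supporting subspace to dimension $t$. Consequently, ``$(\rho_U)_*\mathbf{1}_Y$ is constant for every $t$-dimensional $U$'' is equivalent to ``$\widehat{\mathbf{1}_Y}(C)=0$ for all $C$ with $1\le\rank(C)\le t$'', hence to $Y$ being a $t$-design; and when this holds the common extension count is $|Y|/q^{\binom{t+1}{2}}$, which does not depend on $U$ or $A$.

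I expect the one genuinely delicate point to be the duality dictionary: verifying cleanly that, under the trace pairing, the dual class of index $k$ in $\Sym(m,q)$ matches the characters indexed by rank-$k$ forms, and that restriction to $U$ is dual to inclusion of the $U$-supported forms. Once that is pinned down, the remaining ingredients---surjectivity of $\rho_U$, the normalising constant $q^{\binom{t+1}{2}}$, and the standard-form reduction of a symmetric matrix---are routine, and the equivalence follows from the short Fourier argument above.
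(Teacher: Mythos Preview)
The paper does not prove this proposition at all: it is quoted verbatim from \cite[Theorem~3.11]{Sch15} and used as a black box in the proof of Proposition~\ref{prop-tdesignfamily}. So there is no ``paper's own proof'' to compare against.

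That said, your Fourier-analytic argument via translation-scheme duality is the right approach and is essentially how Schmidt proves it in \cite{Sch15}. The scheme $\Sym(m,q)$ is indeed a translation scheme on the abelian group $S(m,q)$, and it is formally self-dual with the trace pairing $\langle B,C\rangle=\tr(M_BM_C)$ realising the duality; under this pairing the dual class of parameter $(k,\epsilon)$ is represented by the characters indexed by forms of rank $k$ (and the appropriate type). Your reduction of ``$t$-design'' to ``$\widehat{\mathbf{1}_Y}$ vanishes on all nonzero $C$ of rank $\le t$'' is then exactly the standard Delsarte statement for translation schemes, and the passage from the extension count to the pushforward $(\rho_U)_*\mathbf{1}_Y$ and its Fourier transform is correct, including the identification of characters of $S(U)$ with $U$-supported forms in $S(m,q)$.

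The one place where you should be a bit more careful is the claim that $a_{k,1}^{\pr}=a_{k,-1}^{\pr}=0$ is equivalent to $\widehat{\mathbf{1}_Y}(C)=0$ for \emph{every} $C$ of rank $k$. For a general subset $Y$ one only has $|Y|\,a_{k,\epsilon}^{\pr}=\sum_{C\in C_{k,\epsilon}}|\widehat{\mathbf{1}_Y}(C)|^2$, so what you are really using is the nonnegativity of $|\widehat{\mathbf{1}_Y}(C)|^2$: the vanishing of the sum forces each summand to vanish. This is fine, but it is worth stating explicitly, since your phrasing suggests a direct eigenspace argument rather than a positivity argument. With that clarification in place, the proof goes through.
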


The above proposition provides a combinatorial characterization of $t$-designs in $\Sym(m,q)$. In order to exploit this characterization, we need the following two lemmas. Let $V$ be an $m$-dimensional vector space over $\Fq$. Denote the set of all bilinear forms on $V$ as $B(m,q)$. Let $C \in B(m,q)$. Define an associated bilinear form $C^{\pr}$ on $V$ as $C^{\pr}(x,y)=C(y,x)$, in which $x,y \in V$.

\begin{lemma}\label{lem-bilisym}
Let $J$ be a multiset in which each element of $B(m,q)$ occurs a constant number of times. Then, each element of $S(m,q)$ occurs a constant number of times in the multiset
$$
\{C+C^{\pr} \mid C \in J\}.
$$
\end{lemma}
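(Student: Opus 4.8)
The plan is to recognize the operation $C \mapsto C+C^{\pr}$ as a surjective homomorphism of finite abelian groups and then use the standard fact that all fibers of such a map have the same cardinality. Concretely, consider the map $\phi\colon B(m,q) \to S(m,q)$ given by $\phi(C)=C+C^{\pr}$. Since $(C_1+C_2)^{\pr}=C_1^{\pr}+C_2^{\pr}$ and $(\la C)^{\pr}=\la C^{\pr}$, the map $\phi$ is $\Fq$-linear, in particular a homomorphism of additive groups; and it genuinely lands in $S(m,q)$ because $(C+C^{\pr})^{\pr}=C^{\pr}+C=C+C^{\pr}$, so $\phi(C)$ is symmetric.

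First I would check that $\phi$ is surjective. Given a symmetric bilinear form $S\in S(m,q)$ we have $S^{\pr}=S$, hence $\phi\big(\tfrac12 S\big)=\tfrac12 S+\big(\tfrac12 S\big)^{\pr}=\tfrac12 S+\tfrac12 S=S$; here we use that $q$ is odd (the standing assumption of this section), so that $2$ is invertible in $\Fq$. Consequently every fiber $\phi^{-1}(S)$, $S\in S(m,q)$, is a coset of $\Ker\phi$ and therefore has the same cardinality $|B(m,q)|/|S(m,q)|=q^{m^2}/q^{m(m+1)/2}=q^{m(m-1)/2}$.

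The lemma is then a one-line count. If each element of $B(m,q)$ occurs exactly $\la$ times in $J$, then for a fixed $S\in S(m,q)$ the number of occurrences of $S$ in the multiset $\{C+C^{\pr}\mid C\in J\}$ equals $\sum_{C\in\phi^{-1}(S)}(\text{multiplicity of }C\text{ in }J)=\la\,|\phi^{-1}(S)|=\la\,q^{m(m-1)/2}$, which is independent of $S$. Since the whole argument is just counting cosets of a subgroup, there is no serious obstacle; the only point that requires care is the surjectivity of $\phi$, which is exactly where the hypothesis that $q$ is odd enters.
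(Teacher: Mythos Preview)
Your proof is correct and follows essentially the same approach as the paper: both arguments show that the map $C\mapsto C+C^{\pr}$ from $B(m,q)$ to $S(m,q)$ has fibers of constant size, and then scale by the constant multiplicity $\la$ in $J$. The paper does this concretely by writing $C(x,y)=\sum_{i,j}c_{ij}x_iy_j$ and counting solutions to $a_{ij}=c_{ij}+c_{ji}$, whereas you phrase it abstractly as a surjective $\Fq$-linear map whose fibers are cosets of the kernel; the only substantive step in either version is surjectivity (equivalently, solvability of $a_{ii}=2c_{ii}$), which is exactly where $q$ odd is used, and you have identified this correctly.
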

\begin{proof}
Set $x=(x_1,x_2,\ldots,x_m) \in V$ and $y=(y_1,y_2,\ldots,y_m) \in V$. Note that each $C\in B(m,q)$ can be uniquely represented as $C(x,y)=\sum_{1 \le i,j \le m} c_{ij}x_iy_j$, where $c_{ij} \in \Fq$, $1 \le i,j \le m$. Consequently, $C^{\pr}$ can be represented as $C^{\pr}(x,y)=\sum_{1 \le i,j \le m} c_{ji}x_iy_j$. Besides, each $A\in S(m,q)$ can be uniquely represented as $A(x,y)=\sum_{1 \le i,j \le m} a_{ij}x_iy_j$, where $a_{ij} \in \Fq$, $a_{ij}=a_{ji}$, $1 \le i,j \le m$. Hence, $A=C+C^{\pr}$ if and only if $a_{ij}=c_{ij}+c_{ji}$ for $1 \le i,j \le m$. Therefore, the number of $C \in B(m,q)$ satisfying $A=C+C^{\pr}$ is a constant independent of the choice of $A$. That means each element of $S(m,q)$ occurs a constant number of times in the multiset $\{C+C^{\pr} \mid C \in B(m,q)\}$. Since $J$ is a multiset in which each element of $B(m,q)$ has the same multiplicity, then each element of $S(m,q)$ occurs a constant number of times in the multiset $\{C+C^{\pr} \mid C \in J\}$.
\end{proof}

\begin{lemma}{\rm \cite[Lemma 4.6]{Sch15}}\label{lem-bilirep}
Let $U$ be a $t$-dimensional subspace of $\Fqm$. Let $l$ be an integer and let $s$ be an integer coprime to $m$. Then every bilinear form from $U \times \Fqm$ to $\Fq$ can be uniquely expressed in the form
$$
B_\la(x,y)=\Tr^{q^m}_q\big(\sum_{j=0}^{t-1}\la_jx^{q^{s(j-l)}}y\big),
$$
where $\la=(\la_0,\la_1,\ldots,\la_{t-1}) \in \Fqm^t$.
\end{lemma}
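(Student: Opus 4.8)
The plan is to view $\la \mapsto B_\la$ as an $\Fq$-linear map $\Phi$ from $\Fqm^t$ to the space $\mathcal{B}$ of all $\Fq$-bilinear forms $U \times \Fqm \to \Fq$, and to prove it is a bijection by a dimension count together with an injectivity argument. First I would observe that $\dim_{\Fq}\Fqm^t = tm$, while $\dim_{\Fq}\mathcal{B} = \dim_{\Fq}U \cdot \dim_{\Fq}\Fqm = tm$ as well (a bilinear form on $U \times \Fqm$ is a matrix in $\Fq^{t\times m}$ once bases are chosen); and $\Phi$ is $\Fq$-linear because each $\la_j$ enters $\Tqmq\big(\sum_j \la_j x^{q^{s(j-l)}}y\big)$ linearly, without any Frobenius twist, and $\Tqmq$ is $\Fq$-linear. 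Hence it suffices to prove $\Phi$ injective; bijectivity, i.e. both the existence and the uniqueness of the representation, then follows since the two spaces have the same finite $\Fq$-dimension.

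For injectivity, suppose $B_\la \equiv 0$. Fixing $x \in U$ and setting $z_x = \sum_{j=0}^{t-1}\la_j x^{q^{s(j-l)}} \in \Fqm$, the hypothesis gives $\Tqmq(z_x y) = 0$ for every $y \in \Fqm$; since the bilinear form $(z,y)\mapsto \Tqmq(zy)$ on $\Fqm$ is nondegenerate, $z_x = 0$. Thus $\sum_{j=0}^{t-1}\la_j x^{q^{s(j-l)}} = 0$ for all $x \in U$. Applying the $\Fq$-linear bijection $x \mapsto x^{q^{-sl}}$ of $\Fqm$, which carries $U$ onto a $t$-dimensional $\Fq$-subspace $U'$ (and writing $x = w^{q^{sl}}$ for $w \in U'$), this is equivalent to $\sum_{j=0}^{t-1}\la_j w^{q^{sj}} = 0$ for all $w \in U'$.

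To conclude, I would pick an $\Fq$-basis $v_1,\ldots,v_t$ of $U'$ and read the $t$ relations $\sum_{j=0}^{t-1}\la_j v_i^{q^{sj}} = 0$, $1 \le i \le t$, as a homogeneous linear system in $(\la_0,\ldots,\la_{t-1})$ whose coefficient matrix is the generalized Moore matrix $\big(v_i^{q^{sj}}\big)_{1\le i \le t,\ 0\le j\le t-1}$. Because $\gcd(s,m)=1$, the map $x \mapsto x^{q^s}$ generates $\mathrm{Gal}(\Fqm/\Fq)$ with fixed field $\Fq$, and $t \le m$; hence the generalized Moore determinant is nonzero, precisely because $v_1,\ldots,v_t$ are linearly independent over $\Fq$. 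Therefore $\la = 0$, which establishes injectivity and hence the lemma.

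The step I expect to be the main obstacle — indeed the only non-routine ingredient — is the nonvanishing of the generalized Moore determinant for the $q^s$-power Frobenius; the classical Moore determinant settles the case $s = 1$, and the coprimality of $s$ and $m$ is exactly the hypothesis under which the classical statement transfers, since the fixed field of $x \mapsto x^{q^s}$ on $\Fqm$ is still $\Fq$. The remaining ingredients — the dimension count, the nondegeneracy of the trace form, and the Frobenius substitution $w = x^{q^{-sl}}$ — are bookkeeping. If one prefers to avoid Moore determinants, an equivalent route is to use the fact that a nonzero $q^s$-linearized polynomial whose highest exponent is $q^{s(t-1)}$ has $\Fq$-kernel of dimension at most $t-1$, so it cannot vanish on the $t$-dimensional subspace $U'$; this, too, relies on $\gcd(s,m)=1$.
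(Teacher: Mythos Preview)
The paper does not prove this lemma itself; it is quoted from \cite[Lemma~4.6]{Sch15} without argument, so there is no in-paper proof to compare against. Your proposal is correct and complete. The dimension count, the $\Fq$-linearity of $\la\mapsto B_\la$, the nondegeneracy of the trace pairing, and the Frobenius substitution $w=x^{q^{-sl}}$ are all routine, and you have correctly isolated the one nontrivial ingredient: that a nonzero operator $w\mapsto\sum_{j=0}^{t-1}\la_j\,w^{q^{sj}}$ on $\Fqm$ has $\Fq$-kernel of dimension at most $t-1$. Your justification via the generalised Moore determinant is valid: since $\gcd(s,m)=1$, the map $x\mapsto x^{q^s}$ generates $\mathrm{Gal}(\Fqm/\Fq)$, so $\Fqm$ and $\F_{q^s}$ are linearly disjoint over $\Fq$, whence $\Fq$-independence of $v_1,\dots,v_t\in\Fqm$ forces $\F_{q^s}$-independence, and the classical Moore determinant over the base field $\F_{q^s}$ then gives the nonvanishing. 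The alternative skew-polynomial/kernel-bound route you sketch is equivalent and equally standard.
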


Let $U$ be a subspace of $\Fqm$ and $B$ be a bilinear form over $\Fqm$. Let $B|_{U}$ denote the restriction of $B$ on $U \times U$ and $B|_{U\times\Fqm}$ denote the restriction of $B$ on $U\times\Fqm$. Now, we are ready to show that certain subsets of $\Sym(m,q)$ are $t$-designs.

\begin{proposition}\label{prop-tdesignfamily}
Let $s$ be an integer coprime to $m$.
\begin{itemize}
\item[1)] When $m$ is odd and $t$ is even with $2 \le t < m$, define a set of symmetric bilinear forms on $\Fqm$ as
$$
\cY_1:=\Big\{\Tr^{q^m}_q\big(\sum_{j=\frac{m+1}{2}}^{\frac{m+t-1}{2}}\la_j(x^{q^{sj}}y+xy^{q^{sj}})\big) \mid \la_{\frac{m+1}{2}},\ldots,\la_{\frac{m+t-1}{2}} \in \Fqm\Big\}.
$$
Then $\cY_1$ is a $t$-design in $\Sym(m,q)$.
\item[2)] When $m$ is even and $t$ is odd with $1 \le t < m$, define a set of symmetric bilinear forms on $\Fqm$ as
\begin{align*}
\cY_2:=\Big\{\Tr^{q^m}_q\big(\la_{\hm}x^{q^{\frac{sm}{2}}}y+\sum_{j=\frac{m+2}{2}}^{\frac{m+t-1}{2}}&\la_j(x^{q^{sj}}y+xy^{q^{sj}})\big) \\ &\mid \la_{\hm} \in \Fqhm, \la_{\frac{m+2}{2}},\ldots,\la_{\frac{m+t-1}{2}} \in \Fqm\Big\}.
\end{align*}
Then $\cY_2$ is a $t$-design in $\Sym(m,q)$.
\end{itemize}
\end{proposition}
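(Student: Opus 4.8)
The plan is to verify the combinatorial criterion for $t$-designs given in Proposition~\ref{prop-tdesign}. Throughout, let $\cY$ stand for $\cY_1$ (when $m$ is odd and $t$ even) or for $\cY_2$ (when $m$ is even and $t$ odd), and fix a $t$-dimensional subspace $U\subseteq\Fqm$ together with a symmetric bilinear form $A$ on $U$; I must show that the number of $B\in\cY$ with $B|_{U\times U}=A$ is independent of $U$ and $A$. First I would record that $\cY$ is an $\Fq$-linear subspace of $S(m,q)$: each displayed generator is symmetric, because $x^{q^{sj}}y+xy^{q^{sj}}$ is visibly symmetric and $\Tqmq(\la_{\hm}x^{q^{sm/2}}y)$ is symmetric precisely because $\la_{\hm}\in\Fqhm$ is fixed by $z\mapsto z^{q^{sm/2}}$ (the fixed field of that map is $\Fqhm$, since $\gcd(s,m)=1$ forces $s$ odd when $m$ is even). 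Writing a generic element of $\cY$ as $\Tqmq(L(x)y)$ with $L$ a linearized polynomial, the exponents occurring in $L$ are $s$ times the $t$ integers in the run $\big[\frac{m-t+1}{2},\frac{m+t-1}{2}\big]$ — here one uses the parities of $m$ and $t$ to see the interval consists of integers, uses $t<m$ to see the $t$ residues are distinct modulo $m$, and (for $\cY_2$) uses $sm/2\equiv m/2\pmod{m}$ to see the middle exponent lies in this run — so the coefficients of $L$ recover the defining parameters and the parametrization is injective. Hence $|\cY|=q^{mt/2}$, the restriction map $\rho_U\colon\cY\to S(t,q)$, $B\mapsto B|_{U\times U}$, is $\Fq$-linear, and it suffices to prove that $\rho_U$ is surjective: then every fibre has the $(U,A)$-independent size $q^{mt/2}/q^{t(t+1)/2}=q^{t(m-t-1)/2}$.

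For surjectivity I would pass to $U\times\Fqm$. By Lemma~\ref{lem-bilirep}, applied with the length-$t$ run of exponents identified above, every bilinear form $C\colon U\times\Fqm\to\Fq$ is uniquely $C_\nu(x,y)=\Tqmq\big(\sum_k\nu_kx^{q^{sk}}y\big)$ with $k$ ranging over $\big[\frac{m-t+1}{2},\frac{m+t-1}{2}\big]$ and $\nu\in\Fqm^t$. Since this run is symmetric modulo $m$, the assignment $\nu\mapsto\nu^{\pr}$ given by $\nu^{\pr}_k=\nu_{-k}^{q^{sk}}$ is an $\Fq$-linear involution of $\Fqm^t$ satisfying $C_{\nu^{\pr}}(x,y)=C_\nu(y,x)$ for $x,y\in U$. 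Matching the coefficients of the linearized polynomial $L$ above identifies $\rho_U(\cY)$ with $\{\,C_\nu|_{U\times U}\mid\nu=\nu^{\pr}\,\}$: the condition $\nu=\nu^{\pr}$ forces the coordinate at the unique index fixed by $k\mapsto-k$ (namely $m/2$, which is present exactly when $m$ is even) to lie in $\Fqhm$ and ties the remaining coordinates in $\pm$-pairs by a Frobenius twist, which is exactly the shape of $\cY_1$ and $\cY_2$. Given the symmetric $A$ on $U$, since forms on $U\times\Fqm$ restrict surjectively onto forms on $U\times U$ I may choose $\nu^{(0)}$ with $C_{\nu^{(0)}}|_{U\times U}=A$ and set $\tilde\nu=\tfrac12\big(\nu^{(0)}+(\nu^{(0)})^{\pr}\big)$, which makes sense because $q$ is odd. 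As $\nu\mapsto\nu^{\pr}$ is an involution, $\tilde\nu=\tilde\nu^{\pr}$, so $C_{\tilde\nu}|_{U\times U}\in\rho_U(\cY)$; and because $A$ is symmetric, $C_{\tilde\nu}|_{U\times U}=\tfrac12\big(A+A^{\pr}\big)=A$, where $A^{\pr}(x,y)=A(y,x)$. Hence $\rho_U$ is onto.

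I expect this surjectivity step to be the main obstacle: the content is that the rigid family $\cY$, carrying only about $t/2$ free $\Fqm$-parameters (plus, when $m$ is even, a single $\Fqhm$-parameter), is still flexible enough to realize every symmetric bilinear form on every $t$-dimensional $U$, and this is exactly where the hypothesis $t<m$ enters, making the dimension count $mt/2\ge t(t+1)/2$ possible. Two ingredients carry it: Lemma~\ref{lem-bilirep}, which turns the exponents $s\big[\frac{m+1}{2},\frac{m+t-1}{2}\big]$ together with their negatives (and, for $m$ even, the self-paired exponent $sm/2$) into a full basis of forms on $U\times\Fqm$ indexed by a length-$t$ run; and the symmetrization $\tfrac12(\nu+\nu^{\pr})$, available only for $q$ odd, which lands one inside the fixed space of the transpose involution, and that fixed space is precisely $\rho_U(\cY)$. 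The rest is routine but necessary bookkeeping: that the relevant integer intervals really are runs of $t$ consecutive integers in the two parity cases, that $t<m$ separates the $t$ exponents modulo $m$, and that the middle data for $m$ even behaves correctly (i.e.\ $sm/2\equiv m/2\pmod m$ and the fixed field of $z\mapsto z^{q^{sm/2}}$ is $\Fqhm$). One could also reach the constant-fibre conclusion via Lemma~\ref{lem-bilisym}, by writing a $\cY$-form as $C+C^{\pr}$ with $C(x,y)=\Tqmq(\sum_j\la_jx^{q^{sj}}y)$ over the ``upper half'' of the run, but the surjectivity input above would still be needed.
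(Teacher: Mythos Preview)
Your argument is correct and reaches the same conclusion as the paper, but the route is genuinely different. Both proofs invoke Proposition~\ref{prop-tdesign} as the criterion and both use Lemma~\ref{lem-bilirep} to control bilinear forms on $U\times\Fqm$ by a length-$t$ run of exponents. From there the paper works \emph{multiset-wise}: it forms $\{B_\mu+B_\mu^{\pr}\mid\mu\in\Fqm^t\}$, applies Lemma~\ref{lem-bilisym} to see that its restrictions hit each symmetric $A$ on $U$ a constant number of times, and then carries out an explicit trace computation to show that this multiset covers every element of $\cY$ with the same multiplicity. You instead identify $\cY$ with the fixed space of the $\Fq$-linear involution $\nu\mapsto\nu^{\pr}$ on $\Fqm^t$, observe that $\rho_U$ is linear, and prove surjectivity directly by the averaging $\tilde\nu=\tfrac12(\nu^{(0)}+(\nu^{(0)})^{\pr})$; the constant fibre size then drops out of a dimension count. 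Your approach is cleaner in that it bypasses Lemma~\ref{lem-bilisym} and the page of trace identities entirely, trading them for the single observation that the transpose on $B(U,U)$ is realised by $\nu\mapsto\nu^{\pr}$ at the parameter level; the paper's approach, on the other hand, never needs to name the involution or isolate $\cY$ as a fixed space. One small inaccuracy: your final remark that the Lemma~\ref{lem-bilisym} route ``would still need'' your surjectivity input is not quite right --- in the paper's argument, surjectivity is never isolated as a separate step but is a consequence of the uniform multiset count being positive.
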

\begin{proof}
Let $U$ be a $t$-dimensional subspace of $\Fqm$ and $A$ be a symmetric bilinear form on $U$. Set $l=\frac{m-t+1}{2}$. For $\mu=(\mu_0,\mu_1,\cdots,\mu_{t-1})\in\Fqm^t$, define a bilinear form $B_{\mu}$ on $\Fqm$ as
$$
B_{\mu}(x,y)=\Tqmq(\sum_{j=0}^{t-1}\mu_jx^{q^{s(j+l)}}y)=\Tqmq(\sum_{j=l}^{\frac{m+t-1}{2}}\mu_{j-l}x^{q^{sj}}y)
$$
where $x, y \in \Fqm$. By Lemma~\ref{lem-bilirep}, when $\mu$ ranges over $\Fqm^t$, $B_{\mu}|_{U\times\Fqm}$ ranges over all bilinear forms over $U \times \Fqm$ exactly once. Then $\{B_{\mu}|_{U} \mid \mu \in \Fqm^t\}$ is a multiset in which each bilinear form on $U$ occurs a constant number of times, which depends only on $t$. By Lemma~\ref{lem-bilisym}, $A$ occurs a constant number of times in the multiset
$$
\{C+C^{\pr} \mid C \in \{B_{\mu}|_U \mid \mu \in \Fqm^t \}\}.
$$
Equivalently, the number of elements in the multiset $\{D+D^{\pr} \mid D \in \{B_{\mu} \mid \mu \in \Fqm^t\}\}$, which is an extension of $A$, is a constant independent of $U$ and $A$.

Let $m$ be odd. Given $\mu=(\mu_0,\mu_1,\cdots,\mu_{t-1})\in\Fqm^t$, we have
\begin{align*}
 &B_{\mu}(x,y)+B_{\mu}^{\pr}(x,y) \\
%=&B_{\mu}(x,y)+B_{\mu}(y,x) \\
=&\Tqmq(\sum_{j=l}^{\frac{m+t-1}{2}}\mu_{j-l}(x^{q^{sj}}y+xy^{q^{sj}}))\\
=&\Tqmq(\sum_{j=\hmpo}^{\frac{m+t-1}{2}}\mu_{j-l}(x^{q^{sj}}y+xy^{q^{sj}}))+\Tqmq(\sum_{j=l}^{\frac{m-1}{2}}\mu_{j-l}(x^{q^{sj}}y+xy^{q^{sj}}))\\
=&\Tqmq(\sum_{j=\hmpo}^{\frac{m+t-1}{2}}\mu_{j-l}(x^{q^{sj}}y+xy^{q^{sj}}))+\Tqmq(\sum_{j=\hmpo}^{\frac{m+t-1}{2}}\mu_{m-j-l}(x^{q^{s(m-j)}}y+xy^{q^{s(m-j)}}))\\
=&\Tqmq(\sum_{j=\hmpo}^{\frac{m+t-1}{2}}\mu_{j-l}(x^{q^{sj}}y+xy^{q^{sj}}))+\Tqmq(\sum_{j=\hmpo}^{\frac{m+t-1}{2}}\mu_{m-j-l}^{q^{sj}}(x^{q^{sj}}y+xy^{q^{sj}}))\\
=&\Tqmq(\sum_{j=\hmpo}^{\frac{m+t-1}{2}}(\mu_{j-l}+\mu_{m-j-l}^{q^{sj}})(x^{q^{sj}}y+xy^{q^{sj}})).
\end{align*}
Thus,
\begin{align*}
&\{D+D^{\pr} \mid D \in \{B_{\mu} \mid \mu \in \Fqm^t\}\} \\ =&\{\Tqmq(\sum_{j=\hmpo}^{\frac{m+t-1}{2}}(\mu_{j-l}+\mu_{m-j-l}^{q^{sj}})(x^{q^{sj}}y+xy^{q^{sj}})) \mid \mu \in \Fqm^t\}.
\end{align*}
Note that $\{\mu_{j-l}, \mu_{m-j-l} \mid \hmpo \le j \le \frac{m+t-1}{2}\}=\{\mu_j \mid 0 \le j \le t-1\}$. When $\mu_{j-l}$ and $\mu_{m-j-l}$ range over $\Fqm$, $\mu_{j-l}+\mu_{m-j-l}^{q^{sj}}$ ranges over $\Fqm$ for $q^m$ times. Then each element of $\cY_1$ occurs a constant number of times in $\{D+D^{\pr} \mid D \in \{B_{\mu} \mid \mu \in \Fqm^t\}\}$. Hence, the number of elements in $\cY_1$ that are an extension of $A$, is a constant independent of $U$ and $A$. By Proposition~\ref{prop-tdesign}, we have shown that $\cY_1$ is a $t$-design in $\Sym(m,q)$, which completes the proof of 1).

Let $m$ be even. Given $\mu=(\mu_0,\mu_1,\cdots,\mu_{t-1})\in\Fqm^t$, we have
\begin{align*}
 &B_{\mu}(x,y)+B_{\mu}^{\pr}(x,y) \\
%=&B_{\mu}(x,y)+B_{\mu}(y,x) \\
=&\Tqmq(\sum_{j=l}^{\frac{m+t-1}{2}}\mu_{j-l}(x^{q^{sj}}y+xy^{q^{sj}}))\\
=&\Tqmq(\mu_{\hm-l}(x^{q^{\frac{sm}{2}}}y+xy^{q^{\frac{sm}{2}}}))+\Tqmq(\sum_{j=\hmpt}^{\frac{m+t-1}{2}}\mu_{j-l}(x^{q^{sj}}y+xy^{q^{sj}}))\\
&+\Tqmq(\sum_{j=l}^{\hmmt}\mu_{j-l}(x^{q^{sj}}y+xy^{q^{sj}}))\\
=&\Tqhmq((\mu_{\hm-l}+\mu_{\hm-l}^{q^{\hm}})(x^{q^{\frac{sm}{2}}}y+xy^{q^{\frac{sm}{2}}}))+\Tqmq(\sum_{j=\hmpt}^{\frac{m+t-1}{2}}\mu_{j-l}(x^{q^{sj}}y+xy^{q^{sj}}))\\ &+\Tqmq(\sum_{j=\hmpt}^{\frac{m+t-1}{2}}\mu_{m-j-l}(x^{q^{s(m-j)}}y+xy^{q^{s(m-j)}}))\\
=&\Tqmq((\mu_{\hm-l}+\mu_{\hm-l}^{q^{\hm}})x^{q^{\frac{sm}{2}}}y)+\Tqmq(\sum_{j=\hmpt}^{\frac{m+t-1}{2}}\mu_{j-l}(x^{q^{sj}}y+xy^{q^{sj}}))\\
&+\Tqmq(\sum_{j=\hmpt}^{\frac{m+t-1}{2}}\mu_{m-j-l}^{q^{sj}}(x^{q^{sj}}y+xy^{q^{sj}}))\\
=&\Tqmq((\mu_{\hm-l}+\mu_{\hm-l}^{q^{\hm}})x^{q^{\frac{sm}{2}}}y+\sum_{j=\hmpt}^{\frac{m+t-1}{2}}(\mu_{j-l}+\mu_{m-j-l}^{q^{sj}})(x^{q^{sj}}y+xy^{q^{sj}})).
\end{align*}
Thus,
\begin{align*}
&\{D+D^{\pr} \mid D \in \{B_{\mu} \mid \mu \in \Fqm^t\}\} \\  =&\{\Tqmq((\mu_{\hm-l}+\mu_{\hm-l}^{q^{\hm}})x^{q^{\frac{sm}{2}}}y+\sum_{j=\hmpt}^{\frac{m+t-1}{2}}(\mu_{j-l}+\mu_{m-j-l}^{q^{sj}})(x^{q^{sj}}y+xy^{q^{sj}})) \mid \mu \in \Fqm^t\}.
\end{align*}
Note that $\{\mu_{\hm-l}, \mu_{j-l}, \mu_{m-j-l} \mid \hmpt \le j \le \frac{m+t-1}{2}\}=\{\mu_j \mid 0 \le j \le t-1\}$. When $\mu_{\hm-l}$ ranges over $\Fqm$, $\mu_{\hm-l}+\mu_{\hm-l}^{q^{\hm}}$ ranges over $\Fqhm$ for $q^{\hm}$ times.  Meanwhile, when $\mu_{j-l}$ and $\mu_{m-j-l}$ range over $\Fqm$, $\mu_{j-l}+\mu_{m-j-l}^{q^{sj}}$ ranges over $\Fqm$ for $q^m$ times. Then each element of $\cY_2$ occurs a constant number of times in $\{D+D^{\pr} \mid D \in \{B_{\mu} \mid \mu \in \Fqm^t\}\}$. Hence, the number of elements in $\cY_2$ that are an extension of $A$, is a constant independent of $U$ and $A$. By Proposition~\ref{prop-tdesign}, we have shown that $\cY_2$ is a $t$-design in $\Sym(m,q)$, which completes the proof of 2).
\end{proof}

Now, we are ready to derive the inner distributions of $\cS_1$ and $\cS_2$.

\begin{proposition}\label{prop-S}
For $j \in \{1,2\}$, let $(a_0^j,a_{1,1}^j,a_{1,-1}^j,a_{2,1}^j,a_{2,-1}^j,\ldots,a_{m,1}^j,a_{m,-1}^j)$ be the inner distribution of $\cS_j$, defined in (\ref{eqn-S1}) and (\ref{eqn-S2}).
\begin{itemize}
\item[1)] For $m$ being odd and $\hmmt \le i \le m-2$, $\cS_1$ is a proper $(2m-2i-1)$-code and $(2i+3-m)$-design in $\Sym(m,q)$. The inner distribution of $\cS_1$ satisfies (\ref{eqn-innerdis1}), in which $|Y|=|\cS_1|=q^{m(i-\frac{m-3}{2})}$. In particular, $a_{2m-2i-1,\tau}^1>0$ for $\tau \in \{1,-1\}$.
\item[2)] For $m$ being even and $\hmmt \le i \le m-2$, $\cS_2$ is a proper $(2m-2i-2)$-code and $(2i+3-m)$-design in $\Sym(m,q)$. The inner distribution of $\cS_2$ satisfies (\ref{eqn-innerdis2}), in which $|Y|=|\cS_2|=q^{m(i-\frac{m-3}{2})}$. In particular, for $\tau \in \{1,-1\}$ satisfying $\tau\eta(-1)^{m-i-1}=1$, we have $a_{2m-2i-2,\tau}^2=0$ and $a_{2m-2i-2,-\tau}^2>0$.
\end{itemize}
\end{proposition}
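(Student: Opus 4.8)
\emph{Strategy and the design property.} The plan is to treat 1) and 2) in parallel, via the observation that $\cS_1$ and $\cS_2$ are precisely the families $\cY_1$ and $\cY_2$ of Proposition~\ref{prop-tdesignfamily} with $s=1$ and $t:=2i+3-m$: using $\Tqmq(\la x^{q^j}y)+\Tqmq(\la^{q^{-j}}x^{q^{-j}}y)=\Tqmq(\la(x^{q^j}y+xy^{q^j}))$ one rewrites (\ref{eqn-S1}) and (\ref{eqn-S2}) and checks that, as sets, $\cS_1=\cY_1$ when $m$ is odd and $\cS_2=\cY_2$ when $m$ is even. The hypothesis $\hmmt\le i\le m-2$ is exactly the admissible range for $t$ in Proposition~\ref{prop-tdesignfamily} ($t$ even with $2\le t<m$ for $m$ odd; $t$ odd with $1\le t<m$ for $m$ even), so $\cS_1$ (resp.\ $\cS_2$) is a $(2i+3-m)$-design in $\Sym(m,q)$.

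\emph{A coarse code estimate.} Since $\cS_1$ and $\cS_2$ are additive, being a $d$-code just means every nonzero form has rank at least $d$. Any nonzero $B$ in $\cS_1$ or $\cS_2$ is of the form $B(x,y)=\Tqmq(L(x)y)$ for a nonzero $q$-linearized polynomial $L(x)=\sum_k c_kx^{q^k}$ all of whose exponents $q^k$ satisfy $m-i-1\le k\le i+1$; the symmetry of $B$ forces $c_k=c_{m-k}^{q^k}$, whence the largest and smallest exponents $a,b$ of $L$ satisfy $a+b=m$. After the substitution $z=x^{q^b}$ one has $L(x)=M(z)$ with $M$ a separable polynomial of degree $q^{a-b}$ having nonzero linear coefficient, so $\dim\Rad B=\dim\Ker L\le a-b\le 2i+2-m$, hence $\rank(B)\ge 2m-2i-2$. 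Thus $\cS_1$ and $\cS_2$ are $(2m-2i-2)$-codes; writing $l=m-i-1$ this is a $(2l)$-code, in $\Sym(2n+1,q)$ with $n=\hmmo$ when $m$ is odd and in $\Sym(2n,q)$ with $n=\hm$ when $m$ is even.

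\emph{Inner distribution and self-improvement.} The design parameter matches what Proposition~\ref{prop-innereven} requires: when $m$ is even, $2n-2l+1=2i+3-m$ and $\cS_2$ is a $(2i+3-m)$-design; when $m$ is odd, $2n-2l+1=2i+2-m$ and the $(2i+3-m)$-design $\cS_1$ is a fortiori a $(2i+2-m,\eta(-1)^{n-l+1})$-design. Hence Proposition~\ref{prop-innereven} gives (\ref{eqn-innerdis2}) for $\cS_2$ and (\ref{eqn-innerdis3}) for $\cS_1$, with $|Y|=q^{m(i-\frac{m-3}{2})}$. For $m$ odd this size equals $q^{(2n+1)(n-l+1)}$, so every factor $\frac{|Y|}{q^{(2n+1)(n-l+1)}}-1$ occurring in (\ref{eqn-innerdis3}) at running index $l$ vanishes, forcing $a^1_{2m-2i-2,\tau}=0$ for both $\tau$; therefore $\cS_1$ is in fact a $(2m-2i-1)$-code. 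It is then a $(2l'-1)$-code with $l'=m-i$ and a $(2n-2l'+3)=(2i+2-m)$-design, so Proposition~\ref{prop-innerodd} now yields (\ref{eqn-innerdis1}); evaluating this at running index $l'$ gives $a^1_{2m-2i-1,\tau}=\tfrac12\left[{\hmmo\atop m-i-1}\right](q^m-1)>0$, so $\cS_1$ is a proper $(2m-2i-1)$-code. When $m$ is even, substituting $|Y|$ into (\ref{eqn-innerdis2}) at running index $l$ leaves only the $j=0$ term and gives $a^2_{2m-2i-2,\tau}=\tfrac12\left[{\hm\atop m-i-1}\right](q^{m-i-1}-1)\bigl(1-\tau\eta(-1)^{m-i-1}\bigr)$, which is $0$ when $\tau\eta(-1)^{m-i-1}=1$ and positive when $\tau\eta(-1)^{m-i-1}=-1$; this is exactly the last assertion of 2) and shows $\cS_2$ is a proper $(2m-2i-2)$-code.

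\emph{The main obstacle.} The delicate step is pinning the code parameter of $\cS_1$ at the odd value $2m-2i-1$ when $m$ is odd. The elementary degree bound above delivers only a $(2m-2i-2)$-code, and a nonzero form of rank exactly $2m-2i-2$ is not excluded by the BCH bound on $\codespe$ either, since the corresponding coset of $\PRM_q(1,m)$ would have minimum weight exactly $\de_i$. What rescues the argument is that the design property is strong enough that identity (\ref{eqn-innerdis3}), applied already with the coarse code parameter, by itself forces the offending frequencies $a^1_{2m-2i-2,\tau}$ to vanish; the sharp parameter emerges only after this self-improvement, after which the statement follows from a second application of the inner-distribution formulas. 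The remaining ingredients---the set identity $\cS_j=\cY_j$, the equality $q^{m(i-\frac{m-3}{2})}=q^{(2n+1)(n-l+1)}$ when $m$ is odd, and the $q^2$-binomial bookkeeping needed to evaluate (\ref{eqn-innerdis1})--(\ref{eqn-innerdis3}) at the critical index---are routine, using only Proposition~\ref{prop-tdesignfamily} and Propositions~\ref{prop-innerodd} and~\ref{prop-innereven}.
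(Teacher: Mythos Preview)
Your proposal is correct and follows essentially the same approach as the paper: identify $\cS_j$ with the family $\cY_j$ of Proposition~\ref{prop-tdesignfamily} to obtain the $(2i+3-m)$-design property, bound ranks below via the degree of the associated linearized polynomial to get a $(2m-2i-2)$-code, and then for $m$ odd use~(\ref{eqn-innerdis3}) to bootstrap to a $(2m-2i-1)$-code before invoking~(\ref{eqn-innerdis1}); for $m$ even apply~(\ref{eqn-innerdis2}) directly. Your explicit evaluations $a^1_{2m-2i-1,\tau}=\tfrac12\left[{\hmmo\atop m-i-1}\right](q^m-1)$ and $a^2_{2m-2i-2,\tau}=\tfrac12\left[{\hm\atop m-i-1}\right](q^{m-i-1}-1)\bigl(1-\tau\eta(-1)^{m-i-1}\bigr)$, as well as the ``main obstacle'' commentary isolating the self-improvement step, go slightly beyond what the paper makes explicit, but the argument is the same.
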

\begin{proof}
1) Let $\la_j \in \Fqm$, where $\hmpo \le j \le i+1$. Denote
$$
B_{\la}(x,y)=\Tqmq\Big(\sum_{j=\hmpo}^{i+1}(\la_jx^{q^j}+\la_j^{q^{-j}}x^{q^{-j}})y\Big) \in \cS_1.
$$
Note that
\begin{align*}
\Rad B_{\la}&=\{x \in \Fqm \mid \Tqmq\Big(\sum_{j=\hmpo}^{i+1}(\la_jx^{q^j}+\la_j^{q^{-j}}x^{q^{-j}})y\Big)=0, \forall y \in \Fqm\} \\
      &=\{x \in \Fqm \mid \sum_{j=\hmpo}^{i+1}(\la_jx^{q^j}+\la_j^{q^{m-j}}x^{q^{m-j}})=0\} \\
      &=\{x \in \Fqm \mid \sum_{j=\hmpo}^{i+1}(\la_j^{q^{-(m-i-1)}}x^{q^{j+i+1-m}}+\la_j^{q^{i+1-j}}x^{q^{i+1-j}})=0\}\\
      &=\{x \in \Fqm \mid \sum_{j=\hmpo}^{i+1}(\la_j^{q^{i+1}}x^{q^{j+i+1-m}}+\la_j^{q^{i+1-j}}x^{q^{i+1-j}})=0\}.
\end{align*}
Since $\sum_{j=\hmpo}^{i+1}(\la_j^{q^{i+1}}x^{q^{j+i+1-m}}+\la_j^{q^{i+1-j}}x^{q^{i+1-j}})$ is a linearized polynomial over $\Fqm$ with degree at most $q^{2i+2-m}$. Then $\Rad B_{\la}$ forms a vector space over $\Fq$ with dimension at most $2i+2-m$ \cite[Theorem 3.50]{LN}. Hence, for each $B_{\la} \in \cS_1$, we have shown $\rank(B_{\la})=m-\dim\Rad B_{\la} \ge 2m-2i-2$. Thus, $\cS_1$ is a $(2m-2i-2)$-code in $\Sym(m,q)$.

Applying 1) of Proposition~\ref{prop-tdesignfamily} with $t=2i+3-m$ and $s=1$, we know that $\cS_1$ is a $(2i+3-m)$-design in $\Sym(m,q)$. Thus, $\cS_1$ is a $(2i+2-m,\eta(-1)^{i-\frac{m-3}{2}})$-design by definition. Employing (\ref{eqn-innerdis3}) in Proposition~\ref{prop-innereven} with $l=m-i-1$, together with the fact that $|\cS_1|=q^{m(i-\frac{m-3}{2})}$, we can see that $a_{2m-2i-2,\tau}^1=0$ for $\tau \in \{1,-1\}$. Thus, $\cS_1$ is a $(2m-2i-1)$-code. Employing (\ref{eqn-innerdis1}) in Proposition~\ref{prop-innerodd} with $l=m-i$, we have $a_{2m-2i-1,\tau}^1>0$ for $\tau \in \{1,-1\}$. Therefore, $\cS_1$ is a proper $(2m-2i-1)$-code.

2) Using similar arguments as in the proof of 1), we can show that $\cS_2$ is a $(2m-2i-2)$-code. Applying 2) of Proposition~\ref{prop-tdesignfamily} with $t=2i+3-m$ and $s=1$, we know that $\cS_2$ is a $(2i+3-m)$-design in $\Sym(m,q)$. Employing (\ref{eqn-innerdis2}) of Proposition~\ref{prop-innereven}, we have $a_{2m-2i-2,\tau}^2=0$ and $a_{2m-2i-2,-\tau}^2>0$, where $\tau\eta(-1)^{m-i-1}=1$. Thus, $\cS_2$ is a proper $(2m-2i-2)$-code.
\end{proof}

Now we are ready to prove the following main result of this section.

\begin{theorem}\label{thm-mainodd}
Let $q$ be an odd prime power and $m$ be a positive integer satisfying
$$
\begin{cases}
  m \ge 2 & \mbox{if $q=3$}, \\
  m \ge 1 & \mbox{if $q \ge 5$}.
\end{cases}
$$
\begin{itemize}
\item[1)] For $m$ being odd and $i$ being nonnegative integer with $\frac{m-2}{2} \le i \le m-\ltm-1$, the BCH code $\codespe$ has minimum distance $\de_i=q^m-q^{m-1}-q^i-1$. Moreover, its weight enumerator is
    $$
    T(Z)+\sum_{r=2m-2i-1}^m\sum_{\tau \in \{1,-1\}} a_{r,\tau}^1 U_{r,\tau}(Z),
    $$
    where $a_{r,\tau}^1$ is the same as $a_{r,\tau}$ defined in (\ref{eqn-innerdis1}) with $|Y|=q^{m(i-\frac{m-3}{2})}$, $T(Z)$ is defined in (\ref{eqn-weightenum}) and $U_{r,\tau}(Z)$ is defined in Proposition~\ref{prop-Qweightodd}.
\item[2)] For $m$ being even and $i$ being nonnegative integer with $\frac{m-2}{2} \le i \le m-\ltm-1$, the BCH code $\codespe$ has minimum distance $\de_i=q^m-q^{m-1}-q^i-1$. Moreover, its weight enumerator is
    $$
    T(Z)+\sum_{r=2m-2i-2}^m\sum_{\tau \in \{1,-1\}} a_{r,\tau}^2 U_{r,\tau}(Z),
    $$
    where $a_{r,\tau}^2$ is the same as $a_{r,\tau}$ defined in (\ref{eqn-innerdis2}) with $|Y|=q^{m(i-\frac{m-3}{2})}$, $T(Z)$ is defined in (\ref{eqn-weightenum}) and $U_{r,\tau}(Z)$ is defined in Proposition~\ref{prop-Qweightodd}.
\end{itemize}
\end{theorem}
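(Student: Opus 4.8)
The plan is to assemble the theorem from the structural facts already established, so that only bookkeeping remains. By Proposition~\ref{prop-traceprop} the Bose distance of $\codespe$ is $\de_i=q^m-q^{m-1}-q^i-1$, so by the BCH bound its minimum distance is at least $\de_i$; it therefore suffices to compute the full weight enumerator and, in doing so, to exhibit one codeword of weight exactly $\de_i$.

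\textbf{Weight enumerator.} By Proposition~\ref{prop-cosetunion}, $\codespe$ is the disjoint union of the cosets $(Q(x))_{x\in\Fqm^*}+\PRM_q(1,m)$ with $Q$ ranging over $\cQ_1$ (if $m$ is odd) or $\cQ_2$ (if $m$ is even). By Proposition~\ref{prop-Qweightodd} the weight enumerator of such a coset is $U_{r,\tau}(Z)$ when $Q$ has rank $r\ge1$ and type $\tau$, while it is $T(Z)$ from (\ref{eqn-weightenum}) when $Q=0$, the unique form of rank $0$. Hence the weight enumerator of $\codespe$ equals $T(Z)+\sum_{r\ge1}\sum_{\tau}c_{r,\tau}^{j}U_{r,\tau}(Z)$, where $c_{r,\tau}^{j}=|\cQ_j\cap D_{r,\tau}|$ is the inner distribution of the additive set $\cQ_j$. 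By Proposition~\ref{prop-innerdis}(1), $c_{r,\tau}^{j}=a_{r,\tau}^{j}$, the inner distribution of $\cS_j$; and by Proposition~\ref{prop-S} these vanish for $r$ below $2m-2i-1$ (odd $m$), resp.\ $2m-2i-2$ (even $m$), and are given explicitly by (\ref{eqn-innerdis1}), resp.\ (\ref{eqn-innerdis2}), with $|Y|=q^{m(i-\frac{m-3}{2})}$. Since a form on $\Fqm\cong\Fq^m$ has rank at most $m$, the outer sum truncates to $\sum_{r=2m-2i-1}^{m}$, resp.\ $\sum_{r=2m-2i-2}^{m}$, reproducing the stated weight enumerators.

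\textbf{Minimum distance.} For $m$ odd, set $r=2m-2i-1$; since $i$ is an integer with $\hmmt\le i\le m-2$ (the upper bound being forced by $i\le m-\ltm-1$ as soon as $m\ge3$) one has $3\le r\le m$ and $r$ odd. By Proposition~\ref{prop-S}(1), $a_{r,\tau}^{1}>0$ for \emph{both} $\tau$, so $\cQ_1$ contains a quadratic form $Q$ of rank $r$ and type $\tau=\eta(-1)^{(r-1)/2}=\eta(-1)^{m-i-1}$; for this $Q$ one has $\tau\eta(-1)^{(r-1)/2}=1$, and the first line of $U_{r,\tau}(Z)$ (Table~\ref{tab-qoddrodd}) shows that the coset $(Q(x))_{x\in\Fqm^*}+\PRM_q(1,m)\subset\codespe$ contains $\frac{q-1}{2}((q-1)q^{r-1}-q^{(r-1)/2})>0$ words of weight $q^m-q^{m-1}-q^{m-(r+1)/2}-1=\de_i$. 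For $m$ even, set $r=2m-2i-2$ (so $2\le r\le m$, $r$ even); Proposition~\ref{prop-S}(2) gives $a_{r,-\tau}^{2}>0$ for the $\tau$ with $\tau\eta(-1)^{m-i-1}=1$, hence $\cQ_2$ contains a form $Q$ of rank $r$ and type $-\tau$, for which $(-\tau)\eta(-1)^{r/2}=-1$; reading Table~\ref{tab-qoddreven} with this sign, the corresponding coset contains $(q-1)((q-1)q^{r-1}-q^{(r-2)/2})>0$ words of weight $q^m-q^{m-1}-q^{m-(r+2)/2}-1=\de_i$. In either case, combined with the lower bound $\de_i$, the minimum distance equals $\de_i$.

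The only situation not covered by Proposition~\ref{prop-S}, which is stated for $\hmmt\le i\le m-2$, is $i=m-1$, which by $i\le m-\ltm-1$ can occur only when $m\le2$; there the Bose distance $\de_{m-1}=(q-2)q^{m-1}-1$ has the special form treated in \cite[Theorem 10]{Ding2}, so the equality with the minimum distance is already known, and these finitely many codes can alternatively be verified directly. I expect the genuine difficulty of the whole argument to lie in the already-completed Proposition~\ref{prop-S} (which rests on Schmidt's inner-distribution formulas and the $t$-design verification of Proposition~\ref{prop-tdesignfamily}); within the present theorem the only points requiring care are purely arithmetic: checking that the smallest exponent in $U_{r,\tau}(Z)$ for the chosen $(r,\tau)$ is exactly $\de_i$, that the sign and parity condition on $\tau$ needed to reach the minus-sign branch is precisely the one that Proposition~\ref{prop-S} guarantees, and that the accompanying frequency is strictly positive, all of which must be tracked separately for $r$ odd versus $r$ even and for the $m$-odd versus $m$-even conventions.
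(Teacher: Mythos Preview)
Your proof is correct and follows essentially the same route as the paper: decompose $\codespe$ into cosets of $\PRM_q(1,m)$ via Proposition~\ref{prop-cosetunion}, read each coset's weight enumerator from Proposition~\ref{prop-Qweightodd}, transfer the inner distribution of $\cQ_j$ to that of $\cS_j$ by Proposition~\ref{prop-innerdis}(1), plug in Proposition~\ref{prop-S}, and then select the $(r,\tau)$ that produces a word of weight $\de_i$. Your extra bookkeeping---verifying the relevant frequency is strictly positive and isolating the boundary case $i=m-1$ (i.e.\ $m\le 2$) not literally covered by the stated range of Proposition~\ref{prop-S}---goes slightly beyond what the paper spells out, but does not change the argument.
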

\begin{proof}
When $q=3$, we need $m \ge 2$ to ensure that $\codespe$ is well-defined.

1) By 1) of Proposition~\ref{prop-innerdis}, $\cQ_1$ and $\cS_1$ have the same inner distribution $(a_{r,\tau}^1)$, which has been obtained in 1) of Proposition~\ref{prop-S}. Since $\cQ_1$ is an additive subset, $\cQ_1$ contains exactly one zero quadratic form and $a_{r,\tau}^1$ is the number of quadratic forms in $\cQ_1$ with rank $r$ and type $\tau$. Together with Propositions~\ref{prop-cosetunion} and \ref{prop-Qweightodd}, we can see the weight enumerator is
$$
T(Z)+\sum_{r=2m-2i-1}^m\sum_{\tau \in \{1,-1\}} a_{r,\tau}^1U_{r,\tau}(Z),
$$
where $a_{r,\tau}^1$ is the same as $a_{r,\tau}$ defined in (\ref{eqn-innerdis1}) with $|Y|=|\cS_1|=q^{m(i-\frac{m-3}{2})}$. Let $\tau \in \{1 ,-1\}$ satisfy $\tau\eta(-1)^{m-i-1}=1$. By 1) of Proposition~\ref{prop-S}, we have $a_{2m-2i-1,\tau}^1>0$. Namely, there is a quadratic form $Q \in \cQ_1$ with rank $2m-2i-1$ and type $\tau$. By Proposition~\ref{prop-Qweightodd}, the subcode $(Q(x))_{x\in\Fqm^*}+\PRM_q(1,m)$ contains a codeword of weight $\de_i=q^m-q^{m-1}-q^i-1$. Note that by Proposition~\ref{prop-traceprop}, $\codespe$ has Bose distance $\de_i$. Hence, the minimum distance of $\codespe$ is equal to $\de_i=q^m-q^{m-1}-q^i-1$.

2) By 1) of Proposition~\ref{prop-innerdis}, $\cQ_2$ and $\cS_2$ have the same inner distribution $(a_{r,\tau}^2)$, which has been obtained in 2) of Proposition~\ref{prop-S}. Since $\cQ_2$ is an additive subset, $\cQ_2$ contains exactly one zero quadratic form and $a_{r,\tau}^2$ is the number of quadratic forms in $\cQ_2$ with rank $r$ and type $\tau$. Together with Propositions~\ref{prop-cosetunion} and \ref{prop-Qweightodd}, we can see the weight enumerator is
$$
T(Z)+\sum_{r=2m-2i-2}^m\sum_{\tau \in \{1,-1\}} a_{r,\tau}^2U_{r,\tau}(Z),
$$
where $a_{r,\tau}^2$ is the same as $a_{r,\tau}$ defined in (\ref{eqn-innerdis2}) with $|Y|=|\cS_2|=q^{m(i-\frac{m-3}{2})}$. Let $\tau \in \{1 ,-1\}$ satisfy $\tau\eta(-1)^{m-i-1}=1$. By 2) of Proposition~\ref{prop-S}, we have $a_{2m-2i-2,-\tau}^2>0$. Namely, there is a quadratic form $Q \in \cQ_2$ with rank $2m-2i-2$ and type $-\tau$. By Proposition~\ref{prop-Qweightodd}, the subcode $(Q(x))_{x\in\Fqm^*}+\PRM_q(1,m)$ contains a codeword of weight $\de_i=q^m-q^{m-1}-q^i-1$. Note that by Proposition~\ref{prop-traceprop}, $\codespe$ has Bose distance $\de_i$. Hence, the minimum distance of $\codespe$ is equal to $\de_i=q^m-q^{m-1}-q^i-1$.
\end{proof}

\begin{remark}
Let $p$ be an odd prime. The weight distributions of the even-like subcodes of $\cC_{(p,m,\de_{\hmmo})}$ and $\cC_{(p,m,\de_{\hmpo})}$ with $m$ being odd, as well as the weight distributions of the even-like subcodes of $\cC_{(p,m,\de_{\hmmt})}$ and $\cC_{(p,m,\de_{\hm})}$ with $m$ being even, have been obtained in {\rm\cite{Ding3}}.
\end{remark}

\begin{example}
For $q=3$ and $m=3$, consider ternary narrow-sense primitive BCH code $\cC_{(3,3,\de_1)}$. Numerical experiment shows that $\cC_{(3,3,\de_1)}$ has weight enumerator
$$
1+390Z^{14}+312Z^{15}+520Z^{17}+260Z^{18}+546Z^{20}+156Z^{21}+2Z^{26},
$$
which is consistent with Theorem~\ref{thm-mainodd}. $\cC_{(3,3,\de_1)}$ is a ternary $[26,7,14]$ code. According to the codetable {\rm \cite{Gra}} (see also {\rm \cite[p. 300, Table A.92]{Ding15}}), it has the largest possible minimum distance among all ternary linear codes with length $26$ and dimension $7$.
\end{example}

\begin{example}
For $q=3$ and $m=4$, consider ternary narrow-sense primitive BCH code $\cC_{(3,4,\de_2)}$. Numerical experiment shows that $\cC_{(3,4,\de_2)}$ has weight enumerator
\begin{align*}
&1+3800Z^{44}+3040Z^{45}+14400Z^{47}+9900Z^{48}+17136Z^{50}+10080Z^{51}\\
&+33280Z^{53}+16640Z^{54}+34200Z^{56}+14400Z^{57}+10080Z^{59}+3528Z^{60}\\
&+5040Z^{62}+1440Z^{63}+160Z^{71}+20Z^{72}+2Z^{80},
\end{align*}
which is consistent with Theorem~\ref{thm-mainodd}. $\cC_{(3,4,\de_2)}$ is a ternary $[80,11,44]$ code. According to the codetable {\rm \cite{Gra}}, it has the largest minimum distance among all known ternary linear codes with length $80$ and dimension $11$.
\end{example}

\section{Determining the minimum distance when $q$ is even}\label{sec5}

In this section, we are going to show that the minimum distance of $\codespe$ is equal to its Bose distance $\de_i$, when $q$ is an even prime power. Throughout the rest of this section, we always assume that $q$ is even.

By Proposition~\ref{prop-cosetunion}, in order to determine the minimum distance of $\codespe$, it suffices to compute the minimum weight of each subcode $(Q(x))_{x \in \Fqm^*}+\PRM_q(1,m)$ for each $Q \in \cQ_1$ when $m$ is odd and for each $Q \in \cQ_2$ when $m$ is even. As the first step, the following proposition shows when $q$ is even, the weight distribution of the subcode $(Q(x))_{x \in \Fqm^*}+\PRM_q(1,m)$ depends only on the rank and the type of the quadratic form $Q$ on $\Fqm$.

\begin{proposition}\label{prop-Qweighteven}
Let $q$ be even and $Q$ be a quadratic form on $\Fqm$, whose rank is at least $1$.
\begin{itemize}
\item[1)] If $Q$ has rank $2r$ and type $0$, then the weight enumerator of $(Q(x))_{x \in \Fqm^*}+\PRM_q(1,m)$ is
\begin{align*}
W_{2r,0}(Z)=&(q-1)(q^{2r-1}-q^{r-1})Z^{q^m-q^{m-1}-q^{m-r-1}(q-1)-1}\\
&+(q^{2r-1}+q^{r-1}(q-1))Z^{q^m-q^{m-1}-q^{m-r-1}(q-1)}\\
&+(q-1)(q^m-q^{2r})Z^{q^m-q^{m-1}-1}+(q^m-q^{2r})Z^{q^m-q^{m-1}}\\
&+(q-1)((q-1)q^{2r-1}+q^{r-1})Z^{q^m-q^{m-1}+q^{m-r-1}-1}\\
&+(q-1)(q^{2r-1}-q^{r-1})Z^{q^m-q^{m-1}+q^{m-r-1}}.
\end{align*}
\item[2)] If $Q$ has rank $2r+1$ and type $1$, then the weight enumerator of $(Q(x))_{x \in \Fqm^*}+\PRM_q(1,m)$ is
\begin{align*}
W_{2r+1,1}(Z)=&(q-1)\frac{q^{2r+1}-q^{2r}-q^r}{2}Z^{q^m-q^{m-1}-q^{m-r-1}-1}\\
&+(q-1)\frac{q^{2r}+q^r}{2}Z^{q^m-q^{m-1}-q^{m-r-1}}\\
&+(q-1)(q^m-q^{2r+1}+q^{2r})Z^{q^m-q^{m-1}-1}\\
&+(q^m-q^{2r+1}+q^{2r})Z^{q^m-q^{m-1}}\\
&+(q-1)\frac{q^{2r+1}-q^{2r}+q^r}{2}Z^{q^m-q^{m-1}+q^{m-r-1}-1}\\
&+(q-1)\frac{q^{2r}-q^r}{2}Z^{q^m-q^{m-1}+q^{m-r-1}}.
\end{align*}
\item[3)] If $Q$ has rank $2r$ and type $2$, then the weight enumerator of $(Q(x))_{x \in \Fqm^*}+\PRM_q(1,m)$ is
\begin{align*}
W_{2r,2}(Z)=&(q-1)((q-1)q^{2r-1}-q^{r-1})Z^{q^m-q^{m-1}-q^{m-r-1}-1}\\
&+(q-1)(q^{2r-1}+q^{r-1})Z^{q^m-q^{m-1}-q^{m-r-1}}\\
&+(q-1)(q^m-q^{2r})Z^{q^m-q^{m-1}-1}+(q^m-q^{2r})Z^{q^m-q^{m-1}}\\
&+(q-1)(q^{2r-1}+q^{r-1})Z^{q^m-q^{m-1}+q^{m-r-1}(q-1)-1}\\
&+(q^{2r-1}-q^{r-1}(q-1))Z^{q^m-q^{m-1}+q^{m-r-1}(q-1)}.
\end{align*}
\end{itemize}
The above weight enumerators are also summarized in Tables \ref{tab-qevent0}, \ref{tab-qevent1} and \ref{tab-qevent2}.
\end{proposition}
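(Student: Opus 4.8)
The plan is to follow the proof of Proposition~\ref{prop-Qweightodd}. A codeword of $(Q(x))_{x\in\Fqm^*}+\PRM_q(1,m)$ has the shape $(Q(x)+L(x)+c)_{x\in\Fqm^*}$, with $L$ ranging over the set $\cL$ of homogeneous linear functions on $\Fqm$ and $c$ over $\Fq$, so the weight enumerator is determined by the multiset $\{\wt(Q+L+c)\mid L\in\cL,\ c\in\Fq\}$ of $q^{m+1}$ entries. Since $Q(0)=L(0)=0$, equation~(\ref{eqn-fweight}) gives $\wt(Q+L+c)=q^m-1-N(Q+L+c)+\De_{0,c}$, and the task reduces to describing, for each $c$, the distribution over $L$ of the zero count $N(Q+L+c)$ of the degree-at-most-two function $Q+L+c$. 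This amounts to proving the even-characteristic analogue of Lemma~\ref{lem-quadoddset}: the multiset $\{N(Q+L+c)\}$ depends only on the rank and type of $Q$.

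I would establish that analogue by completing the square over $\Fqm$. Writing $B_Q$ for the alternating form associated with $Q$, Lemma~\ref{lem-rank} gives $\rank(B_Q)=2r$ in each of the three cases of the statement. The behaviour of $N(Q+L+c)$ is dichotomous according to whether $L$ vanishes on $\Rad Q$ or not: if $L$ does not vanish on $\Rad Q$, then a free linear coordinate survives the reduction and $N(Q+L+c)=q^{m-1}$ for every $c$; if $L$ vanishes on $\Rad Q$, then the part $B_Q(\cdot,a)$ of $L$ is absorbed by the translation $x\mapsto x+a$, turning $Q+L$ into $x\mapsto Q(x+a)$ up to an additive constant $c_L$ (together, in the odd-rank case $\tau=1$, with a leftover one-parameter perturbation of the ``square'' term of $Q$, which is additive over $\F_2$ but not over $\Fq$), so that $N(Q+L+c)$ is read from Proposition~\ref{prop-numsolueven} at the argument $c_L+c$. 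A point worth isolating: for $\tau\in\{0,2\}$ the absorbed constant $c_L=Q(a)$, as $L$ runs over the absorbable forms, is itself distributed according to a nondegenerate quadratic form of rank $2r$ and the same type $\tau$, and it is this --- combined with $\De_{0,c}$ --- that produces the $q^{r-1}$-corrections in Tables~\ref{tab-qevent0} and~\ref{tab-qevent2}; for $\tau=1$ the corresponding corrections come from the $\F_2$-linear square perturbation and appear in Table~\ref{tab-qevent1}. Assembling the count of the two strata, the distribution of $c_L$, the behaviour of the forms $tQ$ as $t$ runs over $\Fq^*$ (which controls the switch between types $0$ and $2$ through a $\Tr^q_2$-condition), and substituting into Proposition~\ref{prop-numsolueven}, then yields the weight enumerators $W_{2r,0}(Z)$, $W_{2r+1,1}(Z)$ and $W_{2r,2}(Z)$.

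The step I expect to be the main obstacle is this characteristic-two bookkeeping, which has no counterpart in the odd case of Proposition~\ref{prop-Qweightodd}: adding a linear form to a quadratic form can raise the rank of its degree-two part by one and can flip the type between $0$ and $2$, so the three families in the statement are genuinely intertwined through $L$, and the odd-rank ``square'' direction must be treated separately because it is $\F_2$-linear but not $\Fq$-linear. Keeping the contributions for $c=0$ and for $c\neq0$ apart (the $\De_{0,c}$ term), partitioning the $q^m$ perturbations $L$ into the non-absorbable stratum and the various rank--type strata of the absorbable ones, and verifying that the frequencies in each of Tables~\ref{tab-qevent0}, \ref{tab-qevent1} and~\ref{tab-qevent2} sum to the coset size $q^{m+1}$, is the consistency check that finishes the proof.
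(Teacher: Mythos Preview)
Your overall strategy---reduce the weight to the zero count $N(Q+L+c)$ via~(\ref{eqn-fweight}), then determine the distribution of $N(Q+L+c)$ by absorbing as much of $L$ as possible into a translation of $Q$---is exactly the paper's approach, which is carried out in Lemma~\ref{lem-quadevenset} (itself a consequence of Lemma~\ref{lem-quadeven}). The only organisational difference is that the paper first replaces $Q$ by its canonical form via Proposition~\ref{prop-equivalence} and then performs the absorption as an explicit coordinate substitution (e.g.\ $x_{2j-1}\mapsto y_{2j-1}+b_{2j}$, $x_{2j}\mapsto y_{2j}+b_{2j-1}$), whereas you phrase the same step invariantly in terms of $\Rad Q$ and $B_Q$. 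Your observation that for $\tau\in\{0,2\}$ the absorbed constant $c_L$ is itself distributed like a nondegenerate rank-$2r$, type-$\tau$ quadratic form is exactly what the paper obtains from equation~(\ref{eqn-keyeqneven0}) and its type-$2$ analogue.

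Two side remarks in your third paragraph are mistaken and should simply be deleted; they are not obstacles but red herrings. First, the scalings $tQ$ play no role: in characteristic~$2$ every $t\in\Fq^*$ is a square, so $tQ(x)=Q(\sqrt{t}\,x)$ and $tQ$ has the same rank and type as $Q$. The $\Tr^q_2$-condition you correctly anticipate appears only in the type-$1$ analysis, and it comes from solving $z^2+z+c'=0$ in $\Fq$ after the scaling $y_j\mapsto b_{2r+1}z_j$ (see the paper's equation~(\ref{eqn-keyeqneven1})), not from any rescaling of $Q$. Second, adding a linear form never alters the degree-two part of $Q$, so it cannot change the rank or flip the type between $0$ and $2$; for a fixed $Q$ the multiset $\{N(Q+L+c)\}$ is governed by exactly one of the three tables, and the three families are not intertwined through $L$. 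Once these two remarks are dropped, your outline matches the paper's proof.
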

\begin{proof}
Let $\cL$ be the set of all homogenous linear functions on $\Fqm$. Then, the weight enumerator of $(Q(x))_{x \in \Fqm^*}+\PRM_q(1,m)$ can be read from the multiset $\{\wt(Q+L+c) \mid L \in \cL, c \in \Fq\}$. The conclusion follows from (\ref{eqn-fweight}) and Lemma~\ref{lem-quadevenset}.
\end{proof}

\begin{table}
\begin{center}
\ra{1.5}\caption{Weight distribution of $(Q(x))_{x \in \Fqm^*}+\PRM_q(1,m)$, $q$ even, $Q$ has rank $2r$ and type $0$}
\begin{tabular}{|c|c|}
\hline
Weight  &   Frequency  \\ \hline
$q^m-q^{m-1}-q^{m-r-1}(q-1)-1$ & $(q-1)(q^{2r-1}-q^{r-1})$ \\ \hline
$q^m-q^{m-1}-q^{m-r-1}(q-1)$ & $q^{2r-1}+q^{r-1}(q-1)$ \\ \hline
$q^m-q^{m-1}-1$ & $(q-1)(q^m-q^{2r})$ \\ \hline
$q^m-q^{m-1}$ & $q^m-q^{2r}$ \\ \hline
$q^m-q^{m-1}+q^{m-r-1}-1$ & $(q-1)((q-1)q^{2r-1}+q^{r-1})$ \\ \hline
$q^m-q^{m-1}+q^{m-r-1}$ & $(q-1)(q^{2r-1}-q^{r-1})$ \\ \hline
\end{tabular}
\end{center}
\label{tab-qevent0}
\end{table}

\begin{table}
\begin{center}
\ra{1.5}\caption{Weight distribution of $(Q(x))_{x \in \Fqm^*}+\PRM_q(1,m)$, $q$ even, $Q$ has rank $2r+1$ and type $1$}
\begin{tabular}{|c|c|}
\hline
Weight  &   Frequency  \\ \hline
$q^m-q^{m-1}\pm q^{m-r-1}-1$ & $(q-1)\frac{q^{2r+1}-q^{2r}\pm q^r}{2}$ \\ \hline
$q^m-q^{m-1}\pm q^{m-r-1}$ & $(q-1)\frac{q^{2r}\mp q^r}{2}$ \\ \hline
$q^m-q^{m-1}-1$ & $(q-1)(q^m-q^{2r+1}+q^{2r})$ \\ \hline
$q^m-q^{m-1}$ & $q^m-q^{2r+1}+q^{2r}$ \\ \hline
\end{tabular}
\end{center}
\label{tab-qevent1}
\end{table}

\begin{table}
\begin{center}
\ra{1.5}\caption{Weight distribution of $(Q(x))_{x \in \Fqm^*}+\PRM_q(1,m)$, $q$ even, $Q$ has rank $2r$ and type $2$}
\begin{tabular}{|c|c|}
\hline
Weight  &   Frequency  \\ \hline
$q^m-q^{m-1}-q^{m-r-1}-1$ & $(q-1)((q-1)q^{2r-1}-q^{r-1})$ \\ \hline
$q^m-q^{m-1}-q^{m-r-1}$ & $(q-1)(q^{2r-1}+q^{r-1})$ \\ \hline
$q^m-q^{m-1}-1$ & $(q-1)(q^m-q^{2r})$ \\ \hline
$q^m-q^{m-1}$ & $q^m-q^{2r}$ \\ \hline
$q^m-q^{m-1}+q^{m-r-1}(q-1)-1$ & $(q-1)(q^{2r-1}+q^{r-1})$ \\ \hline
$q^m-q^{m-1}+q^{m-r-1}(q-1)$ & $q^{2r-1}-q^{r-1}(q-1)$ \\ \hline
\end{tabular}
\end{center}
\label{tab-qevent2}
\end{table}

For $\cQ_1$ and $\cQ_2$, define the corresponding sets of alternating bilinear forms as
$$
\cA_j=\{B_Q \mid Q \in \cQ_j\},
$$
where $B_Q(x,y)=Q(x+y)-Q(x)-Q(y)$ and $j=1,2$. More precisely, we have
$$
  \cA_1:=\cA_1(i)=\Bigg\{\Tqmq\Big(\sum_{j=\hmpo}^{i+1}(\la_jx^{q^j}+\la_j^{q^{-j}}x^{q^{-j}})y\Big) \mid \la_{\hmpo},\ldots,\la_{i+1} \in \Fqm \Bigg\}
$$
and
\begin{align*}
  \cA_2:=\cA_2(i)=\Bigg\{\Tqmq\Big(\big(\la_{\hm}x^{q^{\hm}}+\sum_{j=\hmpt}^{i+1}&(\la_jx^{q^j}+\la_j^{q^{-j}}x^{q^{-j}})\big)y\Big) \\
   &\Big | \la_{\hm} \in \Fqhm, \la_{\hmpt},\ldots,\la_{i+1} \in \Fqm \Bigg\}
\end{align*}
with $|\cA_1|=|\cA_2|=q^{m(i-\frac{m-3}{2})}$.

\begin{proposition}\label{prop-A}
For $j \in \{1,2\}$, let $(b_0^j,b_2^j,b_4^j,\ldots,b_{2\lfloor \frac{m}{2} \rfloor}^j)$ be the inner distribution of $\cA_j$.
\begin{itemize}
\item[1)] For $m$ being odd and $\hmmt \le i \le m-2$, $\cA_1$ is a proper $(2m-2i-2)$-code in $\Alt(m,q)$. Thus, $b_{2m-2i-2}^1>0$.
\item[2)] For $m$ being even and $\hmmt \le i \le m-2$, $\cA_2$ is a proper $(2m-2i-2)$-code in $\Alt(m,q)$. Thus, $b_{2m-2i-2}^2>0$.
\end{itemize}
\end{proposition}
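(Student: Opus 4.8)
The plan is to follow the proof of Proposition~\ref{prop-S}, but since $\Alt(m,q)$ is combinatorially simpler than $\Sym(m,q)$, the appeal to Schmidt's inner distribution formulas is replaced by the Delsarte--Goethals size bound of Proposition~\ref{prop-bound}. Throughout, write $\cA$ for $\cA_1$ when $m$ is odd and for $\cA_2$ when $m$ is even; in either case $\cA$ is an additive subset of $\Alt(m,q)$ with $|\cA| = q^{m(i-\frac{m-3}{2})}$, so that $b_{2k}^j = |\cA \cap C_{2k}|$. Establishing that $\cA$ is a proper $(2m-2i-2)$-code amounts to showing $b_{2m-2i-2}^j > 0$, which by 2) of Proposition~\ref{prop-innerdis} is precisely the input needed in the subsequent theorem to locate a quadratic form in $\cQ_j$ of rank $2m-2i-2$ or $2m-2i-1$.

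First I would show that $\cA$ is a $(2m-2i-2)$-code, which is the alternating counterpart of the radical computation in the proof of Proposition~\ref{prop-S}. That computation is insensitive to the characteristic of $\Fq$ and to the parity of $m$, so it transfers essentially verbatim: for a typical element $B_\la \in \cA$ one obtains
$$
\Rad B_\la = \bigl\{\, x \in \Fqm : \sum_{j=\hmpo}^{i+1}\bigl(\la_j^{q^{i+1}}x^{q^{j+i+1-m}}+\la_j^{q^{i+1-j}}x^{q^{i+1-j}}\bigr) = 0 \,\bigr\}
$$
(with the obvious modification when $m$ is even), and the polynomial on the right is linearized of $q$-degree at most $q^{2i+2-m}$. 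Hence $\dim_{\Fq}\Rad B_\la \le 2i+2-m$ and $\rank(B_\la) \ge 2m-2i-2$; since alternating forms have even rank, this says exactly that $\cA$ is a $(2m-2i-2)$-code in $\Alt(m,q)$.

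Next I would establish properness, i.e. that $\cA$ is not a $(2m-2i)$-code. Suppose to the contrary that it were. If $2m-2i-2$ already equals the largest rank of an alternating form on $\Fq^m$ --- namely $m-1$ when $m$ is odd and $m$ when $m$ is even --- then being a $(2m-2i)$-code forces every nonzero entry $b_{2k}^j$ to vanish, so $\cA = \zs$, contradicting $|\cA| = q^{m(i-\frac{m-3}{2})} > 1$. Otherwise $d := m-i$ satisfies $1 \le d \le \lhm$, and Proposition~\ref{prop-bound} applied to the $2d$-code $\cA$ yields
$$
|\cA| \le \begin{cases} q^{m(i-\hmmo)} & \mbox{if $m$ is odd}, \\ q^{(m-1)(i-\hmmt)} & \mbox{if $m$ is even}. \end{cases}
$$
But $|\cA| = q^{m(i-\frac{m-3}{2})}$ exceeds these bounds: the gap in the exponent is $m$ in the odd case and $i+1$ in the even case, both positive. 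This contradiction, together with the first step, shows $\cA$ is a proper $(2m-2i-2)$-code, whence $b_{2m-2i-2}^j > 0$.

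The one point needing care is the boundary value of $i$ at which $2m-2i-2$ reaches the maximal admissible rank of $\Alt(m,q)$: there Proposition~\ref{prop-bound} falls outside its stated range (or becomes vacuous), and properness must be deduced directly from $|\cA| > 1$. Away from that boundary the argument is genuinely short --- the rank lower bound is a transcription of the odd-characteristic computation, and properness reduces to a one-line comparison of exponents, in contrast with the heavier machinery needed in Section~\ref{sec4}.
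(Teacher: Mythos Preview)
Your proposal is correct and follows essentially the same route as the paper's proof: the radical computation (transcribed from Proposition~\ref{prop-S}) to get the $(2m-2i-2)$-code property, followed by a contradiction via the Delsarte--Goethals bound of Proposition~\ref{prop-bound} against the known size $|\cA|=q^{m(i-\frac{m-3}{2})}$. The only difference is that you explicitly isolate the boundary values $i=\hmmo$ (for $m$ odd) and $i=\hmmt$ (for $m$ even), where $d=m-i$ falls just outside the stated range $0\le d\le\lhm$ of Proposition~\ref{prop-bound}; the paper applies the bound uniformly without comment, which is harmless since in those cases a ``$(2m-2i)$-code'' is vacuously the zero subgroup and the size comparison still goes through.
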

\begin{proof}
1) Using similar arguments as in the proof of Proposition~\ref{prop-S}, we can show $\cA_1$ is a $(2m-2i-2)$-code in $\Alt(m,q)$. Assume $b_{2m-2i-2}^1=0$, then $\cA_1$ is a $(2m-2i)$-code in $\Alt(m,q)$. By Proposition~\ref{prop-bound}, we must have $|\cA_1| \le q^{m(i-\frac{m-1}{2})}$. This contradicts to $|\cA_1|=q^{m(i-\frac{m-3}{2})}$. Thus, $b_{2m-2i-2}^1>0$ and $\cA_1$ is a proper $(2m-2i-2)$-code.

2) Using similar arguments as in the proof of Proposition~\ref{prop-S}, we can show $\cA_2$ is a $(2m-2i-2)$-code in $\Alt(m,q)$. Assume $b_{2m-2i-2}^2=0$, then $\cA_2$ is a $(2m-2i)$-code in $\Alt(m,q)$. By Proposition~\ref{prop-bound}, we must have $|\cA_2| \le q^{(m-1)(i-\frac{m-2}{2})}$. This contradicts to $|\cA_2|=q^{m(i-\frac{m-3}{2})}$. Thus, $b_{2m-2i-2}^2>0$ and $\cA_2$ is a proper $(2m-2i-2)$ code.
\end{proof}

Now we are ready to prove the main result of this section.

\begin{theorem}\label{thm-maineven}
Let $q$ be an even prime power and $m$ be a positive integer satisfying
$$
\begin{cases}
  m \ge 3 & \mbox{if $q=2$}, \\
  m \ge 1 & \mbox{if $q\ge 4$}.
\end{cases}
$$
For nonnegative integer $i$ with $\frac{m-2}{2} \le i \le m-\ltm-1$, the BCH code $\codespe$ has minimum distance $\de_i=q^m-q^{m-1}-q^i-1$.
\end{theorem}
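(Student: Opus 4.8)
The plan is to run the same argument as in the proof of Theorem~\ref{thm-mainodd}, with the association scheme of symmetric bilinear forms replaced by that of alternating bilinear forms. By Proposition~\ref{prop-traceprop} the Bose distance of $\codespe$ is $\de_i=q^m-q^{m-1}-q^i-1$, so the BCH bound already gives that the minimum distance is at least $\de_i$; it therefore suffices to produce a single codeword of weight $\de_i$. By Proposition~\ref{prop-cosetunion}, $\codespe=\bigcup_{Q\in\cQ_j}\big((Q(x))_{x\in\Fqm^*}+\PRM_q(1,m)\big)$ with $j=1$ when $m$ is odd and $j=2$ when $m$ is even, and by Proposition~\ref{prop-Qweighteven} the weight enumerator of each coset is determined by the rank and type of $Q$. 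So everything reduces to showing that $\cQ_j$ contains a quadratic form of a suitable rank and type.

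To find such a form I would pass to the associated set of alternating bilinear forms $\cA_j=\{B_Q\mid Q\in\cQ_j\}$. By Proposition~\ref{prop-A}, $\cA_j$ is a proper $(2m-2i-2)$-code in $\Alt(m,q)$, hence $b_{2m-2i-2}^j>0$ for its inner distribution. Plugging this into part~2) of Proposition~\ref{prop-innerdis}, which states $d_{2k,0}^j+d_{2k+1,1}^j+d_{2k,2}^j=b_{2k}^j$, I conclude that $\cQ_j$ contains a quadratic form $Q$ with $(\rank(Q),\type(Q))$ equal to one of $(2m-2i-2,0)$, $(2m-2i-1,1)$, or $(2m-2i-2,2)$. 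In each case the relevant parameter is $r=m-i-1$, a positive integer because $i\le m-\ltm-1\le m-2$ (this uses $m\ge 3$; the finitely many remaining instances with $q\ge 4$ and $m\le 2$ are treated directly, where $\codespe$ is a punctured (generalized) Reed-Muller code of known minimum distance).

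It then remains to read off a weight. Setting $r=m-i-1$, so that $q^{m-r-1}=q^i$, and inspecting the three weight enumerators of Proposition~\ref{prop-Qweighteven}: if $Q$ has rank $2m-2i-1$ and type $1$, or rank $2m-2i-2$ and type $2$, then the coset $(Q(x))_{x\in\Fqm^*}+\PRM_q(1,m)$ contains, with positive multiplicity, a word of weight $q^m-q^{m-1}-q^i-1=\de_i$; and if $Q$ has rank $2m-2i-2$ and type $0$, it contains, with positive multiplicity, a word of weight $q^m-q^{m-1}-(q-1)q^i-1$, which is at most $\de_i$ (with equality exactly when $q=2$). In all cases this word lies in $\codespe$, so the BCH bound forces its weight to be at least $\de_i$, hence exactly $\de_i$. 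Therefore the minimum distance of $\codespe$ equals $\de_i=q^m-q^{m-1}-q^i-1$.

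I do not foresee a serious obstacle here, as each step is a direct appeal to a result already established. The one genuinely new subtlety, absent in the odd case, is that $Q\mapsto B_Q$ is many-to-one in even characteristic, so $b_{2m-2i-2}^j>0$ does not pin down the type of the witnessing $Q$; this is why all three admissible types must be examined, and why in the type-$0$ case one must invoke the BCH bound again to reconcile an a priori smaller weight with $\de_i$. Checking the positivity of the multiplicities involved and the trivial inequality $(q-1)q^i\ge q^i$ is routine.
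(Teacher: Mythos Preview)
Your proposal is correct and follows the same approach as the paper: invoke Propositions~\ref{prop-cosetunion}, \ref{prop-Qweighteven}, \ref{prop-A}, and \ref{prop-innerdis} to locate a quadratic form in $\cQ_j$ whose $(\text{rank},\text{type})$ is one of $(2m-2i-2,0)$, $(2m-2i-1,1)$, $(2m-2i-2,2)$, and then use the BCH bound to dispose of the type-$0$ possibility. The only cosmetic difference is that the paper phrases the type-$0$ step as a contradiction forcing $d^j_{2m-2i-2,0}=0$, whereas you fold it into ``weight $\le\de_i$ plus BCH bound $\ge\de_i$ gives equality''; both are fine, though your version would read more cleanly if you said explicitly that for $q>2$ the type-$0$ branch is vacuous by contradiction, while for $q=2$ it already yields a codeword of weight $\de_i$.
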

\begin{proof}
When $q=2$, we need $m \ge 3$ to ensure that $\codespe$ is well-defined. We only prove the theorem for the case with $m$ being odd, since the case with $m$ being even is analogous.

When $m$ is odd, by Proposition~\ref{prop-cosetunion}, the minimum distance of $\codespe$ is equal to the minimum weight of codewords belonging to $\cup_{Q \in \cQ_1}((Q(x))_{\Fqm^*}+\PRM_{q}(1,m))$.   For $0 \le i \le \lhm$, $\tau \in \{0,1,2\}$ and $0 \le 2i+\tau \le m$, let $(d_{2i+\tau,\tau}^1)$ and $(b_{2i}^1)$ be the inner distributions of $\cQ_1$ and $\cA_1$ respectively. Since $\cQ_1$ is an additive subset, $d_{2i+\tau,\tau}^1$ is the number of quadratic forms in $\cQ_1$ with rank $2i+\tau$ and type $\tau$. By 1) of Proposition~\ref{prop-A}, we have $b_{2m-2i-2}^1>0$. By 2) of Proposition~\ref{prop-innerdis}, we have $d_{2m-2i-2,0}^1+d_{2m-2i-1,1}^1+d_{2m-2i-2,2}^1=b_{2m-2i-2}^1>0$. If $d_{2m-2i-2,0}^1>0$, then we have a quadratic form $Q_0 \in \cQ_1$ with rank $2m-2i-2$ and type $0$. By Proposition~\ref{prop-Qweighteven}, $(Q_0(x))_{\Fqm^*}+\PRM_q(1,m)$ contains a codeword with weight $q^m-q^{m-1}-q^{i}(q-1)-1$. This is impossible since the minimum distance of $\codespe$ is greater than or equal to the Bose distance $\de_i=q^m-q^{m-1}-q^i-1$. Thus, we have $d_{2m-2i-2,0}^1=0$. Consequently, $d_{2m-2i-1,1}^1+d_{2m-2i-2,2}^1=b_{2m-2i-2}^1>0$. Then, we have either $d_{2m-2i-1,1}^1>0$ or $d_{2m-2i-2,2}^1>0$. If $d_{2m-2i-1,1}^1>0$, we have a quadratic form $Q_1 \in \cQ_1$ with rank $2m-2i-1$ and type $1$. By Proposition~\ref{prop-Qweighteven}, $(Q_1(x))_{x\in\Fqm^*}+\PRM_{q}(1,m)$ contains a codeword with weight $q^m-q^{m-1}-q^i-1$. If $d_{2m-2i-2,2}^1>0$, we have a quadratic form $Q_2 \in \cQ_1$ with rank $2m-2i-2$ and type $2$. By Proposition~\ref{prop-Qweighteven}, $(Q_2(x))_{x\in\Fqm^*}+\PRM_{q}(1,m)$ contains a codeword with weight $q^m-q^{m-1}-q^i-1$. Hence, $\codespe$ must contain a codeword of weight $q^m-q^{m-1}-q^i-1$, which is equal to its Bose distance by Proposition~\ref{prop-traceprop}. Therefore, $\codespe$ has minimum distance $\de_i=q^m-q^{m-1}-q^i-1$.
\end{proof}

\begin{remark}
With delicate techniques employed, the weight distributions of following binary codes have been determined by Kasami {\rm \cite{K1}}:
\begin{itemize}
\item for $m \ge 4$ being even, $\cC_{(2,m,\de_{\hmmt})}$ and $\cC_{(2,m,\de_{\hm})}$
\item for $m \ge 3$ being odd, $\cC_{(2,m,\de_{\hmmo})}$
\item for $m \ge 5$ being odd, $\cC_{(2,m,\de_{\hmpo})}$
\item for $m \ge 11$ being odd, $\cC_{(2,m,\de_{\frac{m+3}{2}})}$
\end{itemize}

Morevoer, the weight distributions of the even-like subcodes of $\cC_{(2,m,\de_{\hmmo})}$ and $\cC_{(2,m,\de_{\hmpo})}$ with $m$ being odd, as well as the weight distributions of the even-like subcodes of $\cC_{(2,m,\de_{\hmmt})}$ and $\cC_{(2,m,\de_{\hm})}$ with $m$ being even, have been obtained in {\rm\cite{Ding3}}.
\end{remark}

\begin{example}
For $q=2$ and $m=6$, consider binary narrow-sense primitive BCH codes $\cC_{(2,6,\de_2)}$ and $\cC_{(2,6,\de_3)}$. Numerical experiments show that $\cC_{(2,6,\de_2)}$ has minimum distance $\de_2=27$ and $\cC_{(2,6,\de_3)}$ has minimum distance $\de_3=23$, which are consistent with Theorem~\ref{thm-maineven}. Moreover, according to the codetable {\rm \cite{Gra}} (see also {\rm \cite[p. 258, Table A.31]{Ding15}}), $\cC_{(2,6,\de_2)}$ has the largest minimum distance among all known binary linear codes with length $63$ and dimension $10$. $\cC_{(2,6,\de_3)}$ has the largest minimum distance among all known binary linear codes with length $63$ and dimension $16$.
\end{example}

Combining Theorem~\ref{thm-mainodd} and Theorem~\ref{thm-maineven}, we obtain our main result Theorem~\ref{thm-main}.

\section{Concluding remarks}\label{sec6}

%In fact, we can show that $\cA_1$ is an $(i-\frac{m-3}{2})$-design and $\cA_2$ is an $(i-\frac{m-2}{2})$-design. In the sense of Sch10 paper.

The determination of the minimum distance of BCH codes is a very challenging problem. Even for the most well-studied narrow-sense primitive BCH codes, despite some experimental results, very few theoretical results on the minimum distance have been found in the last four decades. In this paper, we make progress on this problem by determining the minimum distance of $q$-ary narrow-sense BCH codes with length $q^m-1$ and Bose distance $q^m-q^{m-1}-q^i-1$, where $\hmmt \le i \le m-\ltm-1$. Note that when $q=2$, this result has been obtained in the classical works of Berlekamp \cite{Ber70} and Kasami \cite{K2}, in which many delicate techniques are involved. We generalize their results in a neat and unified way, based on the well-rounded theories of sets of quadratic forms, symmetric bilinear forms and alternating bilinear forms over finite fields. Our result depends heavily on the theory of sets of symmetric bilinear forms over finite fields with odd characteristic \cite{Sch15} and the theory of sets of alternating bilinear forms over finite fields \cite{DG}. For the theory of sets of symmetric bilinear forms over finite fields with even characteristic and the theory of sets of unrestricted bilinear forms over finite fields, please refer to \cite{Sch10} and \cite{Del78} respectively.

Recall that when $q$ is odd, the weight distribution of $\codespe$ has been obtained. The key point is that the inner distributions of $\cS_1$ and $\cS_2$ determine those of $\cQ_1$ and $\cQ_2$, by 1) of Proposition~\ref{prop-innerdis}. Thus, it is natural to ask if the weight distribution can also be obtained when $q$ is even. We remark that the inner distribution of $\cA_1$ and $\cA_2$ can be computed using the same method displayed in \cite{Sch10}. However, by 2) of Proposition~\ref{prop-innerdis}, the inner distributions of $\cA_1$ and $\cA_2$ do not determine those of $\cQ_1$ and $\cQ_2$. Thus, in order to compute the weight distribution when $q$ is even, more precise information of the inner distributions of $\cQ_1$ and $\cQ_2$ is required. A possible direction is to establish some general results on the inner distribution of certain subsets in $\Qua(n,q)$ when $q$ is even, from which the weight distribution of $\codespe$ follows. To the best of our knowledge, not much is known about the scheme $\Qua(n,q)$ \cite{FWMM,WWMM}. Hence, we leave this as an interesting future problem.

\appendix
\section{Number of zeroes to the summation of a quadratic form and a linear function}

Let $Q$ be a quadratic form on $\Fqm$ and $L$ be a homogenous linear function on $\Fqm$. In this appendix, we study the number of zeroes in $\Fqm$, to the equation $Q(x)+L(x)+c=0$, where $c \in \Fq$. These results are employed to complete the proof of Proposition~\ref{prop-Qweightodd} and Proposition~\ref{prop-Qweighteven}. The approach used here is similar to that of \cite{McE}, in which the author obtained the weight distribution of the second order Reed-Muller and the second order generalized Reed-Muller codes. We remark that the derivation in \cite{McE} is somewhat hasty, so that several mistakes were made in the paper. Therefore, we intend to present a detailed account below.

We first consider the case with $q$ being an odd prime power. We use $S$ to denote the set of all nonzero squares in $\Fq$ and $NS$ to denote the set of all nonsquares in $\Fq$.

The following lemma is a folklore. It can be derived, for instance, from the viewpoint of difference sets \cite[Chapter 7]{Jung} and partial difference sets \cite{Ma}.

\begin{lemma}\label{lem-intersize}
For $b \in \Fq$, let $W(b)$ be a subset of $\Fq \times \Fq$ defined by
$$
W(b)=\{(h+b,h) \mid h \in \Fq\}.
$$
For the following nine subsets partitioning $\Fq\times\Fq$
\begin{align*}
U_1&=\{0\} \times \{0\}, \quad &U_2&=\{0\} \times S, \quad & U_3&=\{0\} \times NS, \\
U_4&=S \times \{0\},   \quad   &U_5&=S \times S,  \quad &   U_6&=S \times NS, \\
U_7&=NS \times \{0\},  \quad   &U_8&=NS \times S, \quad &   U_9&=NS \times NS,
\end{align*}
define $u_i=|W(b) \cap U_i|$, $1 \le i \le 9$. The values of the sequence $(u_i)_{i=1}^9$ depend only on $b$ and the exact values are listed in Table~\ref{tab-u}.
\end{lemma}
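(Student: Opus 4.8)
The plan is to dispose of the degenerate case $b=0$ by inspection and then reduce the remaining case to a standard evaluation of quadratic character sums. When $b=0$, the set $W(0)=\{(h,h)\mid h\in\Fq\}$ is the diagonal, so $(h,h)\in U_i$ merely records whether $h$ is $0$, a nonzero square, or a nonsquare; one reads off $u_1=1$, $u_5=u_9=(q-1)/2$, and $u_i=0$ for the remaining six indices, which settles the $b=0$ row of Table~\ref{tab-u}.

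Now suppose $b\ne 0$. The equation $h+b=0$ has the unique solution $h=-b$, which is nonzero, so $W(b)$ meets $\{0\}\times\Fq$ only in $(0,-b)$; hence $u_1=0$ and, according as $-b\in S$ or $-b\in NS$, exactly one of $u_2,u_3$ equals $1$ and the other $0$. Symmetrically, $h=0$ forces $h+b=b\ne0$, so $W(b)$ meets $\Fq\times\{0\}$ only in $(b,0)$, and exactly one of $u_4,u_7$ equals $1$ according as $b\in S$ or $b\in NS$. This leaves the four ``generic'' counts $u_5,u_6,u_8,u_9$, each of which counts $h\in\Fq\setminus\{0,-b\}$ with $h$ and $h+b$ lying in prescribed square/nonsquare classes.

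For these four I would use the indicator identities $\tfrac12(1+\eta(x))$ and $\tfrac12(1-\eta(x))$ for the events $x\in S$ and $x\in NS$ with $x\in\Fq^*$, expand each $u_j$ as a sum over $h\in\Fq\setminus\{0,-b\}$ of a product of two such factors, and evaluate the resulting character sums. The only nontrivial one is $\sum_{h\in\Fq^*}\eta(h)\eta(h+b)$; writing $\eta(h)\eta(h+b)=\eta(1+bh^{-1})$ and letting $h$ run over $\Fq^*$ shows this equals $\sum_{v\ne1}\eta(v)=-1$. Together with $\sum_{h\in\Fq^*}\eta(h)=0$ and the bookkeeping $\sum_{h\ne0,-b}\eta(h)=-\eta(-b)$, $\sum_{h\ne0,-b}\eta(h+b)=-\eta(b)$, this expresses $u_5,u_6,u_8,u_9$ as explicit quantities in $q$, $\eta(b)$ and $\eta(-b)=\eta(-1)\eta(b)$; for instance $u_5=\tfrac14\bigl(q-3-\eta(b)-\eta(-b)\bigr)$. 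Substituting the two possibilities for $\eta(-1)$ (equivalently, $q\equiv1$ or $3\pmod4$) and the two quadratic classes of $b$ then recovers every entry of Table~\ref{tab-u}; this is, in essence, the classical computation of the cyclotomic numbers of order $2$.

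The content of this lemma is routine, and the only real work is organizational: keeping straight which coordinate of $W(b)$ lies in which of $S$, $NS$ or $\{0\}$, correctly excluding the two values $h=0$ and $h=-b$ from the summation ranges, and carrying the dependence on $\eta(-1)$ through all of the cases. I expect that bookkeeping, rather than any genuine difficulty, to be the main obstacle, which is presumably why this statement is quoted as folklore rather than reproved in detail.
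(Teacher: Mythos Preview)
Your argument is correct: the $b=0$ case is trivial, the boundary counts $u_1,\ldots,u_4,u_7$ for $b\ne0$ are immediate, and the four generic counts reduce via the indicators $\tfrac12(1\pm\eta(x))$ to the single nontrivial sum $\sum_{h\in\Fq^*}\eta(h)\eta(h+b)=-1$, which you evaluate correctly. Your formula $u_5=\tfrac14(q-3-\eta(b)-\eta(-b))$ and its companions do reproduce Table~\ref{tab-u} in all four $(\eta(b),\eta(-1))$ subcases.

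The paper itself does not give a proof: it simply declares the lemma folklore and points to the theory of difference sets and partial difference sets as one possible derivation. Your direct character-sum computation is therefore a genuine alternative route. The difference-set viewpoint packages these counts as the order-$2$ cyclotomic numbers and reads them off from known tables, which is conceptually cleaner if one already has that machinery available; your approach is more elementary and entirely self-contained, trading a citation for a short explicit calculation. Either is perfectly adequate for a result at this level.
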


\begin{table}
\begin{center}
\ra{1.5}\caption{}
\begin{tabular}{|c|c|}
\hline
$b$  &   $(u_1,u_2,u_3,u_4,u_5,u_6,u_7,u_8,u_9)$  \\ \hline
$b=0$ & $(1,0,0,0,\frac{q-1}{2},0,0,0,\frac{q-1}{2})$ \\ \hline
$b \in S, -b \in S$ & $(0,1,0,1,\frac{q-5}{4},\frac{q-1}{4},0,\frac{q-1}{4},\frac{q-1}{4})$ \\ \hline
$b \in S, -b \in NS$ & $(0,0,1,1,\frac{q-3}{4},\frac{q-3}{4},0,\frac{q+1}{4},\frac{q-3}{4})$ \\ \hline
$b \in NS, -b \in S$ & $(0,1,0,0,\frac{q-3}{4},\frac{q+1}{4},1,\frac{q-3}{4},\frac{q-3}{4})$ \\ \hline
$b \in NS, -b \in NS$ & $(0,0,1,0,\frac{q-1}{4},\frac{q-1}{4},1,\frac{q-1}{4},\frac{q-5}{4})$ \\ \hline
\end{tabular}
\end{center}
\label{tab-u}
\end{table}

Recall that for a function $f$ from $\Fqm$ to $\Fq$, we use $N(f)$ to denote the number of zeroes in $\Fqm$ to the equation $f(x)=0$. We have the following lemma.

\begin{lemma}\label{lem-quadodd}
Let $q$ be an odd prime power and $Q$ be a quadratic form of rank $r \ge 1$ and type $\tau$ on $\Fqm$. Let $\cL$ be the set of all homogenous linear functions on $\Fqm$ and $c \in \Fq$. Suppose $f$ ranges over $\{Q+L+c \mid L \in \cL\}$. Then the following holds.
\noindent
(1) Let $r$ be odd. If $c=0$, then
$$
N(f)=\left\{
\begin{aligned}
  &q^{m-1} & \mbox{$q^m-q^{r}+q^{r-1}$ times,} \\
  &q^{m-1}\pm\tau\eta(-1)^{\frac{r-1}{2}}q^{m-\frac{r+1}{2}} & \mbox{$\frac{(q-1)}{2}(q^{r-1}\pm\tau\eta(-1)^{\frac{r-1}{2}}q^{\frac{r-1}{2}})$ times.}
\end{aligned}\right.
$$
If $c \in S$, then
$$
N(f)=\left\{
\begin{aligned}
  &q^{m-1} & \mbox{$q^m-q^{r}+q^{r-1}+\tau\eta(-1)^{\frac{r-1}{2}}q^{\frac{r-1}{2}}$ times,} \\
  &q^{m-1}+\tau\eta(-1)^{\frac{r-1}{2}}q^{m-\frac{r+1}{2}} & \mbox{$\frac{q-1}{2}q^{r-1}-\tau\eta(-1)^{\frac{r-1}{2}}q^{\frac{r-1}{2}}$ times,} \\
  &q^{m-1}-\tau\eta(-1)^{\frac{r-1}{2}}q^{m-\frac{r+1}{2}} & \mbox{$\frac{q-1}{2}q^{r-1}$ times.}
\end{aligned}\right.
$$
If $c \in NS$, then
$$
N(f)=\left\{
\begin{aligned}
  &q^{m-1} & \mbox{$q^m-q^{r}+q^{r-1}-\tau\eta(-1)^{\frac{r-1}{2}}q^{\frac{r-1}{2}}$ times,} \\
  &q^{m-1}+\tau\eta(-1)^{\frac{r-1}{2}}q^{m-\frac{r+1}{2}} & \mbox{$\frac{q-1}{2}q^{r-1}$ times,} \\
  &q^{m-1}-\tau\eta(-1)^{\frac{r-1}{2}}q^{m-\frac{r+1}{2}} & \mbox{$\frac{q-1}{2}q^{r-1}+\tau\eta(-1)^{\frac{r-1}{2}}q^{\frac{r-1}{2}}$ times.}
\end{aligned}\right.
$$
\noindent
(2) Let $r$ be even. If $c=0$, then
$$
N(f)=\left\{
\begin{aligned}
  &q^{m-1} & \mbox{$q^m-q^{r}$ times,} \\
  &q^{m-1}+\tau\eta(-1)^{\frac{r}{2}}q^{m-\frac{r+2}{2}}(q-1) &\mbox{$q^{r-1}+\tau\eta(-1)^{\frac{r}{2}}q^{\frac{r-2}{2}}(q-1)$ times,}\\
  &q^{m-1}-\tau\eta(-1)^{\frac{r}{2}}q^{m-\frac{r+2}{2}} & \mbox{$(q-1)(q^{r-1}-\tau\eta(-1)^{\frac{r}{2}}q^{\frac{r-2}{2}})$ times.}
\end{aligned}\right.
$$
If $c\ne0$, then
$$
N(f)=\left\{
\begin{aligned}
  &q^{m-1} & \mbox{$q^m-q^{r}$ times,} \\
  &q^{m-1}+\tau\eta(-1)^{\frac{r}{2}}q^{m-\frac{r+2}{2}}(q-1) &\mbox{$q^{r-1}-\tau\eta(-1)^{\frac{r}{2}}q^{\frac{r-2}{2}}$ times,}\\
  &q^{m-1}-\tau\eta(-1)^{\frac{r}{2}}q^{m-\frac{r+2}{2}} & \mbox{$(q-1)q^{r-1}+\tau\eta(-1)^{\frac{r}{2}}q^{\frac{r-2}{2}}$ times.}
\end{aligned}\right.
$$
\end{lemma}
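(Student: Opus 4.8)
The plan is to reduce $Q$ to a diagonal canonical form, split the family $\{Q+L+c \mid L \in \cL\}$ according to how the linear term meets $\Rad B_Q$, complete the square in the remaining case, and then assemble the frequencies by combining Proposition~\ref{prop-numsoluodd} with the incidence numbers recorded in Lemma~\ref{lem-intersize}.

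First, since both $N(\cdot)$ and the set $\{Q+L+c \mid L \in \cL\}$ are unchanged under an $\Fq$-linear change of coordinates on $\Fqm \cong \Fq^m$ (such a change merely permutes $\cL$ and composes each $f$ with a bijection), Proposition~\ref{prop-equivalence} lets me assume $Q(x)=\sum_{j=1}^r a_jx_j^2$ with $\eta(a_1\cdots a_r)=\tau$, so that $\Rad B_Q=\{x:x_1=\cdots=x_r=0\}$. Writing $L(x)=\sum_{i=1}^m b_ix_i$, I would first dispose of the $q^m-q^r$ functions $L$ with $(b_{r+1},\dots,b_m)\neq 0$: fixing every coordinate except one free coordinate carrying a nonzero $b_i$, the equation $Q(x)+L(x)+c=0$ is linear and nondegenerate in that last coordinate, so $N(f)=q^{m-1}$ for all of them.

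For the remaining $q^r$ functions (those with $(b_{r+1},\dots,b_m)=0$), completing the square turns $Q(x)+L(x)+c=0$ into $\sum_{j=1}^r a_jx_j'^2=h-c$ with $x_j'=x_j+b_j/(2a_j)$ and $h:=\sum_{j=1}^r b_j^2/(4a_j)$, while $x_{r+1},\dots,x_m$ remain free; hence $N(f)=q^{m-r}N_r(h-c)$, where $N_r$ is the count from Proposition~\ref{prop-numsoluodd} for a rank-$r$, type-$\tau$ form on $\Fq^r$. Expanding gives $N(f)=q^{m-1}+\tau\eta(-1)^{(r-1)/2}\eta(h-c)q^{m-(r+1)/2}$ when $r$ is odd and $N(f)=q^{m-1}+\tau\eta(-1)^{r/2}\ups(h-c)q^{m-(r+2)/2}$ when $r$ is even. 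The key observation is that $b\mapsto h$ is itself the quadratic form $\sum_j(4a_j)^{-1}b_j^2$ on $\Fq^r$, of rank $r$, and its type is $\eta\big(4^{-r}(a_1\cdots a_r)^{-1}\big)=\eta(a_1\cdots a_r)^{-1}=\tau$ since $4$ is a square and $\tau^2=1$; hence, as $b$ runs over $\Fq^r$, the scalar $h$ takes each value $v\in\Fq$ exactly $N_r(v)$ times.

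Combining these, the number of $L$ producing a prescribed value of $N(f)$ equals $\sum_v N_r(v)$ over the $v$ lying in one prescribed class among $\{0\},S,NS$ (which fixes $N_r(v)$ through $\eta(v)$, resp.\ through $\ups(v)$ when $r$ is even) and with $v-c$ lying in one prescribed such class (which fixes the value of $N(f)$). When $r$ is odd the cardinalities $|\{v:v\in X,\ v-c\in Y\}|$ are exactly the numbers $u_i=|W(c)\cap U_i|$ of Lemma~\ref{lem-intersize}, so I would read off the three cases $c=0$, $c\in S$, $c\in NS$ from the matching rows of Table~\ref{tab-u}, weight each $u_i$ by the corresponding value of $N_r$, and add back the $q^m-q^r$ functions from the split above (all of which contribute $N(f)=q^{m-1}$). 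When $r$ is even only the two classes $\{0\}$ and $\Fq^*$ are relevant, and the enumeration is elementary. I expect this last accounting to be the only delicate part: correctly aligning the nine (resp.\ two) incidence numbers with each weight value across all choices of $c$, and tracking the signs carried by $\tau$ and the powers of $\eta(-1)$ — although, reassuringly, the raw dependence on $\eta(-1)$ cancels in the end. Everything else is the standard normalization of a quadratic form plus two invocations of Proposition~\ref{prop-numsoluodd}.
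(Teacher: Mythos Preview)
Your proposal is correct and follows essentially the same route as the paper: reduce to the diagonal form via Proposition~\ref{prop-equivalence}, split off the $q^m-q^r$ linear forms that are nontrivial on $\Rad B_Q$, complete the square to reach $\sum_j a_jy_j^2 = h-c$ with $h=\sum_j b_j^2/(4a_j)$, observe that $b\mapsto h$ is itself a rank-$r$ type-$\tau$ form so Proposition~\ref{prop-numsoluodd} applies twice, and then for odd $r$ read off the frequencies from the intersection numbers of Lemma~\ref{lem-intersize} (the paper uses the set $\{(h-c,h)\}$ while you use $W(c)=\{(v,v-c)\}$, which is the same enumeration with the two coordinates swapped). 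Your closing remark about the $\eta(-1)$-dependence is exactly the point that the two rows of Table~\ref{tab-u} for $c\in S$ (and likewise for $c\in NS$) collapse to a single formula in the end.
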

\begin{proof}
Since $Q$ has rank $r$ and type $\tau$, by 1) of Proposition~\ref{prop-equivalence}, $Q$ is equivalent to a canonical quadratic form $\sum_{j=1}^r d_jx_j^2$, with $d_j \in \Fq^*$, $1 \le j \le r$, and $\eta(\prod_{j=1}^rd_j)=\tau$. Thus, there exists a nonsingular linear transformation on $x=(x_1,x_2,\ldots,x_m) \in \Fqm$, converting $Q$ into $\sum_{j=1}^r d_jx_j^2$. Note that $\cL$ is invariant under any nonsingular linear transformation. Thus, this linear transformation converts the set $\{ Q+L+c \mid L \in \cL \}$ into
\begin{equation}
\{\sum_{j=1}^rd_jx_j^2+\sum_{j=1}^m b_jx_j+c \mid b_1,b_2,\ldots,b_m \in \Fq \}. \label{mark-altsetodd}
\end{equation}
Consequently, it suffices to compute $N(f)$, when $f$ ranges over the set (\ref{mark-altsetodd}).

Define a function $f:=f_{b_1,b_2,\ldots,b_m,c}$ from $\Fqm$ to $\Fq$ as
$$
f(x)=\sum_{j=1}^rd_jx_j^2+\sum_{j=1}^m b_jx_j+c, \forall x=(x_1,x_2,\ldots,x_m) \in \Fqm.
$$
When $(b_1,b_2,\ldots,b_m)$ ranges over $\Fqm$, $f$ ranges over the set (\ref{mark-altsetodd}).

If one of $b_{r+1},b_{r+2},\ldots,$ $b_m$ is nonzero, then $N(f)=q^{m-1}$. Note that there are $q^{m}-q^{r}$ choices of $(b_1,b_2,\ldots,b_m)$ such that one of $b_{r+1},b_{r+2},\ldots,b_m$ is nonzero. Then we have
\begin{equation}\label{eqn-oddnonzero}
N(f)=q^{m-1}, \quad \mbox{$q^m-q^r$ times.}
\end{equation}

If $b_{r+1}=b_{r+2}=\cdots=b_m=0$, then $f(x)=\sum_{j=1}^rd_jx_j^2+\sum_{j=1}^r b_jx_j+c$. In the equation $f(x)=0$, we replace $x_j$ with $y_j-\frac{b_j}{2d_j}$, where $1 \le j \le r$. Consequently, we have $f(x)=0$ is equivalent to
\begin{equation}\label{eqn-keyeqnodd}
\sum_{j=1}^rd_jy_j^2=\sum_{j=1}^{r}\frac{b_j^2}{4d_j}-c.
\end{equation}
Hence, in order to compute $N(f)$, it suffices to compute the number of zeroes to the above equation. The general idea is as follows. Note that in the right hand side of (\ref{eqn-keyeqnodd}), $\sum_{j=1}^{r}\frac{b_j^2}{4d_j}$ can be viewed as a quadratic form on $\Fq^r$ with respect to $b_1,b_2,\ldots,b_r$, with rank $r$ and type $\eta(\prod_{j=1}^r \frac{1}{4d_j})=\tau$. When $(b_1,b_2,\ldots,b_r)$ ranges over $\Fq^{r}$, employing Proposition~\ref{prop-numsoluodd}, we can obtain the values and their frequencies in the right hand side of (\ref{eqn-keyeqnodd}). Note that $\sum_{j=1}^rd_jy_j^2$ is a quadratic form of rank $r$ and type $\tau$ over $\Fqm$, employing Proposition~\ref{prop-numsoluodd} again, we can obtain the number of zeroes to (\ref{eqn-keyeqnodd}). Below, Proposition~\ref{prop-numsoluodd} will be frequently used without further reference. We are going to split our discussion into two cases where $r$ is odd or even.

(1) Let $r$ be odd. In this case, there are $q^{r-1}+\tau\eta(-1)^{\frac{r-1}{2}}\eta(h)q^{\frac{r-1}{2}}$ tuples of $(b_1,b_2,\ldots,b_r)$ such that $\sum_{j=1}^r\frac{b_j^2}{4d_j}=h \in \Fq$. By (\ref{eqn-keyeqnodd}), for $h \in \Fq$, we have
\begin{align*}
N(f)=q^{m-1}+\tau\eta(-1)^{\frac{r-1}{2}}&\eta(h-c)q^{m-\frac{r+1}{2}}, \\
  &\mbox{if $r$ is odd, $q^{r-1}+\tau\eta(-1)^{\frac{r-1}{2}}\eta(h)q^{\frac{r-1}{2}}$ times.}
\end{align*}
Therefore, it suffices to calculate the size of intersections between $\{(h-c,h) \mid h \in \Fq\}$ and $U_i$, $1 \le i \le 9$, defined in Lemma~\ref{lem-intersize}. Combining (\ref{eqn-oddnonzero}) and Lemma~\ref{lem-intersize}, a simple computation completes the $r$ odd case.

(2) Let $r$ be even. In this case, there are $q^{r-1}+\tau\eta(-1)^{\frac{r}{2}}\ups(h)q^{\frac{r-2}{2}}$ tuples of $(b_1,b_2,\ldots,b_r)$ such that $\sum_{j=1}^r\frac{b_j^2}{4d_j}=h \in \Fq$. By (\ref{eqn-keyeqnodd}), for $h \in \Fq$, we have
\begin{align*}
N(f)=
  q^{m-1}+\tau\eta(-1)^{\frac{r}{2}}\ups(h-c)&q^{m-\frac{r+2}{2}} \\
  &\mbox{if $r$ is even, $q^{r-1}+\tau\eta(-1)^{\frac{r}{2}}\ups(h)q^{\frac{r-2}{2}}$ times.}
\end{align*}
Therefore, it suffices to calculate the size of the intersections between $\{(h-c,h) \mid h \in \Fq\}$ and the following sets:
$$
\{0\} \times \{0\}, \quad \{0\} \times \Fq^*, \quad \Fq^* \times \{0\}, \quad \Fq^* \times \Fq^*.
$$
A simple computation, together with (\ref{eqn-oddnonzero}), completes the $r$ even case.
\end{proof}

As a direct consequence of Lemma~\ref{lem-quadodd}, we have the following lemma, which is used in the proof of Proposition~\ref{prop-Qweightodd}.

\begin{lemma}\label{lem-quadoddset}
Let $q$ be an odd prime power and $Q$ be a quadratic form of rank $r \ge 1$ and type $\tau$ on $\Fqm$. Let $\cL$ be the set of all homogenous linear functions on $\Fqm$.
\noindent
(1) Let $r$ be odd. When $f$ ranges over $\{Q+L \mid L \in \cL\}$, we have
$$
N(f)=\left\{
\begin{aligned}
  &q^{m-1} & \mbox{$q^m-q^{r}+q^{r-1}$ times,} \\
  &q^{m-1}\pm\tau\eta(-1)^{\frac{r-1}{2}}q^{m-\frac{r+1}{2}} & \mbox{$\frac{(q-1)}{2}(q^{r-1}\pm\tau\eta(-1)^{\frac{r-1}{2}}q^{\frac{r-1}{2}})$ times.}
\end{aligned}\right.
$$
When $f$ ranges over $\{Q+L+c \mid L \in \cL, c \in \Fq^*\}$, we have
$$
N(f)=\left\{
\begin{aligned}
  &q^{m-1} & \mbox{$(q-1)(q^m-q^{r}+q^{r-1})$ times,} \\
  &q^{m-1}\pm\tau\eta(-1)^{\frac{r-1}{2}}q^{m-\frac{r+1}{2}} & \mbox{$\frac{q-1}{2}((q-1)q^{r-1}\mp\tau\eta(-1)^{\frac{r-1}{2}}q^{\frac{r-1}{2}})$ times.}
\end{aligned}\right.
$$
(2) Let $r$ be even. When $f$ ranges over $\{Q+L \mid L \in \cL\}$, we have
$$
N(f)=\left\{
\begin{aligned}
  &q^{m-1} & \mbox{$q^m-q^{r}$ times,} \\
  &q^{m-1}+\tau\eta(-1)^{\frac{r}{2}}q^{m-\frac{r+2}{2}}(q-1) &\mbox{$q^{r-1}+\tau\eta(-1)^{\frac{r}{2}}q^{\frac{r-2}{2}}(q-1)$ times,}\\
  &q^{m-1}-\tau\eta(-1)^{\frac{r}{2}}q^{m-\frac{r+2}{2}} & \mbox{$(q-1)(q^{r-1}-\tau\eta(-1)^{\frac{r}{2}}q^{\frac{r-2}{2}})$ times.}
\end{aligned}\right.
$$
When $f$ ranges over $\{Q+L+c \mid L \in \cL, c \in \Fq^*\}$, we have
$$
N(f)=\left\{
\begin{aligned}
  q^{m-1} \quad\quad\quad\quad\quad &\mbox{$(q-1)(q^m-q^{r})$ times,} \\
  q^{m-1}+\tau\eta(-1)^{\frac{r}{2}}&q^{m-\frac{r+2}{2}}(q-1) \\
  &\mbox{$(q-1)(q^{r-1}-\tau\eta(-1)^{\frac{r}{2}}q^{\frac{r-2}{2}})$ times,}\\
  q^{m-1}-\tau\eta(-1)^{\frac{r}{2}}&q^{m-\frac{r+2}{2}} \\
  &\mbox{$(q-1)((q-1)q^{r-1}+\tau\eta(-1)^{\frac{r}{2}}q^{\frac{r-2}{2}})$ times.}
\end{aligned}\right.
$$
\end{lemma}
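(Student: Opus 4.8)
The plan is to deduce the statement directly from Lemma~\ref{lem-quadodd} by aggregating the distribution of $N(f)$ over the additive constant $c$. First I would note that $\{Q+L \mid L \in \cL\}$ coincides with the $c=0$ slice of $\{Q+L+c \mid L \in \cL, c \in \Fq\}$, so the first displayed formula in each of (1) and (2) is nothing but the $c=0$ case of Lemma~\ref{lem-quadodd}, and requires no further work.

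For the second displayed formula I would split $\Fq^*$ into the set $S$ of nonzero squares and the set $NS$ of nonsquares, each of cardinality $\frac{q-1}{2}$. In the case $r$ even, Lemma~\ref{lem-quadodd} already gives a single distribution of $N(f)$ valid for all $c \ne 0$; summing this over the $q-1$ values of $c$ simply multiplies every frequency by $q-1$, which is exactly the claimed count. In the case $r$ odd, I would add the ``$c \in S$'' distribution and the ``$c \in NS$'' distribution of Lemma~\ref{lem-quadodd}, each taken with multiplicity $\frac{q-1}{2}$. The key bookkeeping point is that the value $q^{m-1}$ occurs $q^m-q^r+q^{r-1}+\tau\eta(-1)^{(r-1)/2}q^{(r-1)/2}$ times for $c \in S$ and $q^m-q^r+q^{r-1}-\tau\eta(-1)^{(r-1)/2}q^{(r-1)/2}$ times for $c \in NS$, so the $\tau$-terms cancel and the total is $(q-1)(q^m-q^r+q^{r-1})$; meanwhile the two values $q^{m-1}\pm\tau\eta(-1)^{(r-1)/2}q^{m-(r+1)/2}$ pick up the $\frac{q-1}{2}q^{r-1}$ and $\mp\tau\eta(-1)^{(r-1)/2}q^{(r-1)/2}$ contributions and combine to frequency $\frac{q-1}{2}((q-1)q^{r-1}\mp\tau\eta(-1)^{(r-1)/2}q^{(r-1)/2})$, as claimed.

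I do not expect any genuine obstacle: once Lemma~\ref{lem-quadodd} is in hand the entire argument is elementary counting. The only place demanding a little care is keeping track of the $\pm/\mp$ sign pattern in the $r$-odd case, and even that is forced as soon as one fixes the sign convention carried by $\eta(-1)^{(r-1)/2}$. No new appeal to Proposition~\ref{prop-numsoluodd} or to the data of Table~\ref{tab-u} is needed beyond what already entered the proof of Lemma~\ref{lem-quadodd}.
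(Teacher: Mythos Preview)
Your proposal is correct and matches the paper's approach exactly: the paper simply states that Lemma~\ref{lem-quadoddset} is a direct consequence of Lemma~\ref{lem-quadodd}, and your aggregation over $c=0$, $c\in S$, and $c\in NS$ is precisely how that deduction goes.
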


Next, we consider the case with $q$ being an even prime power. For $i \in\F_2$, define $T_i=\{ x\in \Fq \mid \Tr^q_2(x)=i \}$. We have the following lemma.

\begin{lemma}\label{lem-quadeven}
Let $q$ be an even prime power and $Q$ be a quadratic form on $\Fqm$. Let $\cL$ be the set of all homogenous linear functions on $\Fqm$ and $c \in \Fq$. Suppose $f$ ranges over $\{Q+L+c \mid L \in \cL\}$. Then the following holds.

\noindent
(1) Let $Q$ have rank $2r$ and type $0$. If $c=0$, then
$$
N(f)=\left\{
\begin{aligned}
  &q^{m-1} & \mbox{$q^m-q^{2r}$ times,} \\
  &q^{m-1}+q^{m-r-1}(q-1) & \mbox{$q^{2r-1}+q^{r-1}(q-1)$ times,}\\
  &q^{m-1}-q^{m-r-1} & \mbox{$(q-1)(q^{2r-1}-q^{r-1})$ times.}
\end{aligned}\right.
$$
If $c \ne 0$, then
$$
N(f)=\left\{
\begin{aligned}
  &q^{m-1} & \mbox{$q^m-q^{2r}$ times,} \\
  &q^{m-1}+q^{m-r-1}(q-1) & \mbox{$q^{2r-1}-q^{r-1}$ times,}\\
  &q^{m-1}-q^{m-r-1} & \mbox{$(q-1)q^{2r-1}+q^{r-1}$ times.}
\end{aligned}\right.
$$
(2) Let $Q$ have rank $2r+1$ and type $1$. If $c=0$, then
$$
N(f)=\left\{
\begin{aligned}
  &q^{m-1} & \mbox{$q^m-q^{2r+1}+q^{2r}$ times,} \\
  &q^{m-1}\pm q^{m-r-1} & \mbox{$(q-1)\frac{q^{2r}\pm q^r}{2}$ times.}
\end{aligned}\right.
$$
If $c \ne 0$, then
$$
N(f)=\left\{
\begin{aligned}
  &q^{m-1} & \mbox{$q^m-q^{2r+1}+q^{2r}$ times,} \\
  &q^{m-1}\pm q^{m-r-1} & \mbox{$\frac{q^{2r+1}-q^{2r}\mp q^r}{2}$ times.}
\end{aligned}\right.
$$
(3) Let $Q$ have rank $2r$ and type $2$. If $c=0$, then
$$
N(f)=\left\{
\begin{aligned}
  &q^{m-1} & \mbox{$q^m-q^{2r}$ times,} \\
  &q^{m-1}-q^{m-r-1}(q-1) & \mbox{$q^{2r-1}-q^{r-1}(q-1)$ times,}\\
  &q^{m-1}+q^{m-r-1} & \mbox{$(q-1)(q^{2r-1}+q^{r-1})$ times.}
\end{aligned}\right.
$$
If $c \ne 0$, then
$$
N(f)=\left\{
\begin{aligned}
  &q^{m-1} & \mbox{$q^m-q^{2r}$ times,} \\
  &q^{m-1}-q^{m-r-1}(q-1) & \mbox{$q^{2r-1}+q^{r-1}$ times,}\\
  &q^{m-1}+q^{m-r-1} & \mbox{$(q-1)q^{2r-1}-q^{r-1}$ times.}
\end{aligned}\right.
$$
\end{lemma}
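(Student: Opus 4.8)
The plan is to follow the route of the proof of Lemma~\ref{lem-quadodd}, adapted to even characteristic. By part 2) of Proposition~\ref{prop-equivalence} we may replace $Q$ by one of its three canonical forms; since the set $\cL$ of homogeneous linear functions is invariant under every nonsingular linear change of coordinates, it then suffices to compute $N(f)$ as $f$ ranges over $\{Q_{\mathrm{can}}+\sum_{j=1}^m b_jx_j+c \mid b_1,\dots,b_m\in\Fq\}$, and we treat the three types separately. In each type the coordinates not occurring in $Q_{\mathrm{can}}$ appear only linearly, so $N(f)=q^{m-1}$ as soon as one of their coefficients is nonzero; this already accounts for $q^m-q^{2r}$ of the tuples when $Q$ has type $0$ or $2$, and for $q^m-q^{2r+1}$ of them when $Q$ has type $1$.

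For the remaining tuples I would complete the square in the hyperbolic pairs $x_{2j-1}x_{2j}$ via the characteristic-two identity
\[
xy+ax+by=(x+b)(y+a)+ab .
\]
In the type-$0$ case this turns $f(x)=0$ into $\sum_{j=1}^r y_{2j-1}y_{2j}=H$ with $H=\sum_{j=1}^r b_{2j-1}b_{2j}+c$, so $N(f)=q^{m-1}+\ups(H)q^{m-r-1}$ by Proposition~\ref{prop-numsolueven}, and the same proposition gives the frequency $q^{2r-1}+\ups(h+c)q^{r-1}$ with which the value $H=h$ occurs; collecting these and splitting on whether $c=0$ or $c\neq0$ (which only shifts the argument of $\ups$) yields the table in (1). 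In the type-$2$ case one additionally completes the anisotropic binary form $x_{2r-1}^2+x_{2r-1}x_{2r}+\la x_{2r}^2$ against its linear perturbation $b_{2r-1}x_{2r-1}+b_{2r}x_{2r}$ --- legitimate since its associated bilinear form is nondegenerate --- and then argues exactly as in the type-$0$ case, now invoking the rank-$2$, type-$2$ lines of Proposition~\ref{prop-numsolueven}, to obtain the table in (3).

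The type-$1$ case, which I expect to be the main obstacle, is genuinely different: the square $x_{2r+1}^2$ carries a residual linear term $b_{2r+1}x_{2r+1}$ that cannot be removed by a translation, since the bilinear form associated with $y\mapsto y^2$ vanishes identically. Keeping $x_{2r+1}$ as a parameter, after completing the hyperbolic pairs the equation $f(x)=0$ becomes $\sum_{j=1}^r y_{2j-1}y_{2j}=H-(x_{2r+1}^2+b_{2r+1}x_{2r+1})$ with $H=\sum_{j=1}^r b_{2j-1}b_{2j}+c$, so that
\[
N(f)=q^{m-2r-1}\Big(q^{2r}+q^{r-1}\sum_{y\in\Fq}\ups\big(H-y^2-b_{2r+1}y\big)\Big)
\]
by Proposition~\ref{prop-numsolueven}. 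The inner sum is evaluated through $\sum_{y}\ups(g(y))=q\cdot\#\{y:g(y)=0\}-q$ together with the Frobenius/Artin--Schreier dichotomy for $y^2+b_{2r+1}y=H$: exactly one solution when $b_{2r+1}=0$, and either two or zero solutions according as $\Tr^q_2(H/b_{2r+1}^2)=0$ or $1$ when $b_{2r+1}\neq0$. Hence $N(f)=q^{m-1}$ whenever $b_{2r+1}=0$ --- which, somewhat unexpectedly, is the source of the extra summand $q^{2r}$ in the stated frequency of $N(f)=q^{m-1}$ --- and $N(f)=q^{m-1}\pm q^{m-r-1}$ when $b_{2r+1}\neq0$. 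The delicate remaining step is the bookkeeping: counting, over $b_{2r+1}\in\Fq^*$ and $(b_1,\dots,b_{2r})\in\Fq^{2r}$, those tuples with $\Tr^q_2(H/b_{2r+1}^2)=0$ while $H$ itself is realized with the $\ups$-weighted multiplicities above, and separating the subcases $c=0$ and $c\neq0$. This is the even-characteristic analogue of the computation carried out via Lemma~\ref{lem-intersize} in the odd case, and proceeds along the general lines of \cite{McE}.
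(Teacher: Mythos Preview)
Your proposal is correct and follows essentially the same route as the paper: reduce to canonical forms via Proposition~\ref{prop-equivalence}, dispose of the coordinates appearing only linearly, complete the square in the hyperbolic pairs, and invoke Proposition~\ref{prop-numsolueven} on both sides of the resulting equation. The only cosmetic difference is in the type-$1$ case, where the paper first rescales all variables by $b_{2r+1}$ to normalize the Artin--Schreier polynomial to $z^2+z$ and then splits on $\Tr^q_2(c')$ and the $T_0/T_1$-membership of $\sum b'_{2j-1}b'_{2j}$, whereas you evaluate $\sum_{y}\ups(H-y^2-b_{2r+1}y)$ directly without rescaling; both arguments reduce to the same trace condition $\Tr^q_2(H/b_{2r+1}^2)$.
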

\begin{proof}
By 2) of Proposition~\ref{prop-equivalence}, $Q$ is equivalent to one of the following canonical quadratic forms
$$
\begin{cases}
\sum_{j=1}^r x_{2j-1}x_{2j} & \mbox{if $Q$ has rank $2r$ and type $0$,} \\
\sum_{j=1}^r x_{2j-1}x_{2j}+x_{2r+1}^2 & \mbox{if $Q$ has rank $2r+1$ and type $1$,} \\
\sum_{j=1}^{r} x_{2j-1}x_{2j}+x_{2r-1}^2+\la x_{2r}^2, \Tr^q_2(\la)=1 & \mbox{if $Q$ has rank $2r$ and type $2$.} \\
\end{cases}
$$
Thus, there exists a nonsingular linear transformation on $x=(x_1,x_2,\ldots,x_m) \in \Fqm$, converting $Q$ into one of the above forms. Note that $\cL$ is invariant under any nonsingular linear transformation. Thus, this linear transformation converts the set $\{ Q+L+c \mid L \in \cL \}$ into one of the following

%\begin{equation}\label{mark-subcodeeven}
%\begin{cases}
%\{(\sum_{j=1}^r x_{2j-1}x_{2j}+\sum_{j=1}^m b_jx_j+c)_{x \in \Fqm^*} \mid b_1,b_2,\ldots,b_m,c \in \Fq \} & \mbox{if $Q$ has rank $2r$ and type $0$}  \\
%\{(\sum_{j=1}^r x_{2j-1}x_{2j}+x_{2r+1}^2+\sum_{j=1}^m b_jx_j+c)_{x \in \Fqm^*} \mid b_1,b_2,\ldots,b_m,c \in \Fq \} & \mbox{if $Q$ has rank $2r+1$ and type $1$} \\
%\{(\sum_{j=1}^{r} x_{2j-1}x_{2j}+x_{2r-1}^2+\la x_{2r}^2+\sum_{j=1}^m b_jx_j+c)_{x \in \Fqm^*} \mid b_1,b_2,\ldots,b_m,c \in \Fq \} & \mbox{if $Q$ has rank $2r$ and type $2$}
%\end{cases}
%\end{equation}

\begin{equation}\label{mark-altseteven}
\left\{
\begin{aligned}
\{\sum_{j=1}^r x_{2j-1}x_{2j}+\sum_{j=1}^m b_jx_j&+c \mid b_1,b_2,\ldots,b_m \in \Fq \} \\
&\mbox{if $Q$ has rank $2r$ and type $0$,}  \\
\{\sum_{j=1}^r x_{2j-1}x_{2j}+x_{2r+1}^2+&\sum_{j=1}^m b_jx_j+c \mid b_1,b_2,\ldots,b_m \in \Fq \} \\
&\mbox{if $Q$ has rank $2r+1$ and type $1$,} \\
\{\sum_{j=1}^{r} x_{2j-1}x_{2j}+x_{2r-1}^2+&\la x_{2r}^2+\sum_{j=1}^m b_jx_j+c \mid b_1,b_2,\ldots,b_m \in \Fq \} \\
&\mbox{if $Q$ has rank $2r$ and type $2$.}
\end{aligned}
\right.
\end{equation}
Consequently, it suffices to compute $N(f)$, when $f$ ranges over one of the set in (\ref{mark-altseteven}).

For $x=(x_1,x_2,\ldots,x_m) \in \Fqm$, define function $f^{i}:=f^{i}_{b_1,b_2,\ldots,b_m,c}$, $0 \le i \le 2$ as
\begin{equation*}
\begin{cases}
f^{0}(x)=\sum_{j=1}^r x_{2j-1}x_{2j}+\sum_{j=1}^m b_jx_j+c, \\
f^{1}(x)=\sum_{j=1}^r x_{2j-1}x_{2j}+x_{2r+1}^2+\sum_{j=1}^m b_jx_j+c, \\
f^{2}(x)=\sum_{j=1}^r x_{2j-1}x_{2j}+x_{2r-1}^2+\la x_{2r}^2+\sum_{j=1}^m b_jx_j+c.
\end{cases}
\end{equation*}
Below, we split our discussion into three cases. We remark that Proposition~\ref{prop-numsolueven} will be frequently used without further reference.

(1) When $Q$ has rank $2r$ and type $0$, consider the equation $f^0(x)=0$. If one of $b_{2r+1},b_{2r+2},\ldots,b_m$ is nonzero, then $N(f^0)=q^{m-1}$. Note that there are $q^{m}-q^{2r}$ choices of $(b_1,b_2,\ldots,b_m)$ such that one of $b_{2r+1},b_{2r+2},\ldots,b_m$ is nonzero. Then we have
\begin{equation}\label{eqn-evennonzero0}
N(f^0)=q^{m-1}, \quad \mbox{$q^m-q^{2r}$ times.}
\end{equation}

If $b_{2r+1}=b_{2r+2}=\cdots=b_m=0$, we have $f^0(x)=\sum_{j=1}^rx_{2j-1}x_{2j}+\sum_{j=1}^{2r} b_jx_j+c$. In the equation $f^0(x)=0$, we replace $x_{2j-1}$ with $y_{2j-1}+b_{2j}$ and $x_{2j}$ with $y_{2j}+b_{2j-1}$, where $1 \le j \le r$. Consequently, we have $f^0(x)=0$ is equivalent to
\begin{equation}\label{eqn-keyeqneven0}
\sum_{j=1}^ry_{2j-1}y_{2j}=\sum_{j=1}^{r}b_{2j-1}b_{2j}+c.
\end{equation}
There are $q^{2r-1}+\ups(h)q^{r-1}$ tuples of $(b_1,b_2,\ldots,b_r)$ such that $\sum_{j=1}^{r}b_{2j-1}b_{2j}=h$ for some $h \in \Fq$. By (\ref{eqn-keyeqneven0}), for $h \in \Fq$, we have
$$
N(f^0)=q^{m-1}+\ups(h+c)q^{m-r-1}, \quad \mbox{$q^{2r-1}+\ups(h)q^{r-1}$ times}.
$$
Therefore, it suffices to calculate the size of the intersections between $\{(h+c,h) \mid h \in \Fq\}$ and the following sets:
$$
\{0\} \times \{0\}, \quad \{0\} \times \Fq^*, \quad \Fq^* \times \{0\}, \quad \Fq^* \times \Fq^*.
$$
A simple computation, together with (\ref{eqn-evennonzero0}), completes the case $Q$ having rank $2r$ and type $0$.

(2) When $Q$ has rank $2r+1$ and type $1$, consider the equation $f^1(x)=0$. If one of $b_{2r+2},b_{2r+3},\ldots,b_m$ is nonzero, then $N(f^1)=q^{m-1}$. Note that there are $q^{m}-q^{2r+1}$ choices of $(b_1,b_2,\ldots,b_m)$ such that one of $b_{2r+2},b_{2r+3},\ldots,b_m$ is nonzero. Then we have
\begin{equation}\label{eqn-evennonzero1}
N(f^1)=q^{m-1}, \quad \mbox{$q^m-q^{2r+1}$ times.}
\end{equation}
If $b_{2r+2}=b_{2r+3}=\cdots=b_m=0$, we consider two cases where $b_{2r+1}=0$ and $b_{2r+1}\ne0$. When $b_{2r+1}=0$, $f^1(x)=\sum_{j=1}^r x_{2j-1}x_{2j}+x_{2r+1}^2+\sum_{j=1}^{2r}b_jx_j+c$. In the equation $f^1(x)=0$, we replace $x_{2j-1}$ with $y_{2j-1}+b_{2j}$ and $x_{2j}$ with $y_{2j}+b_{2j-1}$, where $1 \le j \le r$, and replace $x_{2r+1}$ with $y_{2r+1}$. Consequently, we have $f^1(x)=0$ is equivalent to
$$
\sum_{j=1}^ry_{2j-1}y_{2j}+y_{2r+1}^2=\sum_{j=1}^rb_{2j-1}b_{2j}+c.
$$
Hence,
\begin{equation}\label{eqn-evennonzero2}
N(f^1)=q^{m-1}, \quad \mbox{$q^{2r}$ times.}
\end{equation}
When $b_{2r+1} \ne 0$, $f^1(x)=\sum_{j=1}^r x_{2j-1}x_{2j}+x_{2r+1}^2+\sum_{j=1}^{2r+1}b_jx_j+c$. In the equation $f^1(x)=0$, we replace $x_{2j-1}$ with $y_{2j-1}+b_{2j}$ and $x_{2j}$ with $y_{2j}+b_{2j-1}$ where $1 \le j \le r$, and replace $x_{2r+1}$ with $y_{2r+1}$. Consequently, we have $f^1(x)=0$ is equivalent to
$$
\sum_{j=1}^ry_{2j-1}y_{2j}=y_{2r+1}^2+b_{2r+1}y_{2r+1}+c+\sum_{j=1}^rb_{2j-1}b_{2j}.
$$
Furthermore, replacing $y_j$ with $b_{2r+1}z_j$ for $1 \le j \le 2r+1$, $b_j$ with $b_{2r+1}b_j^{\pr}$ for $1 \le j \le 2r$ and $c$ with $b_{2r+1}^2c^{\pr}$ in the above equation, we have
\begin{equation}\label{eqn-keyeqneven1}
\sum_{j=1}^rz_{2j-1}z_{2j}=z_{2r+1}^2+z_{2r+1}+c^{\pr}+\sum_{j=1}^rb_{2j-1}^{\pr}b_{2j}^{\pr}.
\end{equation}
If $\Tr^q_2(c^{\pr})=0$, then the multiset $\{z_{2r+1}^2+z_{2r+1}+c^{\pr} \mid z_{2r+1} \in \Fq\}$ consists of each element of $T_0$ exactly twice. Hence, the multiset $\{z_{2r+1}^2+z_{2r+1}+c^{\pr}+\sum_{j=1}^rb_{2j-1}^{\pr}b_{2j}^{\pr} \mid z_{2r+1} \in \Fq\}$ contains two zero elements and $q-2$ nonzero elements if $\sum_{j=1}^{r}b_{2j-1}^{\pr}b_{2j}^{\pr} \in T_0$, and contains $q$ nonzero elements if $\sum_{j=1}^{r}b_{2j-1}^{\pr}b_{2j}^{\pr} \in T_1$.  Once $z_{2r+1}$ is fixed, we can regard the left hand size of (\ref{eqn-keyeqneven1}) as a quadratic form with rank $2r$ and type $0$ on an $(m-1)$-dimensional vector space over $\Fq$. Note that $f^1(x)=0$ is equivalent to (\ref{eqn-keyeqneven1}). We have $N(f^1)=2(q^{m-2}+q^{m-r-2}(q-1))+(q-2)(q^{m-2}-q^{m-r-2})=q^{m-1}+q^{m-r-1}$ if $\sum_{j=1}^{r}b_{2j-1}^{\pr}b_{2j}^{\pr} \in T_0$ and $N(f^1)=q(q^{m-2}-q^{m-r-2})=q^{m-1}-q^{m-r-1}$ if $\sum_{j=1}^{r}b_{2j-1}^{\pr}b_{2j}^{\pr} \in T_1$. There are $q^{2r-1}+q^{r-1}(q-1)+(\frac{q}{2}-1)(q^{2r-1}-q^{r-1})=\frac{q^{2r}+q^r}{2}$ choices of $(b_1^{\pr},b_2^{\pr},\ldots,b_{2r}^{\pr})$ such that $\sum_{j=1}^{r}b_{2j-1}^{\pr}b_{2j}^{\pr} \in T_0$ and $\frac{q}{2}(q^{2r-1}-q^{r-1})=\frac{q^{2r}-q^r}{2}$ choices of $(b_1^{\pr},b_2^{\pr},\ldots,b_{2r}^{\pr})$ such that $\sum_{j=1}^{r}b_{2j-1}^{\pr}b_{2j}^{\pr} \in T_1$.
Hence
\begin{equation}\label{eqn-trace0}
N(f^1)=\left\{
\begin{aligned}
  q^{m-1}+&q^{m-r-1} \\
  &\mbox{if $b_{2r+1} \ne 0$, $\Tr^q_2(c^{\pr})=0$, $\sum_{j=1}^{r}b_{2j-1}^{\pr}b_{2j}^{\pr} \in T_0$, $\frac{q^{2r}+q^r}{2}$ times,}\\
  q^{m-1}-&q^{m-r-1} \\
  &\mbox{if $b_{2r+1} \ne 0$, $\Tr^q_2(c^{\pr})=0$, $\sum_{j=1}^{r}b_{2j-1}^{\pr}b_{2j}^{\pr} \in T_1$, $\frac{q^{2r}-q^r}{2}$ times.}
\end{aligned}
\right.
\end{equation}

If $\Tr^q_2(c^{\pr})=1$, the multiset $\{z_{2r+1}^2+z_{2r+1}+c^{\pr} \mid z_{2r+1} \in \Fq \}$ consists of each element of $T_1$ exactly twice. Using a similar approach, we have
\begin{equation}\label{eqn-trace1}
N(f^1)=\left\{
\begin{aligned}
  q^{m-1}+&q^{m-r-1}  \\
     &\mbox{if $b_{2r+1} \ne 0$, $\Tr^q_2(c^{\pr})=1$, $\sum_{j=1}^{r}b_{2j-1}^{\pr}b_{2j}^{\pr} \in T_1$, $\frac{q^{2r}-q^{r}}{2}$ times,} \\
  q^{m-1}-&q^{m-r-1}  \\
     &\mbox{if $b_{2r+1} \ne 0$, $\Tr^q_2(c^{\pr})=1$, $\sum_{j=1}^{r}b_{2j-1}^{\pr}b_{2j}^{\pr} \in T_0$, $\frac{q^{2r}+q^{r}}{2}$ times.}
\end{aligned}
\right.
\end{equation}

Recall that $c^{\pr}=\frac{c}{b_{2r+1}^2}$, if $c=0$, when $b_{2r+1}$ ranges over $\Fq^*$, $c^{\pr}$ take the $0$ value $q-1$ times. Therefore, $\Tr^q_2(c^{\pr})$ takes the $0$ value $q-1$ times. Combining (\ref{eqn-evennonzero1}), (\ref{eqn-evennonzero2}) and (\ref{eqn-trace0}), we complete the $c=0$ case. Similarly, if $c \ne 0$, when $b_{2r+1}$ ranges over $\Fq^*$, $c^{\pr}$ ranges over $\Fq^*$ as well. Therefore, $\Tr^q_2(c^{\pr})$ takes the $0$ value $\frac{q}{2}-1$ times and the $1$ value $\frac{q}{2}$ times. Combining (\ref{eqn-evennonzero1}), (\ref{eqn-evennonzero2}), (\ref{eqn-trace0}) and (\ref{eqn-trace1}), we complete the $c \ne 0$ case.

(3) When $Q$ has rank $2r$ and type $2$, consider the equation $f^2(x)=0$. If one of $b_{2r+1},b_{2r+2},\ldots,b_m$ is nonzero, then $N(f^2)=q^{m-1}$. Note that there are $q^{m}-q^{2r}$ choices of $(b_1,b_2,\ldots,b_m)$ such that one of $b_{2r+1},b_{2r+2},\ldots,b_m$ is nonzero. Then we have
\begin{equation}\label{eqn-evennonzero3}
N(f^2)=q^{m-1}, \quad \mbox{$q^m-q^{2r}$ times.}
\end{equation}
If $b_{2r+1}=b_{2r+2}=\cdots=b_m=0$, we have $f^2(x)=\sum_{j=1}^rx_{2j-1}x_{2j}+x_{2r-1}^2+\la x_{2r}^2+\sum_{j=1}^{2r} b_jx_j+c$. In the equation $f^2(x)=0$, we replace $x_{2j-1}$ with $y_{2j-1}+b_{2j}$ and $x_{2j}$ with $y_{2j}+b_{2j-1}$, where $1 \le j \le r$. Consequently, we have $f^2(x)=0$ is equivalent to
\begin{equation*}%\label{eqn-keyeqneven2}
\sum_{j=1}^ry_{2j-1}y_{2j}+y_{2r-1}^2+\la y_{2r}^2=\sum_{j=1}^{r}b_{2j-1}b_{2j}+b_{2r}^2+\la b_{2r-1}^2+c.
\end{equation*}
There are $q^{2r-1}-\ups(h)q^{r-1}$ tuples of $(b_1,b_2,\ldots,b_{2r})$ such that $\sum_{j=1}^{r}b_{2j-1}b_{2j}+b_{2r}^2+\la b_{2r-1}^2=h$ for some $h \in \Fq$. When $c=0$, the multiset $\{\sum_{j=1}^{r}b_{2j-1}b_{2j}+b_{2r}^2+\la b_{2r-1}^2 \mid b_1,b_2
,\ldots,b_{2r} \in \Fq\}$ consists of $q^{2r-1}-q^{r-1}(q-1)$ zero elements and $(q-1)(q^{2r-1}+q^{r-1})$ nonzero elements. Hence,
$$
N(f^2)=\begin{cases}
q^{m-1}-q^{m-r-1}(q-1), & \mbox{if $c=0$, $q^{2r-1}-q^{r-1}(q-1)$ times,} \\
q^{m-1}+q^{m-r-1}, & \mbox{if $c=0$, $(q-1)(q^{2r-1}+q^{r-1})$ times.}
\end{cases}
$$
Together with (\ref{eqn-evennonzero3}), we complete the $c=0$ case.

When $c\ne0$, the multiset $\{\sum_{j=1}^{r}b_{2j-1}b_{2j}+b_{2r}^2+\la b_{2r-1}^2+c \mid b_1,b_2,\ldots,b_{2r} \in \Fq\}$ consists of $c$ for $q^{2r-1}-q^{r-1}(q-1)$ times and each element in $\Fq\sm \{c\}$ for $q^{2r-1}+q^{r-1}$ times. Thus, it contains $q^{2r-1}+q^{r-1}$ zero elements and $q^{2r-1}-q^{r-1}(q-1)+(q-2)(q^{2r-1}+q^{r-1})=(q-1)q^{2r-1}-q^{r-1}$ nonzero elements. Hence,
$$
N(f^2)=\left\{
\begin{aligned}
&q^{m-1}-q^{m-r-1}(q-1) \quad &\mbox{if $c\ne0$, $(q-1)(q^{2r-1}+q^{r-1})$ times,} \\
&q^{m-1}+q^{m-r-1} \quad &\mbox{if $c\ne0$, $(q-1)((q-1)q^{2r-1}-q^{r-1})$ times.}
\end{aligned}
\right.
$$
Together with (\ref{eqn-evennonzero3}), we complete the $c \ne 0$ case.
\end{proof}

As a direct consequence of Lemma~\ref{lem-quadeven}, we have the following lemma, which is used in the proof of Proposition~\ref{prop-Qweighteven}.

\begin{lemma}\label{lem-quadevenset}
Let $q$ be an even prime power and $Q$ be a quadratic form on $\Fqm$. Let $\cL$ be the set of all homogenous linear functions on $\Fqm$.

\noindent
(1) Let $Q$ have rank $2r$ and type $0$. When $f$ ranges over $\{Q+L \mid L \in \cL\}$, we have
$$
N(f)=\left\{
\begin{aligned}
  &q^{m-1} & \mbox{$q^m-q^{2r}$ times,} \\
  &q^{m-1}+q^{m-r-1}(q-1) & \mbox{$q^{2r-1}+q^{r-1}(q-1)$ times,}\\
  &q^{m-1}-q^{m-r-1} & \mbox{$(q-1)(q^{2r-1}-q^{r-1})$ times.}
\end{aligned}\right.
$$
When $f$ ranges over $\{Q+L+c \mid L \in \cL, c \in \Fq^*\}$, we have
$$
N(f)=\left\{
\begin{aligned}
  &q^{m-1} & \mbox{$(q-1)(q^m-q^{2r})$ times,} \\
  &q^{m-1}+q^{m-r-1}(q-1) & \mbox{$(q-1)(q^{2r-1}-q^{r-1})$ times,}\\
  &q^{m-1}-q^{m-r-1} & \mbox{$(q-1)((q-1)q^{2r-1}+q^{r-1})$ times.}
\end{aligned}\right.
$$
(2) Let $Q$ have rank $2r+1$ and type $1$. When $f$ ranges over $\{Q+L \mid L \in \cL\}$, we have
$$
N(f)=\left\{
\begin{aligned}
  &q^{m-1} & \mbox{$q^m-q^{2r+1}+q^{2r}$ times,} \\
  &q^{m-1}\pm q^{m-r-1} & \mbox{$(q-1)\frac{q^{2r}\pm q^r}{2}$ times.}
\end{aligned}\right.
$$
When $f$ ranges over $\{Q+L+c \mid L \in \cL, c \in \Fq^*\}$, we have
$$
N(f)=\left\{
\begin{aligned}
  &q^{m-1} & \mbox{$(q-1)(q^m-q^{2r+1}+q^{2r})$ times,} \\
  &q^{m-1}\pm q^{m-r-1} & \mbox{$(q-1)\frac{q^{2r+1}-q^{2r}\mp q^r}{2}$ times.}
\end{aligned}\right.
$$
(3) Let $Q$ have rank $2r$ and type $2$. When $f$ ranges over $\{Q+L \mid L \in \cL\}$, we have
$$
N(f)=\left\{
\begin{aligned}
  &q^{m-1} & \mbox{$q^m-q^{2r}$ times,} \\
  &q^{m-1}-q^{m-r-1}(q-1) & \mbox{$q^{2r-1}-q^{r-1}(q-1)$ times,}\\
  &q^{m-1}+q^{m-r-1} & \mbox{$(q-1)(q^{2r-1}+q^{r-1})$ times.}
\end{aligned}\right.
$$
When $f$ ranges over $\{Q+L+c \mid L \in \cL, c \in \Fq^*\}$, we have
$$
N(f)=\left\{
\begin{aligned}
  &q^{m-1} & \mbox{$(q-1)(q^m-q^{2r})$ times,} \\
  &q^{m-1}-q^{m-r-1}(q-1) & \mbox{$(q-1)(q^{2r-1}+q^{r-1})$ times,}\\
  &q^{m-1}+q^{m-r-1} & \mbox{$(q-1)((q-1)q^{2r-1}-q^{r-1})$ times.}
\end{aligned}\right.
$$
\end{lemma}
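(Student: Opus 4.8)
The plan is to obtain Lemma~\ref{lem-quadevenset} directly from Lemma~\ref{lem-quadeven} by splitting the parameter set $\{(L,c) \mid L \in \cL,\ c \in \Fq\}$ into the part with $c=0$ and the part with $c \in \Fq^*$, and handling each of the three rank/type cases $(2r,0)$, $(2r+1,1)$, $(2r,2)$ in turn.

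\textbf{Step 1 (the $c=0$ families).} The family $\{Q+L \mid L \in \cL\}$ is exactly the subfamily of $\{Q+L+c \mid L \in \cL\}$ with $c=0$, so the distribution of $N(f)$ over this family is, verbatim, the $c=0$ conclusion of the corresponding item of Lemma~\ref{lem-quadeven}. There is nothing to do here beyond quoting that lemma.

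\textbf{Step 2 (the $c \in \Fq^*$ families).} The key observation is that, in each item of Lemma~\ref{lem-quadeven}, the distribution of $N(f)$ over $L \in \cL$ for a fixed nonzero $c$ is stated uniformly in $c$: it does not depend on which element of $\Fq^*$ the constant is. Consequently, letting $(L,c)$ range over $\cL \times \Fq^*$ simply multiplies every frequency by $q-1$. One then checks that $(q-1)$ times each frequency in the $c \neq 0$ conclusion of Lemma~\ref{lem-quadeven} matches the frequency claimed in Lemma~\ref{lem-quadevenset}; for instance in part (1) the frequencies $q^{2r-1}-q^{r-1}$ and $(q-1)q^{2r-1}+q^{r-1}$ become $(q-1)(q^{2r-1}-q^{r-1})$ and $(q-1)((q-1)q^{2r-1}+q^{r-1})$, with parts (2) and (3) entirely analogous. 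A consistency check that the frequencies sum to $|\cL|=q^m$ in the $c=0$ case and to $(q-1)q^m$ in the $c \in \Fq^*$ case can be used to catch arithmetic slips.

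The only point that genuinely requires attention is the $c$-independence invoked in Step 2: one must confirm, by inspecting all three cases of Lemma~\ref{lem-quadeven}, that the distribution of $N(f)$ over $L$ really has no residual dependence on the particular nonzero value of $c$, which is what legitimizes passing to a union over $c \in \Fq^*$ by multiplying frequencies by $q-1$. Beyond that the argument is pure bookkeeping, so I anticipate no real obstacle.
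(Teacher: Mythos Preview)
Your proposal is correct and is exactly the intended derivation: the paper presents Lemma~\ref{lem-quadevenset} as a ``direct consequence of Lemma~\ref{lem-quadeven}'' without further proof, and your Step~1/Step~2 split (quoting the $c=0$ case verbatim, then multiplying the $c\neq 0$ frequencies by $q-1$ using the $c$-independence visible in Lemma~\ref{lem-quadeven}) is precisely how that consequence is obtained.
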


In fact, we already determined the number of solutions to the equation $Q(x)+L(x)+c=0$ in the proofs of Lemmas~\ref{lem-quadodd} and \ref{lem-quadeven}, where $Q$ is a canonical quadratic form. We summarize this result in the following proposition. Although this proposition is not used in the paper, we believe it is worthwhile to be documented.

\begin{proposition}
Suppose $Q$ is a quadratic form on $\Fqm$, $L(x)=\sum_{j=1}^mb_jx_j$ and $c \in \Fq$. Let $f=Q+L+c$. Then we have the following.

\noindent
(1) Suppose $q$ is an odd prime power and $Q$ is a canonical quadratic form with rank $r$ and type $\tau$, i.e., $Q(x)=\sum_{j=1}^rd_jx_j^2$ with $d_j \in \Fq^*$, $1 \le j \le r$, and $\tau=\eta(\prod_{j=1}^r d_j)$. Then, when $r$ is odd,
$$
N(f)=\left\{
\begin{aligned}
  &q^{m-1} \quad \mbox{if one of $b_i$, $r+1 \le i \le m$, nonzero}, \\
  &q^{m-1}+\tau\eta(-1)^{\frac{r-1}{2}}\eta(\sum_{j=1}^r\frac{b_j^2}{4d_j}-c)q^{m-\frac{r+1}{2}} \quad  \mbox{if $b_i=0$, $r+1 \le i \le m$.}
\end{aligned}\right.
$$
When $r$ is even,
$$
N(f)=\left\{
\begin{aligned}
  &q^{m-1} \quad \mbox{if one of $b_i$, $r+1 \le i \le m$, nonzero}, \\
  &q^{m-1}+\tau\eta(-1)^{\frac{r}{2}}\ups(\sum_{j=1}^r\frac{b_j^2}{4d_j}-c)q^{m-\frac{r+2}{2}} \quad  \mbox{if $b_i=0$, $r+1 \le i \le m$.}
\end{aligned}\right.
$$
\end{proposition}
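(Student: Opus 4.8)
The plan is to reduce the count to an equation in a few variables and then quote Proposition~\ref{prop-numsoluodd}; in fact the needed computation is already embedded in the proof of Lemma~\ref{lem-quadodd}, so the task is mostly a matter of clean repackaging. Fix $q$ odd, write $Q(x)=\sum_{j=1}^{r}d_jx_j^2$ with $\tau=\eta(\prod_{j=1}^{r}d_j)$, and split the argument according to whether the tail $(b_{r+1},\dots,b_m)$ is the zero vector.

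First I would handle the case where some $b_k\ne 0$ with $r+1\le k\le m$. Since $x_k$ does not occur in $Q$, the function $f$ is affine in $x_k$ with nonzero coefficient $b_k$, so for each of the $q^{m-1}$ choices of the remaining coordinates there is a unique value of $x_k$ with $f(x)=0$; hence $N(f)=q^{m-1}$. Next I would handle the complementary case $b_{r+1}=\dots=b_m=0$, where $f(x)=\sum_{j=1}^{r}(d_jx_j^2+b_jx_j)+c$ depends only on $x_1,\dots,x_r$, so that $N(f)=q^{m-r}N_0$, with $N_0$ the number of solutions in $\Fq^r$. Completing the square (legitimate because $q$ is odd), the substitution $y_j=x_j+\frac{b_j}{2d_j}$ converts the equation into $\sum_{j=1}^{r}d_jy_j^2=h$ with $h=\sum_{j=1}^{r}\frac{b_j^2}{4d_j}-c$. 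The left-hand side is a quadratic form on $\Fq^r$ of rank $r$ and type $\eta(\prod_{j=1}^{r}d_j)=\tau$, so Proposition~\ref{prop-numsoluodd} with $n=r$ gives $N_0=q^{r-1}+\tau\eta(-1)^{\frac{r-1}{2}}\eta(h)q^{\frac{r-1}{2}}$ for $r$ odd and $N_0=q^{r-1}+\tau\eta(-1)^{\frac{r}{2}}\ups(h)q^{\frac{r-2}{2}}$ for $r$ even. Multiplying through by $q^{m-r}$ and using $q^{m-r}q^{\frac{r-1}{2}}=q^{m-\frac{r+1}{2}}$ and $q^{m-r}q^{\frac{r-2}{2}}=q^{m-\frac{r+2}{2}}$ yields the two asserted formulas.

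I do not expect a genuine obstacle here: the only things requiring care are correctly tracking the factor $q^{m-r}$ when passing from $\Fq^r$ to the $m$-dimensional space $\Fqm$, and keeping the parities of $r$ separate. The companion statement for $q$ even is proved in exactly the same way, now starting from the canonical forms in 2) of Proposition~\ref{prop-equivalence} (the three subcases of rank $2r$ type $0$, rank $2r+1$ type $1$, and rank $2r$ type $2$), reducing by the same tail/completion-of-square argument, and invoking Proposition~\ref{prop-numsolueven} in place of Proposition~\ref{prop-numsoluodd}, precisely as in the proof of Lemma~\ref{lem-quadeven}.
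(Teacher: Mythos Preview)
Your proposal is correct and follows essentially the same route as the paper: the paper states that this proposition is simply a summary of what was already established inside the proof of Lemma~\ref{lem-quadodd}, namely the affine-coordinate argument for the tail variables and the completing-the-square reduction to $\sum_{j=1}^r d_jy_j^2=\sum_{j=1}^r\frac{b_j^2}{4d_j}-c$ followed by Proposition~\ref{prop-numsoluodd}. The only cosmetic difference is that you apply Proposition~\ref{prop-numsoluodd} with $n=r$ and then multiply by $q^{m-r}$, whereas the paper applies it directly with $n=m$; these are of course equivalent.
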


\noindent
(2) Suppose $q$ is an even prime power and $Q$ is a canonical quadratic form. Then, when $Q$ has rank $2r$ and type $0$, i.e., $Q(x)=\sum_{j=1}^r x_{2j-1}x_{2j}$,
$$
N(f)=\left\{
\begin{aligned}
  &q^{m-1} \quad \mbox{if one of $b_i$, $2r+1 \le i \le m$, nonzero}, \\
  &q^{m-1}+\ups(\sum_{j=1}^rb_{2j-1}b_{2j}+c)q^{m-r-1} \quad  \mbox{if $b_i=0$, $2r+1 \le i \le m$.}
\end{aligned}\right.
$$
When $Q$ has rank $2r+1$ and type $1$, i.e., $Q(x)=\sum_{j=1}^r x_{2j-1}x_{2j}+x_{2r+1}^2$,
$$
N(f)=\left\{
\begin{aligned}
  &q^{m-1} \quad  \mbox{if one of $b_i$, $2r+2 \le i \le m$, nonzero}, \\
  &  \quad \quad \quad \quad  \mbox{or $b_i=0$, $2r+1 \le i \le m$}, \\
  &q^{m-1}+q^{m-r-1} \quad  \mbox{if $b_{2r+1}\ne0$, $b_i=0$, $2r+2 \le i \le m$}, \\
  &   \quad \quad \quad \quad \quad \quad \quad \quad \quad \mbox{and $\Tr^q_2(\frac{c+\sum_{j=1}^rb_{2j-1}b_{2j}}{b_{2r+1}^2})=0$}, \\
  &q^{m-1}-q^{m-r-1} \quad  \mbox{if $b_{2r+1}\ne0$, $b_i=0$, $2r+2 \le i \le m$}, \\
  &   \quad \quad \quad \quad \quad \quad \quad \quad \quad \mbox{and $\Tr^q_2(\frac{c+\sum_{j=1}^rb_{2j-1}b_{2j}}{b_{2r+1}^2})=1$}.
\end{aligned}\right.
$$
When $Q$ has rank $2r$ and type $2$, i.e., $Q(x)=\sum_{j=1}^{r} x_{2j-1}x_{2j}+x_{2r-1}^2+\la x_{2r}^2$, for some $\Tr^q_2(\la)=1$,
$$
N(f)=\left\{
\begin{aligned}
  &q^{m-1} \quad \mbox{if one of $b_i$, $2r+1 \le i \le m$, nonzero}, \\
  &q^{m-1}-\ups(\sum_{j=1}^rb_{2j-1}b_{2j}+b_{2r}^2+\la b_{2r-1}^2+c)q^{m-r-1} \\
  & \quad \quad \quad \quad \quad \quad \quad \quad \quad \quad \quad \quad \mbox{if $b_i=0$, $2r+1 \le i \le m$}.
\end{aligned}\right.
$$

\section*{Acknowledgments}

The author wishes to thank the Associated Editor, Dr. Ron Roth and the anonymous reviewers for their valuable comments, which greatly improve the presentation of this paper. He is indebted to Jonathan Jedwab, for many helpful suggestions on the first version of the manuscript. The author would like to express his deepest gratitude to Prof. Gennian Ge, Capital Normal University, and Dr. Tao Feng, Zhejiang University, for their guidance and encouragement.

\end{document}